\documentclass{article}
\usepackage{graphicx}
\usepackage{float}
\usepackage{nameref}
\usepackage{amsthm}
\usepackage{amsmath}
\usepackage{amsfonts}
\usepackage{xassoccnt}
\usepackage{hyperref}
\usepackage{cleveref}
\usepackage{mathrsfs}
\usepackage{stmaryrd}
\usepackage{amssymb}
\usepackage[T1]{fontenc}
\usepackage{pxfonts}
\DeclareMathAlphabet{\mathpzc}{OMS}{cmsy}{m}{n}
\usepackage{tikz}
\usepackage{tikz-cd}
\usetikzlibrary{automata, arrows.meta, positioning}
\usepackage[english]{babel}
\usepackage{lineno}

\newtheorem{theorem}{Theorem}[section]
\crefname{theorem}{theorem}{theorems}
\newtheorem{claim}{Claim}[section]
\crefname{claim}{claim}{claims}
\newtheorem{proposition}{Proposition}[section]
\crefname{proposition}{proposition}{propositions}
\newtheorem{lemma}{Lemma}[section]
\crefname{lemma}{lemma}{lemmas}
\newtheorem{corollary}{Corollary}[section]
\crefname{corollary}{Corollary}{corollaries}
\newtheorem{conjecture}{Conjecture}[section]
\crefname{conjecture}{Conjecture}{conjectures}
\newtheorem{remark}{Remark}[section]
\crefname{remark}{remark}{remarks}
\newtheorem{example}{Example}[section]
\crefname{example}{example}{examples}
\newtheorem{definition}{Definition}[section]
\crefname{definition}{definition}{definitions}

\DeclareCoupledCountersGroup{theorems}
\DeclareCoupledCounters[name=theorems]{theorem,claim,proposition,lemma,remark,example,corollary,definition,conjecture}

\title{The complexity of smooth words over binary alphabets}
\author{Julien Cassaigne\footnote{CNRS, Aix Marseille Univ, I2M, 163 Avenue de Luminy, Case 907, 13288 Marseille Cedex 9, France} ~~~~~~~~~~~~ Raphaël Henry\footnote{Aix Marseille Univ, CNRS, I2M, 3 place Victor Hugo, Case 19, 13331 Marseille Cedex 3, France}}
\date{}

\begin{document}

\maketitle

\pagenumbering{arabic}
\setcounter{page}{1}

\begin{abstract}
	Smooth sequences over an alphabet of positive integers $\{a,b\}$ are sequences that are infinitely derivable, the emblematic example being the Oldenburger-Kolakoski sequence over $\{1,2\}$. The main way to study their language is to consider a finite version of smooth sequences called smooth words. In this paper we prove that the smooth words are exactly the factors of smooth sequences, and we make progress towards the conjecture of Sing that the complexity of smooth words over $\{a,b\}$ grows like $\Theta\left(n^{\log(a+b)/\log((a+b)/2)}\right)$: we prove the lower bound over any alphabet, we prove the upper bound over even alphabets and we improve the known upper bound over odd alphabets.
\end{abstract}

\newcommand{\kola}{
    \begin{gathered}
    \tikzpicture[every node/.style={anchor=south west}]
        \node[minimum width=1cm,minimum height=0.4cm] at (0,0) {$\kappa = \underbrace{22}_2 \underbrace{11}_2 \underbrace{2}_1 \underbrace{1}_1 \underbrace{22}_2 \underbrace{1}_1 \underbrace{22}_2 \underbrace{11}_2 \underbrace{2}_1 \underbrace{11}_2 \underbrace{22}_2 ...$};
        \node[minimum width=1cm,minimum height=0.4cm] at (10.55,0.05) {$... = \kappa$};
    \endtikzpicture
    \end{gathered}
}

\newcommand{\kolabis}{
    \begin{gathered}
    \tikzpicture[every node/.style={anchor=south west}]
        \node[minimum width=1cm,minimum height=0.4cm] at (0,0) {$\kappa_{3,1} = \underbrace{333}_3 \underbrace{111}_3 \underbrace{333}_3 \underbrace{1}_1 \underbrace{3}_1 \underbrace{1}_1 \underbrace{333}_3 \underbrace{111}_3 \underbrace{333}_3 \underbrace{1}_1 \underbrace{333}_3 ...$};
        \node[minimum width=1cm,minimum height=0.4cm] at (10.87,-0.05) {$... = \kappa_{3,1}$};
    \endtikzpicture
    \end{gathered}
}

\section*{Introduction}

The emblematic Oldenburger-Kolakoski sequence is a fixed point of the \textit{run-length encoding} over the alphabet $\{1,2\}$:
$$\kola$$
Initially studied by Oldenburger \cite{Oldenburger}, it was later popularized by Kolakoski \cite{Kolakoski} and is now registered as sequence A000002 in the OEIS \cite{OEIS}. Early on, the community of combinatorics on words asked the following questions: Does $\kappa$ belong to a known class of sequences? Is it eventually periodic? (uniformly) recurrent? Does it have factor frequencies? What is its factor complexity? Kolakoski's original question had already been answered in \cite{Oldenburger}: $\kappa$ is not eventually periodic. However, since the foundational work of Dekking in \cite{Dekking79} and \cite{Dekking81}, the other questions remain unsolved.

One of the main challenges is to describe the factors of $\kappa$. To do so, Dekking introduced in \cite{Dekking81} \textit{smooth words} (also called $C^\infty$-words), that is, a set of finite words over $\{1,2\}$ closed under an operation called \textit{(finite) derivative} that simulates locally the run-length encoding. In this paper we denote the set of smooth words by $\mathcal{C}_f^\infty$. It easily follows that all factors of $\kappa$ are smooth words, and Dekking conjectured the converse. With this in mind, the properties of smooth words have been thoroughly investigated. Notably, Carpi showed in \cite{Carpi} that smooth words are cube-free, which implies that $\kappa$ is cube-free. Also, Dekking conjectured in \cite{Dekking81} that the complexity of smooth words grows like $\Theta\left(n^{\log(3)/\log(3/2)}\right)$ and was able to show that it is bounded by two polynomials. Weakley was the first to study the bispecial smooth words, in \cite{Weakley} he described how they can be used to prove the conjecture on the complexity and in \cite{HW} he and Huang obtained other polynomial bounds of the complexity by using letter frequencies.

Facing the difficulties raised by $\kappa$, researchers started to investigate similar sequences in the hope of finding better answers. The first idea is to change the alphabet: if $a$ and $b$ are two different integers, let $\kappa_{a,b}$ denote the infinite fixed point of the run-length encoding starting with $a$ over the binary alphabet $\{a,b\}$. The Oldenburger-Kolakoski sequence is then denoted by $\kappa_{2,1}$ and we easily observe that $\kappa_{1,2} = 1 \kappa_{2,1}$, so the next word to bring attention is
$$\kolabis$$
Dekking had already noticed in \cite{Dekking79} that $\kappa_{3,1}$ is morphic, and in \cite{BS} the authors extensively studied its properties. In particular it is linearly recurrent, it has linear complexity and it has algebraic letter frequencies that are different from $1/2$ (the frequency of $3$'s is approximately 0.6028). Interestingly, $\kappa_{3,1}$ is not the only word with these properties: in the overview \cite{Sing10}, Sing showed that all sequences $\kappa_{a,b}$ are generated by a primitive substitution when $a+b$ is even.

We shall consider a family of infinite words originally introduced over $\{1,2\}$ by Dekking in Section 4 of \cite{Dekking01}, which contains the sequences $\kappa_{a,b}$ and continues to borrow the vocabulary of derivable functions. Over any binary alphabet $\mathpzc{A} = \{a,b\}$, if the run-length encoding is \textit{derivative}, then a sequence in $\mathpzc{A}^\mathbb{N}$ is \textit{smooth} if all its derivatives are in $\mathpzc{A}^\mathbb{N}$. We write $\mathcal{C}^\infty$ the set of smooth sequences, and we define smooth words and their set $\mathcal{C}_f^\infty$ over any binary alphabet as we did over $\{1,2\}$. We then observe a dichotomy between smooth sequences:
\begin{itemize}
	\item Over alphabets $\mathpzc{A} = \{a,b\}$ where $a+b$ is odd, it seems that every smooth sequence contains all smooth words. In particular, they all have the same factor complexity as $\mathcal{C}_f^\infty$.
	\item Over alphabets $\mathpzc{A} = \{a,b\}$ where $a+b$ is even, it seems that every smooth sequence has linear factor complexity and does not contain all smooth words.
\end{itemize}
However, smooth words are still very relevant in the second case of the dichotomy because 
the factors of smooth sequences are always smooth. Therefore the attempt to describe smooth sequences by studying smooth words naturally developed, and researchers started to generalize the properties of smooth words from $\{1,2\}$ to every binary alphabet. For instance Huang extensively studied repetitions in $\mathcal{C}_f^\infty$, but for the purpose of this paper let us simply cite his unpublished work \cite{powerFree} where he generalized \cite{Carpi} by showing that smooth words are $(b+1)$-th-power-free over any alphabet $\{a,b\}$ where $a < b$, except over $\{1,3\}$ where they are $5$-th-power-free. Regarding their complexity, Sing generalized in \cite{Sing02} the conjecture of Dekking by stating that ${p_{\mathcal{C}_f^\infty}(n) = \Theta\left(n^{\log(a+b)/\log((a+b)/2)}\right)}$ over any alphabet $\{a,b\}$. In this direction, in \cite{Sing02} Sing generalized Dekking's polynomial bounds over any alphabet. Also, Huang claimed in \cite{Huang} to generalize the results of \cite{HW} to any alphabet, but we found a mistake in these results that we explain in the preliminaries of this paper.

In this paper we contribute to the study of smooth words.
\begin{itemize}
	\item In \Cref{Th1} we show that, over any binary alphabet, the smooth words are exactly the factors of smooth sequences. This yields the equality of complexities ${p_{\mathcal{C}^\infty}(n) = p_{\mathcal{C}_f^\infty}(n)}$ and further motivates the study of $\mathcal{C}_f^\infty$.
	\item In \Cref{Th2} we prove the conjectured lower bound of $p_{\mathcal{C}_f^\infty}(n)$ over any alphabet and the conjectured upper bound over \textit{even} alphabets (i.e., $a$ and $b$ are even).
	\item In \Cref{Th3} we give a new upper bound of $p_{\mathcal{C}_f^\infty}(n)$ over \textit{odd} alphabets (i.e., $a$ and $b$ are odd).
\end{itemize}
Lastly, let us discuss two blind spots this paper has regarding smooth sequences. The first one is that, with the above dichotomy, the approach we take does not say anything about the factor complexity of individual smooth sequences over alphabets $\{a,b\}$ where $a+b$ is even. For all we know, this question has been unexplored beyond the fixed points $\kappa_{a,b}$. The second one is that we do not tackle the important question of factor frequencies in smooth sequences, which generalizes the famous conjecture of Keane \cite{Keane} that $\kappa$ has equal letter frequencies. Let us cite two contributions to this question: in \cite{BJP} the authors computed the frequency of letters in the lexicographic minimal and maximal smooth words; and in \cite{JMPR} the authors proved the existence of frequencies and characterized the letter frequencies of smooth sequences over $\{1,3\}$.

\subsubsection*{Outline of the paper}

In \Cref{prelim} we provide the basic definitions of our paper and we define smooth sequences and smooth words over binary alphabets. Then we explain the previous results on the factor complexity of smooth words, and we take the time to explain the mistake in the results of \cite{Huang}. Finally we formulate this results of this paper.
\newline
In \Cref{factors} we prove \Cref{Th1}.
\newline
In \Cref{bispecials} we describe the bispecial smooth words.
\newline
In \Cref{bounds} we continue \Cref{bispecials} to prove \Cref{Th2,Th3}.

\section{Preliminaries}\label{prelim}

\subsection{Combinatorics on words}\label{CoW}

\subsubsection{Words basics}

\begin{definition}
	A binary alphabet is a set $\mathpzc{A} = \{a,b\}$ where $a$ and $b$ are integers such that $1 \leq a < b$. If $a$ and $b$ are both even (resp. both odd), we say that $\mathpzc{A}$ is \textup{even} (resp. \textup{odd}). Otherwise, i.e., if $a+b$ is odd, we say that $\mathpzc{A}$ is \textup{mixed}.
\end{definition}

Given a finite alphabet $\mathpzc{A}$, $\mathpzc{A}^n$ (resp. $\mathpzc{A}^*$) denotes the set of finite words of length $n$ (resp. the set of all finite words) over $\mathpzc{A}$, and $\lvert u \rvert$ denotes the length of the word $u \in \mathpzc{A}^*$. In particular $\varepsilon$ is the empty word and we define $\mathpzc{A}^+ := \mathpzc{A}^*\backslash\{\varepsilon\}$. If words $u,v,w,w' \in \mathpzc{A}^*$ are such that $v = wuw'$, we say that $u$ is a \textit{factor} of $v$ and we write $u \sqsubset v$. Moreover, if $w = \varepsilon$, we say that $u$ is a \textit{prefix} of $v$ and we write $u \sqsubset_p v$. Over a binary alphabet $\mathpzc{A} = \{a,b\}$, the \textit{complement} of a word $u \in \mathpzc{A}^*$ (resp. $x \in \mathpzc{A}^\mathbb{N}$) is the word $\overline{u}$ (resp. $\overline{x}$) where every letter $a$ is replaced with the letter $b$ and vice versa. If $u \in \mathpzc{A}^*$ and $c \in \mathpzc{A}$, we define $\lvert u \rvert_c := \left\lvert \left\{ i \in \left\llbracket 1,\lvert u \rvert \right\rrbracket ~\middle|~ u_i = c \right\} \right\rvert$.

Let $\mathpzc{A}^\mathbb{N}$ be the set of all (right-infinite) sequences over $\mathpzc{A}$, where $0 \in \mathbb{N}$. The factors of a sequence $x \in \mathpzc{A}^\mathbb{N}$ are the finite words $u$ of the form $x_{[i,j)} := x_i x_{i+1} ... x_{j-1}$ for $0 \leq i \leq j$, with the convention that $x_{[i,i)} = \varepsilon$, and we write $u \sqsubset x$. If $u$ is of the form $x_{[0,j)}$, we say that $u$ is a \textit{prefix} of $x$ and we write $u \sqsubset_p x$. The \textit{language} of a sequence $x \in \mathpzc{A}^\mathbb{N}$ is the set of its factors ${\mathscr{L}(x) := \{u \in \mathpzc{A}^* \mid u \sqsubset x\}}$. Similarly, the language of a set $X \subseteq \mathpzc{A}^\mathbb{N}$ is the set $\mathscr{L}(X) := \displaystyle \cup_{x \in X} \mathscr{L}(x)$.

A sequence $x \in \mathpzc{A}^\mathbb{N}$ is \textit{recurrent} if each of its factors occurs infinitely often in $x$.

\subsubsection{Factor complexity and bispecial words}\label{complexity}

\begin{definition}
	We define the following complexity functions $p : \mathbb{N} \rightarrow \mathbb{N}$.
	\begin{itemize}
		\item The complexity of a language $\mathscr{L} \subseteq \mathpzc{A}^*$ is
	${p_\mathscr{L}(n) := \left\lvert \mathscr{L} \cap \mathpzc{A}^n \right\rvert}$,
		\item The factor complexity of a sequence $x \in \mathpzc{A}^\mathbb{N}$ is $p_x(n) := p_{\mathscr{L}(x)}(n)$,
		\item The factor complexity of a set of sequences $X \subseteq \mathpzc{A}^\mathbb{N}$ is $p_X(n) := p_{\mathscr{L}(X)}(n)$.
	\end{itemize}
\end{definition}

We use the following notations for the asymptotics of a complexity function.

\begin{definition}
	Given two functions $p,f : \mathbb{N} \rightarrow \mathbb{N}$, we write
	\begin{itemize}
		\item $p(n) = \mathcal{O}(f(n))$ if there exists $C > 0$ and $N \geq 0$ such that ${p(n) \leq C f(n)}$ for all $n \geq N$,
		\item $p(n) = \Omega(f(n))$ if there exists $C > 0$ and $N \geq 0$ such that ${p(n) \geq C f(n)}$ for all $n \geq N$,
		\item ${p(n) = \Theta(f(n))}$ if ${p(n) = \mathcal{O}(f(n))}$ and ${p(n) = \Omega(f(n))}$.
	\end{itemize}
\end{definition}

Let us recall how to compute the complexity of a language over a binary alphabet from the bispecial factors (we refer to Section 3.3 of \cite{Cassaigne}).

First, a language $\mathscr{L} \subseteq \mathpzc{A}^*$ is \textit{factorial} if $u \sqsubset v \in \mathscr{L}$ implies $u \in \mathscr{L}$, and it is \textit{left-extendable} (resp. \textit{right-extendable}) if, for all $u \in \mathscr{L}$, there exists $a \in \mathpzc{A}$ such that $au \in \mathscr{L}$ (resp. $ua \in \mathscr{L}$).

\begin{definition}
	Let $\mathpzc{A} = \{a,b\}$ be a binary alphabet and let $\mathscr{L} \subseteq \mathpzc{A}^*$ be a factorial, left- and right-extendable language. A word $u \in \mathscr{L}$ is \textup{bispecial} if ${\{au,bu,ua,ub\} \subseteq \mathscr{L}}$. In that case, the set $\{aua,aub,bua,bub\} \cap \mathscr{L}$ has 2, 3 or 4 elements and its cardinal determines the \textup{type} of $u$: we say that $u$ is \textup{weak} if it has 2 elements, \textup{neutral} if it has 3 and \textup{strong} if it has 4.
\end{definition}

Let $BS(n)$ be the set of bispecial factors of $\mathscr{L}$ of length $n$. Let $s(n)$ be the first finite difference of the factor complexity, i.e., $s(n) = p_{\mathscr L}(n+1) - p_{\mathscr L}(n)$, and let $b(n)$ be the second finite difference, i.e., $b(n) = s(n+1) - s(n)$. The seminal result that allows to compute the complexity of the language $\mathscr{L}$ is the following.

\begin{theorem}[{\cite[Proposition 3.2]{Cassaigne}}]\label{pfrombsw}
	Let $\mathscr{L}$ be a factorial, left- and right-extendable language, and let $bs(n)$ (resp. $bw(n)$) denote the number of strong (resp. weak) bispecial words of $\mathscr{L}$. Then for all $n \geq 0$ we have
	\begin{equation*}
		b(n) = bs(n) - bw(n).
	\end{equation*}
\end{theorem}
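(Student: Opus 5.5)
The plan is to prove the identity by counting extensions. For $u \in \mathscr{L}$, let $d^-(u) := \lvert\{c \in \mathpzc{A} \mid cu \in \mathscr{L}\}\rvert$ and $d^+(u) := \lvert\{c \in \mathpzc{A} \mid uc \in \mathscr{L}\}\rvert$. By extendability both are at least $1$. For each $u$, also set
\[
m(u) := \lvert\{(c,c') \in \mathpzc{A}^2 \mid cuc' \in \mathscr{L}\}\rvert - d^-(u) - d^+(u) + 1 .
\]

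The first step is to express $s(n)$ using right extensions. Every word of length $n+1$ in $\mathscr{L}$ is uniquely of the form $uc$ with $u \in \mathscr{L} \cap \mathpzc{A}^n$ and $c \in \mathpzc{A}$. Hence
\[
s(n) = \sum_{u \in \mathscr{L}\cap\mathpzc{A}^n} \bigl(d^+(u)-1\bigr).
\]
Since the alphabet is binary, $d^+(u)-1 \in \{0,1\}$.

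The second step computes $s(n+1)$ by grouping each word $w$ of length $n+1$ as $w = cu$, where $u$ is its suffix of length $n$. Factoriality gives $u \in \mathscr{L}$, and left-extendability guarantees that each $u$ is reached from at least one $c$. This yields
\[
s(n+1) = \sum_{u \in \mathscr{L}\cap\mathpzc{A}^n} \sum_{c\,:\,cu\in\mathscr{L}} \bigl(d^+(cu)-1\bigr).
\]
The inner sum equals $\lvert\{(c,c') \mid cuc' \in \mathscr{L}\}\rvert - d^-(u)$. Subtracting the formula for $s(n)$ term by term gives $b(n) = \sum_{u} m(u)$, summed over $u \in \mathscr{L}\cap\mathpzc{A}^n$.

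The last step is a case analysis on $m(u)$, using extendability to say that every left extension $cu$ has some right extension $cuc'$, and symmetrically.
\begin{itemize}
\item If $d^+(u) = 1$, every left extension $cu$ has exactly one right extension, namely the unique letter $c'$ with $uc' \in \mathscr{L}$. So the number of pairs is $d^-(u)$ and $m(u) = 0$.
\item If $d^-(u) = 1$, the symmetric argument gives $m(u) = 0$.
\item If $d^-(u) = d^+(u) = 2$, then $u$ is bispecial and $m(u) = (\text{number of pairs}) - 3$. This equals $-1$, $0$ or $1$ according as $u$ is weak, neutral or strong.
\end{itemize}
In the bispecial case the number of pairs is at least $2$, because the two left extensions each extend to the right. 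Summing over $u$ then gives $b(n) = bs(n) - bw(n)$.

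I do not expect a genuine obstacle here. The only delicate point is the second step: the regrouping must be a bijection, and it needs both factoriality (so that suffixes of length $n$ lie in $\mathscr{L}$) and left-extendability. Right-extendability is also needed, so that $d^+ \geq 1$ and the terms $d^+-1$ are never negative.
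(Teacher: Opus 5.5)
Your proof is correct and is the standard argument behind the cited result. The paper does not reprove it but quotes Cassaigne, whose proof likewise writes $s(n)=\sum_{u}(d^+(u)-1)$ and $b(n)=\sum_{u} m(u)$ using the bilateral multiplicity $m(u)$ you define. It then observes that over a binary alphabet $m(u)$ equals $1$, $-1$ or $0$ according as $u$ is strong bispecial, weak bispecial, or neither.

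One small remark: the two summation identities use only factoriality, since a suffix with no left extension would simply contribute $0=0$. Extendability is therefore needed only in your final case analysis, to guarantee $d^\pm(u)\geq 1$ and that every one-sided extension extends on the other side.
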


\subsection{Smooth sequences and smooth words}

In \cite{Dekking81}, Dekking initiated the investigation of the factor complexity of the Oldenburger-Kolakoski sequence $\kappa$. His strategy was to introduce the computable set of \textit{smooth words}, which are expected to be exactly the factors of $\kappa$. In this subsection we introduce the smooth sequences as a generalization of the sequence $\kappa$ and the smooth words over any binary alphabet. We then detail how smooth words are related to smooth sequences and we state the conjectures and results about the complexity of smooth words.

\subsubsection{Smooth sequences}

Over a binary alphabet $\mathpzc{A} = \{a,b\}$, any sequence $x \in \mathpzc{A}^\mathbb{N}$ where the letters $a$ and $b$ occur infinitely often is canonically factorized as ${x_0}^{p_0} \overline{x_0}^{p_1} {x_0}^{p_2} \overline{x_0}^{p_3} ...$ where $x_0$ is the first letter of $x$ and $p_i \geq 1$ for all $i \geq 0$.

\begin{definition}
	Over a binary alphabet $\mathpzc{A}$, we say that a sequence $x \in \mathpzc{A}^\mathbb{N}$ is \textup{derivable} if its canonical factorization is ${x_0}^{p_0} \overline{x_0}^{p_1} {x_0}^{p_2} \overline{x_0}^{p_3}...$ with $p_i \in \mathpzc{A}$ for all $i \geq 0$. We write $\mathcal{C}^1$ the set of derivable sequences and we define the \textup{derivative} as the map
	\begin{align*}
		\mathcal{D}~:~~~~~~~~~~~~~~~~~~ \mathcal{C}^1 & \longrightarrow \mathpzc{A}^\mathbb{N} \\
    	{x_0}^{p_0} \overline{x_0}^{p_1} {x_0}^{p_2} ... & \longmapsto p_0 p_1 p_2...
	\end{align*}
\end{definition}

\begin{example}
	Over $\{1,2\}$, we have $(221)^{\omega} \in \mathcal{C}^1$ and $\mathcal{D}\left((221)^{\omega}\right) = (21)^{\omega} \in \mathcal{C}^1$ but $\mathcal{D}\left((21)^{\omega}\right) = 1^{\omega} \notin \mathcal{C}^1$.
\end{example}

\begin{definition}
	For $n \geq 2$, we define by induction the set of sequences that are derivable $n$ times
	\begin{equation*}
		\mathcal{C}^n := \left\{x \in \mathcal{C}^1 ~\middle |~ \mathcal{D}(x) \in \mathcal{C}^{n-1} \right\}.
	\end{equation*}
	We then define the set of \textup{smooth sequences}, that is the set of sequences that are infinitely derivable
	\begin{equation*}
		\mathcal{C}^\infty := \displaystyle \bigcap_{n \geq 1} \mathcal{C}^n.
	\end{equation*}
\end{definition}

\begin{example}\label{kab}
	As a primary example of smooth sequences, over any binary alphabet $\mathpzc{A} = \{a,b\}$, the derivative $\mathcal{D}$ has two fixed points: one that starts with $a$, denoted by $\kappa_{a,b}$; the other that starts with $b$, denoted by $\kappa_{b,a}$. Here are notable examples, among which is the Oldenburger-Kolakoski sequence $\kappa_{2,1}$:
	\begin{align*}
		\kappa_{2,1} = 221121221221121122121121221121121221221121221211211221221121... \\
		\kappa_{3,1} = 333111333131333111333133313331113331313331113331333111333133... \\
		\kappa_{2,4} = 224422224444224422442222444422224444224422224444224422224444... \\
		\kappa_{2,5} = 225522222555552255225522555552222255555222225555522552222255...
	\end{align*}
	Also, we notice that $\kappa_{1,b} = 1 \kappa_{b,1}$ for all $b \geq 2$.
\end{example}

Oldenburger originally proved in \cite[Theorem 2]{Oldenburger} that $\kappa_{2,1}$ is not eventually periodic. It was proven again by Üçoluk in \cite{Ucoluk} and by Dekking in \cite[Example 4]{Dekking79}, and the proof is easily adaptable to any smooth word.

\begin{proposition}\label{aperiodic}
	No smooth word is eventually periodic.
\end{proposition}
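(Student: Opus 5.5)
Here "smooth word" in the statement must mean a smooth \emph{sequence}: finite words cannot be eventually periodic in any meaningful sense, and at this point only smooth sequences have been defined. So we prove that no $x \in \mathcal{C}^\infty$ is eventually periodic, adapting Dekking's argument for $\kappa_{2,1}$. The idea is to show that derivation strictly shrinks an eventual period, so that an infinite descent gives a contradiction.

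\textbf{Step 1: derivation preserves eventual periodicity.} Let $x \in \mathcal{C}^\infty$ be eventually periodic with minimal eventual period $p \geq 1$. Both letters occur infinitely often in $x$, because $x$ is derivable and the derivative is an infinite sequence. So $x$ is eventually $(w)^\omega$ with $w$ primitive of length $p$, and $w$ contains both $a$ and $b$. In particular $p \geq 2$, since $p = 1$ would give a constant tail, which is not derivable.

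Choose a position $N$ inside the periodic part that starts a run. From $N$ on, the runs repeat with the period of $w$ read cyclically. Hence $\mathcal{D}(x)$ is eventually periodic, and its eventual period is the number $r$ of runs in one period of length $p$, or $2r$ if $r$ is odd. Here $r$ is counted on the circular word $w$, which is well defined since $w$ contains both letters.

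\textbf{Step 2: the period strictly decreases.} Each run has length at least $a \geq 1$. If $b \geq 2$, then $r = p$ forces all runs to have length $1$, which is impossible since run lengths lie in $\{a,b\}$ and not all can equal $1$ unless $a = 1$ and no $b$-run appears. The two cases:
\begin{itemize}
\item If $a \geq 2$, each run has length at least $2$, so $r \leq p/2$ and $2r \leq p$, with equality $2r = p$ only when $r$ is odd. To keep a strict decrease, we argue directly on the period $q$ of $\mathcal{D}(x)$: since $\mathcal{D}(x)$ has period $q$ and the letters of $\mathcal{D}(x)$ sum over a period to the length of a period of $x$, we get $q \cdot a \leq (\text{period of } x) \leq q \cdot b$. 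More precisely, the minimal period $p$ of $x$ is at most the sum of one period block of $\mathcal{D}(x)$, and that sum is at least $2q$ when $a \geq 2$.
\item If $a = 1$, iterate twice: in the period of $\mathcal{D}(x)$ there must be a $b$-run, otherwise $\mathcal{D}^2(x)$ is eventually $1^\omega$, which is not derivable. So the sum over a period exceeds the number of terms.
\end{itemize}
Cleanly stated, the claim is: if $\mathcal{D}(x)$ is eventually periodic with minimal period $q$, then $x$ has an eventual period $P$ equal to the sum of the letters in a period block of $\mathcal{D}(x)$, doubled if $q$ is odd and $a+b$ is odd. Conversely, the minimal period $p$ of $x$ satisfies $q \leq p$, with $q < p$ unless all runs have length $1$.

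\textbf{Step 3: infinite descent.} Define $p_n$ as the minimal eventual period of $\mathcal{D}^n(x)$. Then $(p_n)$ is a non-increasing sequence of positive integers with $p_{n+1} < p_n$ whenever $\mathcal{D}^n(x)$ has a run of length at least $2$ in its period. If $a \geq 2$ this holds at every step. If $a = 1$, it fails only when $\mathcal{D}^n(x)$ is eventually alternating; but then $\mathcal{D}^{n+1}(x)$ is eventually $1^\omega$, which is not derivable. So the period strictly decreases at every step (or at every other step), and the positive integers $p_n$ cannot do this forever: contradiction.

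\textbf{Main obstacle.} The delicate step is the bookkeeping between the period of $x$ and that of $\mathcal{D}(x)$. Runs are parity-sensitive: the first letter alternates, so a run pattern of odd length $r$ needs $2r$ to return to the same letters. One must still show strict decrease, $p_{n+1} < p_n$. To do this I would work with primitive circular words. The key fact is that the minimal period of $x$ equals the length of $w$, and $w$ decomposes into $r$ runs when read circularly, so $p_{n+1} \leq 2r$. When $r$ is odd, the complement symmetry $w = \overline{w'}$-type structure must be handled: I would show that $2r < p$ in that case by noting that $w$ then has the form $u\overline{u}$, forcing its own length to be $2\sum(\text{run lengths of } u)$, which exceeds $2r$ as soon as some run has length at least $2$.
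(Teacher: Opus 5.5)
Your reading of the statement as being about smooth \emph{sequences} is right. Your overall strategy is also right: an infinite descent on the minimal eventual periods $p_n$ of $\mathcal{D}^n(x)$ is the natural adaptation of the classical argument for $\kappa_{2,1}$, which the paper cites instead of giving a proof. However, the step you single out as delicate, $p_{n+1} < p_n$, is not correctly justified as written, because you put the parity issue in the wrong direction.

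$\mathcal{D}(x)$ records only run \emph{lengths}, not letters. Take $N$ a run boundary beyond the preperiod. Then the run boundaries $N = j_0 < j_1 < \cdots$ satisfy $j_{k+r} = j_k + p$. Hence the run lengths $\ell_k = j_{k+1}-j_k$ satisfy $\ell_{k+r} = \ell_k$, and $\mathcal{D}(x)$ has eventual period $r$, never $2r$. Moreover $r$ is automatically even: runs alternate letters, and the run at $N+p$ carries the same letter as the run at $N$. So your case ``$r$ odd'' is empty.

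The doubling belongs to the converse direction, recovering $x$ from $\mathcal{D}(x)$, where it occurs whenever $q$ is odd, whatever the parity of $a+b$. The descent never uses that direction.

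The arguments you offer for the phantom case do not hold up:
\begin{itemize}
\item In the $a \geq 2$ bullet, after the correct $r \leq p/2$, you fall back on $p \leq \Sigma$ and $\Sigma \geq 2q$, with $\Sigma$ the sum of a period block of $\mathcal{D}(x)$. These two inequalities give no comparison between $q$ and $p$.
\item The claim ``$2r = p$ only when $r$ is odd'' is false: equality means every run has length $2$.
\item Every circular word containing both letters, in particular one of the form $u\overline{u}$, has an even number of circular runs. So $w = u\overline{u}$ cannot be the structure behind an odd $r$.
\item The bound $p_{n+1} \leq 2r$ alone would not give a decrease when $a = 1$. For instance, run lengths $1,1,1,3$ give $p = 6 < 2r = 8$.
\end{itemize}

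The repair is two lines and removes every case distinction. From $\sum_{k<r}\ell_k = p$ and $\ell_k \geq 1$ you get $p_{n+1} \leq r_n \leq p_n$. Equality $r_n = p_n$ would force every run in the period of $\mathcal{D}^n(x)$ to have length $1$. But $\mathcal{D}^{n+1}(x)$ is derivable, so it contains the letter $b \geq 2$ infinitely often; that is, $\mathcal{D}^n(x)$ has a run of length $b$ in its period. Hence $p_{n+1} < p_n$ for every $n$ and every alphabet, with no split into $a = 1$ and $a \geq 2$. The strictly decreasing sequence of positive integers $p_n$ then gives the contradiction.
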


\subsubsection{Smooth words}

Over a binary alphabet $\mathpzc{A}$, any word $u \in \mathpzc{A}^*$ is canonically factorized as $a_1^{p_1} ... a_n^{p_n}$ where $a_i \in \mathpzc{A}$, $a_{i+1} = \overline{a_i}$ and $p_i \geq 1$, with the convention that $n = 0$ if $u = \varepsilon$.

\begin{definition}\label{Df}
	Over a binary alphabet $\mathpzc{A}$, we say that a word $u \in \mathpzc{A}^*$ is \textup{derivable} if its canonical factorization is $a_1^{p_1} ... a_n^{p_n}$ with $p_i \in \mathpzc{A}$ for all $i \in \llbracket 2,n-1 \rrbracket$ and ${p_1,p_n \in \llbracket 1,b \rrbracket}$. We write $\mathcal{C}^1_f$ the set of derivable words and we define the \textup{finite derivative} as the map
	\begin{align*}
		\mathcal{D}_f~:~~~~~~~~~ \mathcal{C}^1_f & \longrightarrow \mathpzc{A}^* \\
    	a_1^{p_1}...a_n^{p_n} & \longmapsto \textup{cut}(p_1)~p_2~...~p_{n-1}~\textup{cut}(p_n)
	\end{align*}
	where $\textup{cut}(p) = \begin{cases} \varepsilon \textrm{ if } 1 \leq p \leq a \\ b \textrm{ if } a < p \leq b \end{cases}$, $\mathcal{D}_f(\varepsilon) = \varepsilon$ and ${\mathcal{D}_f(a_1^{p_1}) = \textup{cut}(p_1)}$.
\end{definition}

\begin{example}
	Over $\{1,2\}$, $\mathcal{C}^1_f$ is the set of words that do not have $111$ or $222$ as factors, and we have $\mathcal{D}_f(\varepsilon) = \mathcal{D}_f(1) = \mathcal{D}_f(21) = \varepsilon$ and $\mathcal{D}_f(2211) = \mathcal{D}_f(122112) = 22$. Over $\{1,3\}$, we have $\mathcal{D}_f(331113) = 33$.
\end{example}

\begin{definition}
	For $n \geq 2$, we define by induction the set of words that are derivable $n$ times
	\begin{equation*}
		\mathcal{C}_f^n := \left\{u \in \mathcal{C}^1_f ~\middle |~ \mathcal{D}_f(u) \in \mathcal{C}_f^{n-1} \right\}.
	\end{equation*}
	We then define the set of \textup{smooth words}, that is the set of words that are infinitely derivable
	\begin{equation*}
		\mathcal{C}_f^\infty := \displaystyle \bigcap_{n \geq 1} \mathcal{C}_f^n.
	\end{equation*}
\end{definition}

We notice that the map $\mathcal{D}_f$ is \textit{contracting}, i.e., $\left\lvert \mathcal{D}_f(u) \right\rvert < \lvert u \rvert$ if ${u \in \mathcal{C}^1_f \backslash \{\varepsilon\}}$. This implies that a word is smooth if and only if it can be derived down to $\varepsilon$.

\begin{definition}
	The \textup{height} of a smooth word $u$ is the first integer $n \geq 0$ such that $\mathcal{D}_f^n(u) = \varepsilon$.
\end{definition}

\begin{example}
	Over $\{1,2\}$, ${\mathcal{D}_f^4(221121221) = \mathcal{D}_f^3(22112) = \mathcal{D}_f^2(22) = \mathcal{D}_f(2) = \varepsilon}$ so $221121221 \in \mathcal{C}_f^\infty$ with height $4$, but $\mathcal{D}_f(12121) = 111 \notin \mathcal{C}^1_f$ so $12121 \notin \mathcal{C}_f^\infty$.
\end{example}

One can easily show that the language $\mathcal{C}_f^\infty$ is factorial, left- and right-extendable, we refer to \cite[Proposition 2]{Weakley} for the proof over $\{1,2\}$.

\subsubsection{The language of smooth sequences}

From the definitions of smooth sequences and smooth words, it is clear that $\mathscr{L}(\mathcal{C}^\infty) \subseteq \mathcal{C}_f^\infty$, which yields $p_x(n) \leq p_{\mathcal{C}^\infty}(n) \leq p_{\mathcal{C}_f^\infty}(n)$ for all $x \in \mathcal{C}^\infty$. In \cite{Dekking81}, Dekking conjectured that $\mathscr{L}(\kappa_{2,1}) = \mathcal{C}_f^\infty$ over $\{1,2\}$, then in \cite{Sing10} Sing conjectured the same for all $\kappa_{a,b}$ over mixed alphabets, and we extend it even further.

\begin{conjecture}\label{mixedCf}
	Over mixed alphabets, every smooth sequence $x$ satisfies $\mathscr{L}(x) = \mathcal{C}_f^\infty$, and in particular $p_x(n) = p_{\mathcal{C}^\infty}(n) = p_{\mathcal{C}_f^\infty}(n)$.
\end{conjecture}

This is notably linked to the recurrence of smooth sequences.

\begin{remark}\label{invariant}
	If \Cref{mixedCf} holds, then the language of every smooth sequence over a mixed alphabet is complement-invariant and reversal-invariant, where the reversal of a word $u_1...u_n$ is $u_n...u_1$. In particular, complement-invariance or reversal-invariance implies recurrence.
\end{remark}

However, \Cref{mixedCf} cannot be extended to even or odd alphabets. Over even alphabets, in  \cite[Proposition 26]{BJP} the authors showed that the language of the minimal (resp. maximal) smooth sequence with respect to the lexicographic order is neither mirror-invariant nor complement-invariant, contrary to $\mathcal{C}_f^\infty$. Over odd alphabets, the most pathological example is the following.

\begin{example}\label{1bb1}
	Over the odd alphabet $\mathpzc{A} = \{1,b\}$, there exists a pair of smooth sequences that are the derivative of each other, and they do not contain the smooth word $bb$. For example, over $\{1,3\}$, they are
	\begin{align*}
		1113111313111311131311131311131113131113111313111313111311131311131... \\
		3131113131113111313111313111311131311131113131113131113111313111313...
	\end{align*}
\end{example}

This suggests that, over even or odd alphabets, studying the language $\mathcal{C}_f^\infty$ is not the way to describe the factor complexity of individual smooth sequences. Regardless, we will show in \Cref{Th1} that $\mathscr{L}(\mathcal{C}^\infty) = \mathcal{C}_f^\infty$, which yields $p_{\mathcal{C}^\infty}(n) = p_{\mathcal{C}_f^\infty}(n)$ and cements the importance of the language $\mathcal{C}_f^\infty$ over every binary alphabet.

\subsubsection{The complexity of smooth words}

With the previous observations and conjectures, the complexity $p_{\mathcal{C}_f^\infty}(n)$ has naturally been thoroughly investigated. The main conjecture was originally formulated over $\{1,2\}$ by Dekking \cite{Dekking81} and later generalized to any binary alphabet by Sing \cite{Sing02}.

\begin{conjecture}\label{CfTheta}
	Over $\mathpzc{A} = \{a,b\}$, we have $p_{\mathcal{C}_f^\infty}(n) = \Theta(n^\rho)$ where $\rho = \frac{\log(a+b)}{\log\left(\frac{a+b}{2}\right)}$.
\end{conjecture}

\begin{remark}\label{notMorphic}
	\Cref{mixedCf,CfTheta} together would imply that no smooth sequence over a mixed alphabet is morphic, since Devyatov's result in \cite{Devyatov} states that the factor complexity of a morphic sequence can only grow like $\Theta(n^{1+1/k})$ for some $k \geq 1$ or like $\mathcal{O}(n \log n)$. To be precise: we have $\rho = 1+1/k$ if and only if $a+b = 2^{k+1}$, which cannot happen when $\mathpzc{A} = \{a,b\}$ is mixed.
\end{remark}

A first attempt to study $p_{\mathcal{C}_f^\infty}(n)$ over $\{1,2\}$ was made by Dekking \cite{Dekking81} and later generalized to any binary alphabet by Sing \cite{Sing02}.

\begin{theorem}[{\cite[Propositions 1 and 3]{Sing02}}]\label{firstPoly}
	Over a binary alphabet $\mathpzc{A} = \{a,b\}$, there exist $N \geq 0$ and $C > 0$ such that, for all $n > N$, we have
	\begin{equation*}
		C n^\alpha \leq p_{\mathcal{C}_f^\infty}(n) \leq n^\beta
	\end{equation*}
	where $\alpha = \frac{\log(a+b)}{\log\left(\frac{a^2+b^2}{a+b}\right)}$ and $\beta = \frac{\log\left(2b^2\right)}{\log\left(\frac{2ab}{a+b}\right)}$.
\end{theorem}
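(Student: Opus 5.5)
The plan is to compare the length of a smooth word with the length of its derivative, and then iterate. Both exponents come from the two extreme ratios $|u|/|\mathcal{D}_f(u)|$.

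\textbf{Length ratios.} Let $u\in\mathcal{C}_f^\infty$ and write $v=\mathcal{D}_f(u)$. Each interior letter $p$ of $v$ is the length of a run of $u$, so $|u|$ equals the sum of the letters of $v$, up to an additive error of at most $2b$ from the two cut runs. Since $v$ is itself derivable, $v$ splits into consecutive runs $a^{p}b^{q}$ with $p,q\in\mathpzc{A}$, apart from boundary runs. Such a pair of runs has length $p+q$ and contributes $pa+qb$ letters to $u$. Over $p,q\in\{a,b\}$, the ratio $(pa+qb)/(p+q)$ is minimal for $(p,q)=(b,a)$, where it equals $\lambda:=\frac{2ab}{a+b}$. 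It is maximal for $(p,q)=(a,b)$, where it equals $\mu:=\frac{a^2+b^2}{a+b}$. Summing over the pairs should give a constant $c$, depending only on $\mathpzc{A}$, with
\begin{equation*}
	\lambda\,|\mathcal{D}_f(u)|-c \;\leq\; |u| \;\leq\; \mu\,|\mathcal{D}_f(u)|+c .
\end{equation*}

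\textbf{Upper bound.} A smooth word $u$ is determined by three pieces of data: its first letter, the lengths of its first and last runs (each in $\llbracket 1,b\rrbracket$), and $\mathcal{D}_f(u)$. This gives at most $2b^2$ words $u$ for each possible derivative. Let $P(n)$ be the number of smooth words of length at most $n$. The length estimate above then gives $P(n)\leq 2b^2\,P(n/\lambda+c')$. Iterating about $\log n/\log\lambda$ times, until the argument is bounded, yields $P(n)\leq C(2b^2)^{\log n/\log\lambda}=Cn^\beta$. Two points need care: the additive constants $c'$ must be absorbed by a slightly shifted recursion, and the constant $C$ must be removed to reach exactly $n^\beta$ for large $n$. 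I expect both to be handled by bounding $P(n+K)$ instead of $P(n)$, for a suitable $K$.

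\textbf{Lower bound.} Here I would count lifts together with shifts. Fix a smooth word $v$ of length $m$ and a choice of first letter. Reading $v$ as run lengths gives a smooth word $u$ with $\mathcal{D}_f(u)=v$, up to boundary effects, and $|u|\leq\mu m+c$. Factors of $u$ of a fixed length $n\approx\mu m$ that start inside the first run of a pair $a^pb^q$ are still distinct smooth words. The aim is to show that they are recoverable from $v$, the starting offset and the starting letter. Averaging the number of available offsets over the admissible pairs should give a multiplicative gain of $a+b$ per derivation step, compared with the length gain of $\mu$. This would give $p(\mu m+c)\geq K\,p(m)$ with $\log K/\log\mu=\alpha$, hence $p(n)\geq Cn^\alpha$ after iterating and using that $p$ is nondecreasing, since the language is factorial and right-extendable.

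\textbf{Main obstacle.} The difficult step is the injectivity in the lower bound. Distinct pairs (derivative, offset) must not produce the same factor of length $n$, and the per-step gain must really be $a+b$ rather than just $2$. I would first try to prove that a sufficiently long factor determines its derivative apart from boundary letters, so that collisions occur only within a bounded window. If that fails, a fallback is to work with two derivation steps at once and count the runs of $\mathcal{D}_f^2$ directly.
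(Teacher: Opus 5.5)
The paper gives no proof of this statement; it only quotes it from Sing's thesis. So I am judging your argument on its own. Your length estimate $\lambda|\mathcal{D}_f(u)|-c\le|u|\le\mu|\mathcal{D}_f(u)|+c$ is correct and explains both exponents. The upper bound is sound in substance: for $\|u\|\ge 2$, $u$ is determined by its first letter, its two extreme run lengths and $\mathcal{D}_f(u)$. Hence $P(n)\le 2b^2P((n+c)/\lambda)+\mathcal{O}(1)$, and $p(n)\le P(n)=\mathcal{O}(n^\beta)$.

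Passing to $P(n+K)$, with $K$ the fixed point of $n\mapsto(n+c)/\lambda$, only removes the additive drift in the recursion. The multiplicative constant coming from the base case remains. To get literally $p(n)\le n^\beta$ you need an extra input, e.g. $p(n)=o(n^\beta)$. One possible source is a quantitative form of the fact that the extremal ratio $\lambda$ cannot persist over two derivation steps: pairs $b^ba^a$ have a derivative with runs of length $1$.

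The genuine gap is in the lower bound. A minor point first: the lift $u$ of $v$ has length $\sum_i v_i$, which can be as small as $\lambda m-c$. So $|u|\le\mu m+c$ is the wrong inequality for your purpose. You must first right-extend $v$ to a smooth word $vw$ and lift $vw$, say as $\mathcal{P}_{f,\overline{e}}(vw)$ where $e$ is the chosen first letter. Only then do factors of length $n\approx\mu m$ starting in the run $e^{v_1}$ exist.

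The serious point is that the injectivity you set as your goal is false, not merely hard. A long factor starting at offset $j$ inside the run $e^{v_1}$ only records three things: $e$, the length $r=v_1-j$ of its first run, and the tail $t=v_2\cdots v_m$ (plus part of the continuation). So when $t$ is left special and $at$, $bt$ are continued in the same way, $(at,j)$ and $(bt,j+b-a)$ give the same factor. Your fallback lemma (a long factor determines its derivative up to boundary letters) is true but does not help, because the ambiguity sits exactly in that boundary letter. The count $\sum_v 2v_1=(a+b)\,p(m)$ overestimates the number of distinct factors by up to $2a\,(p(m)-p(m-1))$.

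The missing idea is to count the image directly. Parametrize factors of length $n\ge\mu m+c$, for a suitable constant $c$, by triples $(e,r,t)$ with $r\in\llbracket 1,b\rrbracket$ and $t\in\mathcal{C}_f^\infty\cap\mathpzc{A}^{m-1}$. This is injective, since all runs of $t$ are complete inside the factor. A triple is realized whenever $r\le\max\{d\in\mathpzc{A} : dt\in\mathcal{C}_f^\infty\}$. Every $t$ is left-extendable, so all $r\le a$ occur. By complement invariance, exactly $p(m)/2$ words $t$ satisfy $bt\in\mathcal{C}_f^\infty$. Hence
$p(n)\ge 2a\,p(m-1)+(b-a)\,p(m)\ge(a+b)\,p(m-1)$, using that $p$ is nondecreasing. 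This is the factor $a+b$ per step you wanted, and iterating gives $Cn^\alpha$.

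An equivalent route counts right special factors via $p(n)=1+\sum_{k<n}s(k)$. A right special smooth word $u$ with $\|u\|\ge2$ ends with a run of length exactly $a$ and has a right special derivative. Conversely, each nonempty right special $v$ lifts injectively to $2a$ or $2b$ such words, according as $v_1=a$ or $v_1=b$, all of length at most $\mu|v|+c$.
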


While this does not prove \Cref{CfTheta}, this still provides the first polynomial bounds for any binary alphabet. This implies zero topological entropy for all smooth sequences, where the \textit{topological entropy} of a sequence $x \in \mathpzc{A}^\mathbb{N}$ is the finite quantity $\displaystyle\lim_{n \rightarrow \infty} \log(p_x(n))/n$.

\begin{corollary}
	Every smooth sequence has zero topological entropy.
\end{corollary}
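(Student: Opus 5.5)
The statement is that every smooth sequence has zero topological entropy. I will deduce it directly from the polynomial upper bound in \Cref{firstPoly}, together with the inclusion $\mathscr{L}(\mathcal{C}^\infty) \subseteq \mathcal{C}_f^\infty$ observed just before \Cref{mixedCf}.

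First, fix a smooth sequence $x \in \mathcal{C}^\infty$ over $\mathpzc{A} = \{a,b\}$. Every factor of $x$ is a factor of some smooth sequence, hence is a smooth word. This gives $p_x(n) \leq p_{\mathcal{C}^\infty}(n) \leq p_{\mathcal{C}_f^\infty}(n)$ for all $n$.

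Second, \Cref{firstPoly} provides $N \geq 0$ with $p_{\mathcal{C}_f^\infty}(n) \leq n^\beta$ for all $n > N$, where $\beta = \log(2b^2)/\log\left(\frac{2ab}{a+b}\right)$. This exponent is a finite positive constant: since $1 \leq a < b$, we have $\frac{2ab}{a+b} > 1$, because $2ab - a - b = a(b-1) + b(a-1) > 0$. Consequently
\begin{equation*}
0 \leq \frac{\log p_x(n)}{n} \leq \frac{\beta \log n}{n}
\end{equation*}
for $n > N$. The lower bound holds because $p_x(n) \geq 1$, as $x$ is infinite.

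Finally, $\beta \log(n)/n \to 0$. Since $p_x$ is submultiplicative, the limit defining the entropy exists (Fekete's lemma), and it equals $0$ by the squeeze above. There is no real obstacle here; the only point needing a check is that the exponent $\beta$ is finite, i.e. that $\frac{2ab}{a+b} > 1$, which was verified above.
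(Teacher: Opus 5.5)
Your proof is correct and is the paper's argument: the paper deduces the corollary from the polynomial upper bound $p_{\mathcal{C}_f^\infty}(n) \leq n^\beta$ of \Cref{firstPoly} together with $p_x(n) \leq p_{\mathcal{C}_f^\infty}(n)$. Your check that $\frac{2ab}{a+b} > 1$ is a welcome detail the paper leaves implicit, and the appeal to Fekete's lemma is harmless but unnecessary, since the squeeze $0 \leq \log(p_x(n))/n \leq \beta\log(n)/n$ already shows that the limit exists and equals $0$.
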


In \cite{Weakley}, Weakley took another approach by studying the bispecial smooth words over $\{1,2\}$. 

\begin{theorem}[{\cite[Corollary 9]{Weakley}}]\label{almost12}
	Over $\{1,2\}$, let $\rho = \log(3)/\log(3/2)$. Then there exists two constants $0 < C_1 < C_2$ such that
	\begin{equation*}
		C_1 n^\rho \leq p_{\mathcal{C}_f^\infty}(n) \leq C_2 n^\rho
	\end{equation*}
	for each $n$ satisfying $L_{i-1} +1 \leq n \leq \ell_i+1$ for some $i \geq 0$, where $\ell_i$ (resp. $L_i$) is the minimum (resp. maximum) length of bispecial smooth words of height $i$.
\end{theorem}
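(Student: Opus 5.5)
The approach is Weakley's: describe the bispecial smooth words explicitly and feed them into \Cref{pfrombsw}. Over $\{1,2\}$ the finite derivative $\mathcal{D}_f$ sends a bispecial smooth word of height $i \geq 1$ to a bispecial smooth word of height $i-1$.

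\textbf{Step 1: integrating bispecials.} Conversely, for a bispecial $w$ of height $i-1$ I would determine its bispecial antiderivatives. These are the words $u$ with $\mathcal{D}_f(u) = w$ whose extremal runs are chosen so that both left and both right extensions of $u$ survive derivation. I expect each $w$ to lift to exactly two bispecial antiderivatives, one starting with each letter, since the complement of an antiderivative is again one. I also expect the type of $u$ (strong, neutral, weak) to be determined by the type of $w$. The point is that $xuy$ is smooth if and only if the corresponding extension of $w$ is smooth, up to the one-letter bookkeeping at the ends. Starting from the finitely many bispecials of small height checked by hand, this gives the number of bispecials of each type at every height.

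\textbf{Step 2: bounding $s(n)$ on the windows.} From Step 1, the quantity $\sum_{|u| \text{ of height } j}(\text{strong} - \text{weak})$ should equal $c\,2^{j}$ for a constant $c > 0$. Take $n$ with $L_{i-1} + 1 \leq n \leq \ell_i + 1$. Then every bispecial of height $< i$ is shorter than $n$, and none of height $\geq i$ is. Summing \Cref{pfrombsw} therefore gives
\begin{equation*}
s(n) = s(0) + \sum_{j < i} \left( bs_j - bw_j \right) = \Theta(2^i),
\end{equation*}
where $bs_j$ and $bw_j$ count the strong and weak bispecials of height $j$. The constant is uniform in $i$ and in $n$ within the window.

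\textbf{Step 3: relating lengths to heights.} This is the main obstacle. I need $\ell_i \geq c_1 (3/2)^i$ and $L_i \leq c_2 (3/2)^i$. Derivation shrinks length by the factor $|u| / |\mathcal{D}_f(u)|$, which is the average run length and equals $1 + (\text{frequency of } 2)$. So I would have to control letter frequencies along the whole chain of iterated derivatives of a bispecial word. The plan is to track $(|u|_1, |u|_2)$ under one integration step as a linear map with bounded error. A run $1$ contributes one letter and a run $2$ contributes two, with letters alternating. Iterating, the product of these steps multiplies length by $3/2$ per height, up to a bounded multiplicative error. The error must be shown to be summable, and not merely bounded at each step, which is the genuinely delicate point. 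If summability fails, I would retreat to whatever control on run structure smooth words allow (for instance, bounds inherited from \Cref{firstPoly}-type estimates) and accept the theorem only on windows where the lengths are comparable.

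\textbf{Step 4: integrating $s$ to get $p$.} Given Step 3, every $n$ in the $i$-th window satisfies $n = \Theta((3/2)^i)$. Since $\log 2/\log(3/2) = \rho - 1$, Step 2 becomes $s(n) = \Theta(n^{\rho - 1})$. For the upper bound, $s$ is nondecreasing up to $n$ (the net strong-minus-weak contributions are nonnegative), so $p(n) \leq p(0) + n\,s(n) \leq C_2 n^\rho$. For the lower bound, I sum $s$ over the part of the window below $n$ and over the previous windows. Their lengths are comparable to $(3/2)^i$ by Step 3, which gives $p(n) \geq C_1 n^\rho$.
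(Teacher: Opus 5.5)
\textbf{The gap is your Step 3.} Steps 1 and 2 are sound: they amount to \Cref{Dbispecial} and the case $a=b-1$ of \Cref{p3p}, which give $2^i \leq s(n) \leq 2^{i+1}$ on the $i$-th window. Step 3, however, asks for something out of reach. You want $\ell_i \geq c_1(3/2)^i$ and $L_i \leq c_2(3/2)^i$, obtained by controlling letter frequencies along chains of derivatives with summable errors. The first inequality alone, fed into \Cref{upperBound} with $\lambda = 3/2$ (so that $\zeta = \rho$), would give $p_{\mathcal{C}_f^\infty}(n) = \mathcal{O}(n^\rho)$ for \emph{every} $n$. That is the upper half of \Cref{CfTheta} over $\{1,2\}$, which is open; the best known exponent is the one in \Cref{freqBound}. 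Your frequency programme is essentially a quantitative form of Keane's conjecture. Your fallback, restricting to windows where the lengths happen to be comparable, proves a different statement. The lower bound in Step 4 inherits the same dependence. It is also fragile for another reason: the previous windows may be empty, since $L_{j-1} \leq \ell_j$ is exactly Weakley's open question.

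\textbf{The missing idea.} The window hypothesis is designed so that you never need $\ell_i$ or $L_i$ individually, only the exact mean length of $T_i$. The paper cites this theorem from Weakley without reproving it, but it computes this mean in the proof of \Cref{Th2} \textit{(i)}. Every $u \in T_{i+1}$ is $\mathcal{P}_{f,c}(v)$ with $v = \mathcal{D}_f(u) \in T_i$ and $\lvert u \rvert = \lvert v \rvert_1 + 2\lvert v \rvert_2 + 2$. Since $T_i$ is closed under complement, $\sum_{v \in T_i}\lvert v \rvert_1 = \sum_{v \in T_i}\lvert v \rvert_2$. Hence $f(i) := \sum_{u \in T_i}\lvert u \rvert$ satisfies $f(i+1) = 3f(i) + 4\cdot 2^i$, so $f(i) = 4\cdot 3^i - 4\cdot 2^i$ (\Cref{avTi} with $c=4$).

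Since minimum $\leq$ mean $\leq$ maximum, $\ell_j \leq 4(3/2)^j - 4 \leq L_j$ for all $j$. Therefore every $n$ with $L_{i-1}+1 \leq n \leq \ell_i + 1$ satisfies $\frac{8}{3}(3/2)^i - 3 \leq n \leq 4(3/2)^i - 3$. So $n = \Theta((3/2)^i)$ with absolute constants, and no information on the spread of lengths inside $T_i$ is needed.

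The rest then goes through:
\begin{itemize}
\item Upper bound: for $m \leq n-1 \leq \ell_i$ you have $s(m) \leq 2^i$. Hence $p_{\mathcal{C}_f^\infty}(n) \leq 1 + n2^i = \mathcal{O}(n^\rho)$, because $2^i = ((3/2)^i)^{\rho-1}$.
\item Lower bound: $n > L_j$ for all $j \leq i-1$, so \Cref{pi} gives exactly $p_j(n) = (n+3)2^j - 4\cdot 3^j$. Summing over $j \leq i-1$ and using $n+3 \geq 4(3/2)^{i-1}$ gives at least $\frac{2}{3}3^i - n - 1$. Hence $p_{\mathcal{C}_f^\infty}(n) \geq \frac{2}{3}3^i = \Omega(n^\rho)$. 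Alternatively, \Cref{Th2} \textit{(i)} gives this lower bound for all $n$.
\end{itemize}
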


This result led Weakley to ask when $L_{i-1} \leq \ell_i$ holds, computations suggest that it holds for all $i \geq 0$ but it is still an open question. In fact this directly relates to the letter frequencies in smooth words, as shown in the next result of Weakley and Huang.

\begin{theorem}[{\cite[Theorem 3]{HW}}]\label{freqCf}
	Over $\{1,2\}$, if $\phi < 1/2$ is a positive real number and $N$ is a positive integer such that $\frac{\lvert u \rvert_2}{\lvert u \rvert} > \frac{1}{2} - \phi$ for all $u \in \mathcal{C}_f^\infty$ such that $\lvert u \rvert > N$, then there are positive constants $C_1,C_2 > 0$ such that, for all $n \geq 1$, we have
	\begin{equation*}
		C_1 n^\delta < p_{\mathcal{C}_f^\infty}(n) < C_2 n^\gamma
	\end{equation*}
	where $\delta = \frac{\log(3)}{\log(3/2 + \phi + 2/N)}$ and $\gamma = \frac{log(3)}{\log(3/2 - \phi)}$.
\end{theorem}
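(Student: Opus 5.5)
The plan is to control the growth of the first difference $s(n)=p_{\mathcal{C}_f^\infty}(n+1)-p_{\mathcal{C}_f^\infty}(n)$ through the finite derivative, and then sum. Over a binary alphabet, $s(n)$ is the number of right special smooth words of length $n$, that is, words $u$ with $u1,u2\in\mathcal{C}_f^\infty$. The language is factorial and extendable, so this holds. The heuristic is that $\mathcal{D}_f$ sends right special words to right special words, and shrinks length by the average run length. The average run length of $u$ is governed by the frequency of $2$'s in $\mathcal{D}_f(u)$. Conversely, a right special $v$ has essentially two right special antiderivatives of a given length, one for each choice of first letter, up to a bounded boundary ambiguity. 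This gives a recursion of the shape $s(\lambda m)\approx 2s(m)$.

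\textbf{Step 1: length comparison.} Let $u\in\mathcal{C}_f^\infty$ with $v=\mathcal{D}_f(u)$ and $\lvert v\rvert>N$. Then $\lvert u\rvert=\lvert v\rvert_1+2\lvert v\rvert_2+O(1)$, where the $O(1)$ accounts for the cut boundary runs. The hypothesis gives $\lvert v\rvert_2>(\frac12-\phi)\lvert v\rvert$. The language $\mathcal{C}_f^\infty$ is complement-invariant, since $\mathcal{D}_f(\overline{u})=\mathcal{D}_f(u)$ up to the first letter. Hence the hypothesis also gives $\lvert v\rvert_1>(\frac12-\phi)\lvert v\rvert$. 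Therefore
$$(\tfrac32-\phi)\lvert v\rvert-c\le\lvert u\rvert\le(\tfrac32+\phi)\lvert v\rvert+c.$$
The constant $2/N$ in $\delta$ should come from absorbing this additive $c$, and short words of length $\le N$, into the multiplicative constant.

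\textbf{Step 2: special words under derivation.} I will show two things about a right special smooth $u$ with $\lvert u\rvert$ large. First, after trimming a bounded suffix, and possibly the incomplete last run, $u$ determines a right special smooth $v$. Second, the map $u\mapsto(v,\text{first letter of }u,\text{length of the first run})$ is injective. The second component has $2$ choices. The third has boundedly many choices, but it is in fact determined by $v$ except for the cut ambiguity.

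In the other direction, every right special $v$ of length $m$ yields, for each first letter, a right special antiderivative. Its length lies in the window of Step 1. Extending a right special word by right-extendability, it also gives right special words of every length up to the next one.

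Combining the two directions with Step 1 should give, for all large $n$,
$$2\,s\bigl(n/(\tfrac32+\phi+\tfrac2N)\bigr)-K\lesssim s(n)\lesssim 2\,s\bigl(n/(\tfrac32-\phi)\bigr)+K.$$
Iterating about $\log n/\log\lambda$ times, with $\lambda$ the relevant ratio, yields $s(n)=\Theta$-bounded by $n^{\log2/\log\lambda}$.

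\textbf{Step 3: summation.} Summing $s$ gives $p(n)$ between constant multiples of $n^{1+\log2/\log\lambda}=n^{\log(2\lambda)/\log\lambda}$. For $\lambda=\frac32-\phi$ this gives the exponent $\log(3-2\phi)/\log(\frac32-\phi)$. This is at most $\gamma$, because $\log(3-2\phi)\le\log3$ and $\log(\frac32-\phi)>0$. For $\lambda=\frac32+\phi+\frac2N$ it gives an exponent of at least $\delta$. So both stated bounds follow. The requirement $\frac32-\phi>1$, i.e. $\phi<\frac12$, is what makes the iteration contract.

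\textbf{Main obstacle.} I expect the hardest part to be Step 2. The issues are the boundary runs of $u$, where $\textup{cut}$ discards information, and the monotonicity of $s$ needed to pass between lengths in the window. Since $s$ need not be monotone a priori, I would first try to work with the counting function $S(n)=\sum_{k\le n}s(k)$ directly, which avoids pointwise monotonicity. A fallback is to work with maximal and minimal right special lengths rather than pointwise values.
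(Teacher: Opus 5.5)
Your architecture (Step 1, then a two-sided recursion $s(n)\approx 2s(n/\lambda)$, then summation) is sound, and Steps 1 and 3 are fine. Once the recursion holds you even get the stronger exponents $\log(3-2\phi)/\log(\frac32-\phi)\le\gamma$ and $\log(3+2\phi)/\log(\frac32+\phi)\ge\delta$, without the $2/N$. The gap is in Step 2, which carries all the content: the mechanism you describe does not yield the displayed recursion.

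\textbf{The window problem.} Your injection $u\mapsto(\mathcal{D}_f(u),u_1,p_1)$ sends right special words of length exactly $n$ to right special words whose lengths range over a window $W(n)$ of width $\Theta(n)$. By itself it only gives $s(n)\le 4\sum_{m\in W(n)}s(m)$, which cannot be iterated. Pointwise monotonicity of $s$ does not help: it turns this into $O(n)\,s(n/(\frac32-\phi))$, which is still useless.

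\textbf{The missing ingredient.} To collapse the window onto the single length $M\approx n/(\frac32-\phi)$, you must extend each $\mathcal{D}_f(u)$ to a right special word of length $M$. This extension has to be on the \emph{left}. Right special words are closed under suffixes, and $ud\in\mathcal{C}_f^\infty$ says nothing about $ud$ being right special, so right-extendability of the language is the wrong tool. The fact you actually need, in both directions of Step 2, is that every right special smooth word is a suffix of right special smooth words of every larger length. For a binary factorial extendable language this is equivalent to the absence of weak bispecial words. Over $\{1,2\}$ that absence follows from the paper's classification of short bispecials together with \Cref{Dbispecial}, since a long bispecial derives to a bispecial of the same type.

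\textbf{The repair.} Once you bring in bispecials, the cleanest route is the paper's. Over $\{1,2\}$ the strong bispecials are exactly the tree $T$ and there are no weak ones, so $s(n)=1+\#\{u\in T:\lvert u\rvert<n\}$. Each $u\in T_{i+1}$ is $\mathcal{P}_{f,c}(v)$ with $v\in T_i$ and $\lvert u\rvert=\lvert v\rvert+\lvert v\rvert_2+2$. Your Step 1 applied to $v$ then gives $\ell_i\ge C(\frac32-\phi)^i-D$ and $L_i\le C'(\frac32+\phi)^i$, and the recursion on $s$ is immediate.

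The paper does not reprove this theorem; it is quoted from Huang--Weakley. With this input, though, \Cref{upperBound} yields the upper bound with exponent $\log(3-2\phi)/\log(\frac32-\phi)\le\gamma$. The lower bound is superseded by \Cref{Th2}~(i), which gives $\Omega(n^\rho)$ with $\rho\ge\delta$ and no frequency hypothesis at all.
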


\begin{remark}
	Keane famously conjectured in \cite{Keane} that $\kappa_{2,1}$ has equal letter frequencies. In fact we make the stronger conjecture that, over any mixed alphabet, the letter frequencies in smooth words are asymptotically $1/2$. In particular this supports \Cref{CfTheta} when combined with \Cref{freqCf}. However, the conjecture about letter frequencies fails over odd alphabets: for example the smooth sequence $\kappa_{3,1}$ described in the introduction has frequencies of $1$ and $3$ that are not equal.
\end{remark}

\Cref{freqCf} can be combined with the computations of \cite{Rao} and \cite{Nilsson} that provide $\displaystyle\limsup_{u \in \mathcal{C}_f^\infty} \frac{\lvert u \rvert_2}{\lvert u \rvert} > \frac{1}{2} - 0.00008$ over $\{1,2\}$.

\begin{corollary}\label{freqBound}
	Over $\{1,2\}$, let $\rho = \log(3)/\log(3/2)$. Then there exists a constant $C > 0$ such that, for all $n \geq 0$,
	\begin{equation*}
		p_{\mathcal{C}_f^\infty}(n) < C n^{\rho + 0.00036}.
	\end{equation*}
\end{corollary}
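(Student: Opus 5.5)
This bound follows by combining \Cref{freqCf} with the numerical estimate quoted just before the statement, so the plan is to plug in parameters and simplify the exponent. The estimate from \cite{Rao} and \cite{Nilsson} has to be read as a uniform lower bound on the frequency of $2$'s in long smooth words. That is the hypothesis \Cref{freqCf} needs: there is $N$ such that every $u \in \mathcal{C}_f^\infty$ with $\lvert u \rvert > N$ satisfies $\frac{\lvert u \rvert_2}{\lvert u \rvert} > \frac12 - \phi$ with $\phi = 0.00008$. The cited computations bound the extremal frequencies from a finite computation, which yields such a uniform statement beyond some length $N$. If the quoted limsup were only about the maximum, I would apply the complement symmetry of $\mathcal{C}_f^\infty$ (it is complement-invariant). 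Then a bound on $\limsup \lvert u\rvert_1/\lvert u\rvert$ gives the required liminf bound on the frequency of $2$'s. Securing this hypothesis in the correct form is the only genuine obstacle; the rest is arithmetic.

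Next I apply \Cref{freqCf} with this $\phi$ and $N$. It gives constants $C_2 > 0$ such that $p_{\mathcal{C}_f^\infty}(n) < C_2 n^\gamma$ for all $n \geq 1$, where $\gamma = \log 3 / \log(3/2 - \phi)$. It then suffices to check $\gamma \leq \rho + 0.00036$. Writing $L = \log(3/2)$, a first-order expansion gives
\begin{equation*}
\gamma = \frac{\log 3}{L + \log(1 - \tfrac{2\phi}{3})} \approx \rho\left(1 + \frac{2\phi}{3L}\right).
\end{equation*}
So $\gamma - \rho \approx \rho \cdot \frac{2\phi}{3L}$. With $\rho \approx 2.7095$, $L \approx 0.405465$ and $\phi = 0.00008$, this is about $2.7095 \times 0.0001315 \approx 0.000356$, which is below $0.00036$. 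I would make the estimate rigorous by using $-\log(1-x) \leq x + x^2$ for small $x$ and keeping the second-order term, which is of order $10^{-8}$ and so harmless.

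Finally, the case $n = 0$ is trivial, since $p(0) = 1$ and the constant can be enlarged to cover it. Taking $C = \max(C_2, 1)$ then gives the statement for all $n \geq 0$.
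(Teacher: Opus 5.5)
Your route is the paper's. Feed $\phi = 0.00008$ into \Cref{freqCf} and check that $\gamma = \log 3/\log(3/2-\phi)$ satisfies $\gamma-\rho \approx 0.0003565 < 0.00036$. Your numerics and your plan for the second-order terms are fine.

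Your reading of the hypothesis is also the right one. A lower bound on a $\limsup$, as the sentence before the statement literally reads, gives nothing. What is needed, and what your complement-symmetry fallback uses, is an upper bound $\limsup_{\lvert u\rvert\to\infty}\lvert u\rvert_1/\lvert u\rvert \leq \frac12 + 0.00008$. Complement invariance, which holds because $\mathcal{D}_f(\overline{u}) = \mathcal{D}_f(u)$, turns this into the lower bound on $\lvert u\rvert_2/\lvert u\rvert$ that \Cref{freqCf} requires. If the computed bound is non-strict, take $\phi$ slightly larger than $0.00008$ to get the strict inequality. The slack of about $3.5\cdot 10^{-6}$ in the exponent allows this.

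The one step that is wrong is your treatment of $n=0$. Since $\rho+0.00036>0$, the right-hand side $C n^{\rho+0.00036}$ vanishes at $n=0$. But $p_{\mathcal{C}_f^\infty}(0)=1$, because the empty word is smooth. No choice of $C$, in particular not $C=\max(C_2,1)$, turns this into a true inequality. So the bound as stated is false at $n=0$. What your argument proves is the bound for all $n\geq 1$, which is also all that \Cref{freqCf} asserts. You should state it that way rather than claim the case $n=0$, or replace the right-hand side by something like $C\max(n,1)^{\rho+0.00036}$.
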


\subsubsection{A mistake in a paper}\label{mistake}

In \cite{Huang}, Huang claimed to generalize \Cref{freqCf} to any binary alphabet.

\begin{claim}[{\cite[Theorem 12]{Huang}}]\label{mistake1}
	Let $\mathpzc{A} = \{a,b\}$ be a binary alphabet. If $\phi$ is a positive real number and $N$ is a positive integer such that $\frac{\lvert u \rvert_2}{\lvert u \rvert} > \phi$ for all $u \in \mathcal{C}_f^\infty$ such that $\lvert u \rvert > N$, then there are positive constants $C_1,C_2 > 0$ such that, for all $n \geq 0$, we have
	\begin{equation*}
		C_1 n^\delta < p_{\mathcal{C}_f^\infty}(n) < C_2 n^\gamma
	\end{equation*}
	where $\delta = \frac{\log(2b-1)}{\log(1+(a+b-2)(1-\phi))}$ and $\gamma = \frac{\log(2b-1)}{\log(1+(a+b-2)\phi)}$.
\end{claim}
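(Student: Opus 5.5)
The plan is to follow the scheme of \cite{HW} used for \Cref{freqCf}: turn the frequency hypothesis into control of the contraction ratio of $\mathcal{D}_f$, and then bound $p_{\mathcal{C}_f^\infty}$ by a recursion on $n$ that compares smooth words of length $n$ with their derivatives.

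\textbf{Step 1 (contraction ratio).} Let $u \in \mathcal{C}_f^\infty$ and $v = \mathcal{D}_f(u)$. Up to the two cut runs at the ends, whose contribution is at most $2b$, every letter $c$ of $v$ is the length of a full run of $u$. Hence
\begin{equation*}
\lvert u\rvert = a\lvert v\rvert_a + b\lvert v\rvert_b + \mathcal{O}(1).
\end{equation*}
I would write this as $\lvert u\rvert = \lvert v\rvert\bigl(1+(a-1)+(b-a)f\bigr)+\mathcal{O}(1)$, where $f$ is the relevant letter frequency in $v$. Since $v$ is smooth, once $\lvert v\rvert > N$ the hypothesis gives $f > \phi$. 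I would then try to massage this into two-sided bounds
\begin{equation*}
\lambda_- \lvert v\rvert - K \leq \lvert u\rvert \leq \lambda_+ \lvert v\rvert + K,
\end{equation*}
with $\lambda_- = 1+(a+b-2)\phi$ and $\lambda_+ = 1+(a+b-2)(1-\phi)$, the exponents being $\gamma = \log(2b-1)/\log\lambda_-$ and $\delta = \log(2b-1)/\log\lambda_+$. This requires reading the hypothesis on $\lvert u\rvert_2$ as a statement about the appropriate letter of $\mathpzc{A}$. I expect this identification to be the main obstacle. Over $\{1,2\}$ it matches \Cref{freqCf}, but for general $\{a,b\}$ the coefficient $a+b-2$ does not obviously come out of $a\lvert v\rvert_a+b\lvert v\rvert_b$. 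So I would first check carefully whether the algebra genuinely produces $\lambda_\pm$, for instance by testing small alphabets such as $\{1,3\}$ and $\{2,4\}$ against the bounds of \Cref{firstPoly}.

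\textbf{Step 2 (upper bound).} A smooth word $u$ is determined by three pieces of data: $v=\mathcal{D}_f(u)$, the first letter of $u$, and the lengths of the two boundary runs, which lie in $\llbracket 1,b\rrbracket$. The plan is to group these boundary choices so that, going from length $m$ derivatives to length $n$ words, the factor is at most $2b-1$ per step. Two subtleties arise here. First, one must handle how the extension count behaves at the boundary. Second, by Step 1 the derivative of a word of length $n$ has length at most $(n+K)/\lambda_-$. Together these should give a recursion of the shape
\begin{equation*}
p(n) \leq (2b-1)\,p\bigl(\lceil (n+K)/\lambda_-\rceil\bigr) + \mathcal{O}(1).
\end{equation*}
Iterating $\log n/\log\lambda_-$ times yields $p(n) = \mathcal{O}(n^{\gamma})$. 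An alternative route, closer to \cite{Weakley}, would use \Cref{pfrombsw}: show that each strong bispecial word has a bispecial derivative and that the number of bispecials of height $i$ grows like $(2b-1)^i$, with lengths between $\lambda_-^i$ and $\lambda_+^i$ by Step 1. Summing $b(n)$ twice then gives the same bound.

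\textbf{Step 3 (lower bound).} Symmetrically, every smooth word $v$ of length $m$ has at least $2b-1$ distinct smooth preimages, obtained by choosing the first letter and the extensions of the end runs. Left- and right-extendability of $\mathcal{C}_f^\infty$ keeps these preimages inside the language, and by Step 1 they have length at most $\lambda_+ m + K$. After extending them to a common length, this gives
\begin{equation*}
p(\lambda_+ m + K) \geq (2b-1)\,p(m) - \mathcal{O}(1),
\end{equation*}
hence $p(n) = \Omega(n^{\delta})$. Here the delicate point is injectivity: distinct pairs (derivative, boundary data) must give distinct words of the same length. I would secure this using that $\mathcal{D}_f$ is well defined on the canonical factorization.
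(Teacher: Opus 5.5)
The statement you are trying to prove is false, so no argument along these lines can succeed. It is Huang's Theorem 12, which the paper quotes in \Cref{mistake} only to point out that it is wrong: it conflicts with \Cref{CfTheta}, and Huang's proof uses the map $D$ of \Cref{D} instead of $\mathcal{D}_f$. A concrete counterexample is the even alphabet $\{2,4\}$. For a smooth word $u$ there, the internal runs of $\mathcal{D}_f(u)$ have even length. Hence in the alternating sum of the run lengths of $u$, which is $\pm(\lvert u\rvert_2-\lvert u\rvert_4)$, everything cancels except boundary terms, and $\bigl\lvert \lvert u\rvert_2-\lvert u\rvert_4\bigr\rvert\le K$ for an absolute constant $K$. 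So the hypothesis holds with $\phi=0.49$ and $N$ large. The claimed lower bound would then give $p_{\mathcal{C}_f^\infty}(n)>C_1n^{\delta}$ with $\delta=\log 7/\log 3.04\approx 1.75$. But \Cref{Th2} (ii) gives $p_{\mathcal{C}_f^\infty}(n)=\mathcal{O}(n^{\log 6/\log 3})$ with $\log 6/\log 3\approx 1.63$. (The strict upper bound also fails trivially at $n=0$, where $p_{\mathcal{C}_f^\infty}(0)=1$.)

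In your sketch, the step that breaks is the branching factor $2b-1$ in Steps 2--3, most visibly the recursion $p(\lambda_+m+K)\ge(2b-1)p(m)-\mathcal{O}(1)$ of Step 3.
\begin{itemize}
\item The per-word count is already wrong: over $\{1,3\}$ the smooth word $1$ has only the two preimages $131$ and $313$.
\item More fundamentally, all preimages of a given $v$ are obtained by trimming the boundary runs of $\mathcal{P}_{f,a}(v)$ or $\mathcal{P}_{f,b}(v)$. They are all factors of these two words and collide once padded to a common length, so distinctness of the pairs (derivative, boundary data) does not survive the padding.
\end{itemize}
The correct bookkeeping is that $\mathcal{D}_f$ records a boundary run of length in $\llbracket a+1,b\rrbracket$ as the letter $b$ (the cut). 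Such a run is therefore not a free extra choice in front of a run encoded by $a$: the first run of a preimage has $a$ possible lengths when the derivative begins with $a$, and $b$ when it begins with $b$. For windows of a fixed length only one end is free. With the two choices of first letter, this gives multiplicity $2a$ or $2b$, i.e.\ $a+b$ on average, per contraction by $\frac{a+b}{2}$. That is exactly the exponent $\rho$ the paper obtains from the $2^i$ words of $T_i$, whose average length is of order $(\frac{a+b}{2})^i$.

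The number $2b-1$ is what the incorrect map $D$ produces. It erases every boundary run of length $<b$, and thus allows $b-1$ lengths in front of $a$ and $b$ in front of $b$. Since $2b-1=a+b$ only when $a=b-1$, the $\{1,2\}$ case of \Cref{freqCf} is unaffected while the general claim fails. Your doubt in Step 1 is also justified: $a\lvert v\rvert_a+b\lvert v\rvert_b$ gives $a+(b-a)\phi$, which equals $1+(a+b-2)\phi$ only if $a=1$ or $\phi=\frac12$.
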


If we expect $\phi$ to be arbitrarily close to $1/2$, as it is conjectured for mixed alphabets, \Cref{mistake1} would suggest that $p_{\mathcal{C}_f^\infty}(n) = \Theta(n^{\rho'})$ where ${\rho' = \frac{\log(2b-1)}{\log(\frac{a+b}{2})}}$. In the same paper, Huang also claimed to give the asymptotic behavior of $p_{\mathcal{C}_f^\infty}(n)$ over even alphabets.
\newpage
\begin{claim}[{\cite[Theorem 16]{Huang}}]\label{mistake2}
	Let $\mathpzc{A} = \{a,b\}$ be an even alphabet. Then
	\begin{equation*}
		p_{\mathcal{C}_f^\infty}(n) = \Theta(n^{\rho'}) ~~~~~~ \textrm{where} ~~~ \rho' = \frac{\log(2b-1)}{\log\left(\frac{a+b}{2}\right)}.
	\end{equation*}
\end{claim}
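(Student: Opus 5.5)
The plan is to follow the bispecial strategy of Weakley as reformulated in \Cref{complexity}, as Huang does. By \Cref{pfrombsw}, $b(n)=bs(n)-bw(n)$. So it suffices to prove two estimates for each height $i$. First, the lengths of bispecial smooth words of height $i$ all lie in an interval $[c_1\lambda^i, c_2\lambda^i]$ with $\lambda=\frac{a+b}{2}$. Second, the number of strong minus weak bispecial smooth words of height $i$ grows like $\Theta\left((2b-1)^i\right)$. Summing $b(n)$ twice then gives $s(n)=\Theta\left(\sum_{i\,:\,\lambda^i\lesssim n}(2b-1)^i\cdot \lambda^{i}\right)/\lambda^{i}$-type estimates. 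Carrying this out yields $p_{\mathcal{C}_f^\infty}(n)=\Theta(n\cdot (2b-1)^{\log n/\log\lambda}/ \lambda^{\log n/\log \lambda})$. The exponent should collapse to $\rho'=\log(2b-1)/\log\lambda$ once the contributions are organized by height, as in the proof of \Cref{almost12}.

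\textbf{Step 1 (length control over even alphabets).} If $v=\mathcal{D}_f(u)$, then $\lvert u\rvert = a\lvert v\rvert_a + b\lvert v\rvert_b + O(b)$. I would show that over an even alphabet every smooth word of height $\geq 2$ has $\big\lvert \lvert v\rvert_a-\lvert v\rvert_b\big\rvert$ bounded by a constant depending only on $a,b$. The intended argument is that runs have even length, so each run of $u$ pairs off letters of $v$ two by two: $v$ is a concatenation of blocks $aa$ and $bb$ up to boundary letters. Iterating the derivative then forces balance. This gives $\lvert u\rvert=\lambda\lvert v\rvert+O(1)$. Iterating, a smooth word of height $i$ has length $\Theta(\lambda^i)$, and this uniformity is exactly what fails in the mixed case, where \Cref{almost12} only holds on windows.

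\textbf{Step 2 (recursive description of bispecials).} For a bispecial $w$ of height $i+1$ (long enough), $\mathcal{D}_f(w)$ is bispecial of height $i$. Conversely, I would enumerate the primitives: for each bispecial $v$ and each admissible choice of cut lengths at both ends, there is an extension $w$ with $\mathcal{D}_f(w)=v$. The left and right boundary runs of $w$ can each have any length in $\llbracket 1,b\rrbracket$ compatible with both extensions, and bispeciality forces the two boundary choices to be coupled. I would count the surviving pairs, aiming to show that a strong $v$ produces exactly $2b-1$ net strong-minus-weak contributions at the next height. The type of $w$ is read off from the type of $v$ through which of $ava,avb,bva,bvb$ lie in $\mathcal{C}_f^\infty$.

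\textbf{Main obstacle.} The delicate point is the type count in Step 2. One must check that the $2b-1$ children of a strong bispecial are indeed strong, and that weak ones do not cancel them. This requires knowing exactly which length-$(b{+}1)$ boundary configurations remain smooth after all further derivations. I would first verify it by hand over $\{2,4\}$ for small heights, comparing with a direct count of $p_{\mathcal{C}_f^\infty}(n)$. I expect this verification to be where the argument is most at risk, since the remark preceding the statement signals a conflict with the exponent $\rho$ of \Cref{CfTheta}.
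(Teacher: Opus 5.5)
The statement you are trying to prove is false. The paper cites it (Huang's Theorem~16) only to point out that it is wrong. Over an even alphabet, \Cref{Th2}~(ii) gives $p_{\mathcal{C}_f^\infty}(n)=\Theta(n^\rho)$ with $\rho=\log(a+b)/\log\frac{a+b}{2}$. Since $a<b$ are both even, $b\geq a+2$, so $2b-1>a+b$ and $\rho'>\rho$. For instance, over $\{2,4\}$ one gets $\rho'=\log 7/\log 3\approx 1.77$ against $\rho=\log 6/\log 3\approx 1.63$.

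Your plan therefore has to break somewhere, and it breaks in Step~2: the boundary runs of a long bispecial word have no freedom at all. Let $w=a_1^{p_1}\cdots a_n^{p_n}$ with $n\geq 2$ be bispecial.
\begin{itemize}
\item $a_1w\in\mathcal{C}_f^\infty$ forces $p_1+1\leq b$.
\item $\overline{a_1}w\in\mathcal{C}_f^\infty$ makes $a_1^{p_1}$ an interior run, which forces $p_1\in\{a,b\}$.
\end{itemize}
Hence $p_1=a$, and symmetrically $p_n=a$, so $w=\mathcal{P}_{f,w_1}(\mathcal{D}_f(w))$. Conversely, $\mathcal{P}_{f,a}(v)$ and $\mathcal{P}_{f,b}(v)$ are bispecial of the same type as $v$ (\Cref{Dbispecial}). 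So a bispecial $v$ has exactly two bispecial preimages, not $2b-1$, and the type is inherited unchanged. The bispecial smooth words form five binary trees, rooted at $\varepsilon,a^a,b^a$ (strong) and $a^{b-1},b^{b-1}$ (weak). Generation $i$ thus carries $3\cdot 2^i$ strong and $2\cdot 2^i$ weak bispecials, a net count of $2^i$.

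Step~1 is not the problem; for the words that matter it is even exact. Every word in generation $i$ of $T$ has length $c\lambda^i-c$, with $\lambda=\frac{a+b}{2}$ and $c=\frac{4a}{a+b-2}$ (\Cref{evenli}). Feeding the correct branching factor $2$ into the double summation of \Cref{pfrombsw} gives $s(n)=\Theta(n^{\log 2/\log\lambda})$. It then gives $p_{\mathcal{C}_f^\infty}(n)=\Theta(n^{1+\log 2/\log\lambda})=\Theta(n^{\rho})$, which is \Cref{Th2}~(ii); the paper organizes this through \Cref{p3p} and \Cref{upperBound}.

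Separately, the heuristic in your first paragraph is miscounted. If there are $\Theta(\mu^i)$ net strong bispecials of length $\Theta(\lambda^i)$, then $s(n)=\Theta(n^{\log\mu/\log\lambda})$ and $p(n)=\Theta(n^{1+\log\mu/\log\lambda})$, with no division by $\lambda^i$. So even granting $\mu=2b-1$, you would obtain $n^{1+\rho'}$, not $n^{\rho'}$.

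The paper traces Huang's exponent to his proofs working with the operator $D$ of \Cref{D} (built from $\textup{cut}'$) instead of $\mathcal{D}_f$. That operator admits strictly more ``smooth'' words than $\mathcal{C}_f^\infty$.
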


We notice that \Cref{mistake1,mistake2} conflict with \Cref{CfTheta}, and in fact they are false. The mistake comes from the fact that the proofs use an incorrect definition of the finite derivative: the correct operation is $\mathcal{D}_f$ as introduced in \Cref{Df}, note that it is also defined (and denoted by $\rho$) by Huang in \cite{Huang}; but in the proofs he uses the different operation $D$ defined as follows.

\begin{definition}\label{D}
	Over a binary alphabet $\mathpzc{A}$, we define the map
	\begin{align*}
		D~:~~~~~~~~~ \mathcal{C}_f^1 & \longrightarrow \mathpzc{A}^* \\
    	a_1^{p_1}...a_n^{p_n} & \longmapsto \textup{cut}'(p_1)~p_2~...~p_{n-1}~\textup{cut}'(p_n)
	\end{align*}
	where $\textup{cut}'(p) = \begin{cases} \varepsilon \textrm{ if } 1 \leq p < b \\ b \textrm{ if } p = b \end{cases}$, with the convention that $D(\varepsilon) = \varepsilon$.
\end{definition}

The map $D$ happens to be identical to $\mathcal{D}_f$ when $a = b-1$, but not anymore when $a < b-1$. The consequence is that the set of "smooth words" defined by $D$ is bigger than $\mathcal{C}_f^\infty$ (hence $\rho' > \rho$) and contains finite words that cannot occur in smooth sequences, as illustrated in the following example.

\begin{example}
	Over $\{1,4\}$, the word $u := 4^4 1^4 4^4 1^4 4^3$ is "smooth" with respect to $D$ since $D^3(u) = D^2(4^4) = D(4) = \varepsilon$, but every occurrence of $u$ in a smooth sequence must be followed by a $4$ and the word $u4$ cannot occur in a smooth sequence. However, $u$ is not smooth with respect to $\mathcal{D}_f$ since $\mathcal{D}_f(u) = 4^5 \notin \mathcal{C}_f^1$.
\end{example}

\subsection{New results}

In this paper we contribute to \Cref{mixedCf,CfTheta}. We begin in \Cref{factors} by showing that the smooth words are exactly the factors of smooth sequences.

\begin{theorem}\label{Th1}
	Over any binary alphabet, we have $\mathscr{L}(\mathcal{C}^\infty) = \mathcal{C}_f^\infty$, and in particular $p_{\mathcal{C}^\infty}(n) = p_{\mathcal{C}_f^\infty}(n)$.
\end{theorem}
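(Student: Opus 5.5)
The inclusion $\mathscr{L}(\mathcal{C}^\infty) \subseteq \mathcal{C}_f^\infty$ is already noted in the preliminaries, so the plan is to prove the converse: every smooth word $u$ occurs in some smooth sequence. We argue by induction on the height of $u$, and strengthen the statement so that it can be iterated.

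\textbf{Strengthened claim.} For every $u \in \mathcal{C}_f^\infty$ there is $x \in \mathcal{C}^\infty$ and a factorization $x = w u y$. We also want control on where $u$ sits relative to the runs of $x$. Concretely, the occurrence of $\mathcal{D}_f(u)$ in $\mathcal{D}(x)$ should be the one "induced" by the occurrence of $u$.

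\textbf{Step 1 (local completion).} Given $u \in \mathcal{C}_f^1$ with canonical factorization $a_1^{p_1}\cdots a_n^{p_n}$, consider any two-sided context of $v := \mathcal{D}_f(u)$. Take a word $s v t$ where $s$ and $t$ are letters of $\mathpzc{A}$ (or longer words) such that $s v t$ extends $v$ in the following sense:
\begin{itemize}
\item if $\textup{cut}(p_1) = b$, the leading $b$ of $v$ is the full first run length, and we choose $p_1' = b$;
\item if $\textup{cut}(p_1) = \varepsilon$, then $p_1 \le a$, so we may pad the first run to length $a$, or to length $b$ if the letter preceding the image requires it.
\end{itemize}
The point to check is that $u$ is a factor of some word $u'$ whose full run-length sequence is exactly $s v t$, with the boundary runs completed. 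That is, for the letter $c \in \mathpzc{A}$ appended to $v$, we need $p_n \le c$, or else $\textup{cut}(p_n) = b$ already appears in $v$. This is where the precise definition of $\textup{cut}$ matters. If $p_n \le a$, the image ignores the last run, and any continuation letter $c \ge a \ge p_n$ works. If $a < p_n \le b$, the letter $b$ is recorded and the run is completed to length $b$. So the completion is consistent exactly because $\textup{cut}$ uses the threshold $a$; this is precisely the point where the incorrect map $D$ fails.

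\textbf{Step 2 (induction via compactness).} Build nested words. Set $u_0 = u$ and $v_k = \mathcal{D}_f^k(u)$, with $v_h = \varepsilon$ at the height $h$. We go downward. Since $v_h$ is empty, $v_{h-1}$ is a single short word, and we choose an arbitrary smooth sequence containing it, for instance one of the fixed points $\kappa_{a,b}$, after first proving that every word of height $1$ occurs in $\kappa_{a,b}$ or $\kappa_{b,a}$.

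Then we climb back up. Suppose $z_{k} \in \mathcal{C}^\infty$ contains $v_k$ at a known position. By Step 1 we extend $v_k$ on both sides inside $z_k$; we may use the whole of $z_k$ as the extension. We then take the sequence $z_{k-1}$ with $\mathcal{D}(z_{k-1}) = z_k$, choosing its first letter so that the run containing the first letter of $v_{k-1}$ has the right letter. This is the "integration" step: any $z \in \mathcal{C}^\infty$ together with a choice of first letter has a unique antiderivative, and it is smooth. The run structure of $z_{k-1}$ around the induced position reproduces $v_{k-1}$, because the interior run lengths agree and the boundary runs are long enough by Step 1.

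\textbf{Main obstacle.} The main obstacle is the boundary bookkeeping in Step 1 combined with the parity of the letter. The letter $a_1$ of the first run of $u$ is fixed, but the antiderivative's first letter only determines the letters of runs up to parity of their index. We handle this with the freedom to choose the first letter of the antiderivative $z_{k-1}$, and, if needed, by passing to the complement $\overline{x}$, since derivatives are complement-invariant.

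A second subtlety is when $\textup{cut}(p_1) = \varepsilon$ and the run preceding it in $z_k$ has length $a$ while $p_1 \le a$. This is fine. The problematic case is $a < p_1$, where the run must have length $b$, and it is guaranteed because $v_{k-1}$ begins with $b$ by definition of $\mathcal{D}_f$.

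Once every smooth word is shown to be a factor of a smooth sequence, the equality $\mathscr{L}(\mathcal{C}^\infty) = \mathcal{C}_f^\infty$ follows, and hence $p_{\mathcal{C}^\infty}(n) = p_{\mathcal{C}_f^\infty}(n)$.
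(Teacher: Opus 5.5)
Integrating whole smooth sequences works, but your climbing step has a gap at the left boundary that the proposal never addresses. The route itself differs from the paper's. The antiderivative $c^{z_0}\overline{c}^{z_1}c^{z_2}\cdots$ of a smooth sequence $z$ is trivially smooth, and the free first letter $c$ settles the letter/parity issue. So you need neither r-smooth words nor a limit argument; the ``compactness'' in your Step 2 title is never used. The paper instead builds, by induction on height, a finite r-smooth word $v'u$. It then gets a sequence from right-extendability of $\mathcal{C}_r^\infty$ and a limit.

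\textbf{The gap.} When $\textup{cut}(p_1)=\varepsilon$, the first run $a_1^{p_1}$ of $v_{k-1}$ has to be the tail of the run of $z_{k-1}$ whose length is the letter of $z_k$ \emph{immediately before} the occurrence of $v_k$. If $v_k$ is a prefix of $z_k$, that letter does not exist and the step fails. Your Step 1 presupposes a two-sided context, and your ``second subtlety'' assumes ``the run preceding it in $z_k$''. Nothing guarantees this, and occurrences do drift to position $0$.

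Concretely, over $\{1,2\}$ take $u=212211$, so that $\mathcal{D}_f(u)=122$, $\mathcal{D}_f^2(u)=2$ and $\mathcal{D}_f^3(u)=\varepsilon$. Pick the first occurrence of $2$ in $z_2=\kappa_{1,2}=1221121\cdots$, at position $1$; your base case allows any occurrence. The induced occurrence of $122=1^1 2^2$ is then at position $0$ of $z_1$, which is $\kappa_{1,2}$ again with first letter $1$. The leading run $2^1$ of $u$ now has nowhere to go.

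\textbf{The repair.} Suppose $v_k$ sits at position $j\geq 1$ of $z_k$ and $\textup{cut}(p_1)=\varepsilon$. Then the induced position of $v_{k-1}$ is $\sum_{i<j}(z_k)_i-p_1\geq a(j-1)\geq j-1$. When $\textup{cut}(p_1)=b$ it is at least $aj\geq j$. So positions drop by at most one per integration step. Start from $v_h=\varepsilon$ at any position $\geq h$ of any smooth sequence, for instance $\kappa_{a,b}$. Then a left neighbour is available at every step. This also removes the need for your unproven claim that every height-$1$ word occurs in $\kappa_{a,b}$ or $\kappa_{b,a}$.

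This bookkeeping is exactly what the paper's invariant $\lvert v\rvert\geq a+b$ does. With it added, your argument is a valid proof, and shorter than the paper's.
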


This does not prove \Cref{mixedCf} but it motivates even more the study of the language $\mathcal{C}_f^\infty$ and its complexity $p_{\mathcal{C}_f^\infty}(n)$. Next, in \Cref{bispecials} we extend the description of the bispecial smooth words in \cite{Weakley} to any binary alphabet, then in \Cref{bounds} we prove the lower bound over any binary alphabet and the upper bound over even alphabets.

\begin{theorem}\label{Th2}
	Let $\mathpzc{A} = \{a,b\}$ and let $\rho = \frac{\log(a+b)}{\log\left(\frac{a+b}{2}\right)}$.
	
	(i) We have $p_{\mathcal{C}_f^\infty}(n) = \Omega(n^\rho)$.

	(ii) If $\mathpzc{A}$ is even, then we have $p_{\mathcal{C}_f^\infty}(n) = \Theta(n^\rho)$.
\end{theorem}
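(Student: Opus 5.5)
The plan follows Weakley's bispecial approach through \Cref{pfrombsw}. Since $s(n)=s(0)+\sum_{m<n}(bs(m)-bw(m))$ and $p(n)=p(0)+\sum_{m<n}s(m)$, it suffices to control the lengths and types of the bispecial smooth words. The intended result is: the number of strong bispecials minus the number of weak ones, summed over lengths up to $n$, behaves like $n^{\rho-1}$ from below (and, for even alphabets, from above). The heuristic is that the bispecials of height $i+1$ arise from those of height $i$ by an ``antiderivative'' operation. Each bispecial $w$ has $a+b$ or so smooth bispecial antiderivatives, namely words $u$ with $\mathcal{D}_f(u)=w$ and left and right extensions preserved. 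Their lengths are roughly $\frac{a+b}{2}|w|$, because a letter $a$ or $b$ of $w$ expands to a run of that length, averaging $(a+b)/2$ per letter. Counting $(a+b)^i$ strong bispecials of length about $((a+b)/2)^i$ then gives $s(n)\approx n^{\log(a+b)/\log((a+b)/2)-1}$ and $p(n)\approx n^\rho$.

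\textbf{Step 1 (description of bispecials).} I will show the following. If $u$ is bispecial of height $\ge$ some small constant, then $u$ begins and ends with complete runs forced by both extensions. Hence $\mathcal{D}_f(u)$ is bispecial, and the type is preserved: strong stays strong, weak stays weak, neutral stays neutral. Conversely, each bispecial $w$ gives a bounded, explicit family of bispecial $u$ with $\mathcal{D}_f(u)=w$, indexed by the choice of the first letter of $u$ (two choices). I will also check how the border runs are adjusted by $\mathrm{cut}$, i.e.\ how many letters are added to the borders. With this, every bispecial comes with a tree of ancestors down to a finite set of small bispecials of known types. A key point is that strong bispecials generate at least as many strong descendants at each height as the branching predicts, and only finitely many weak roots exist, so the weak ones cannot outnumber them.

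\textbf{Step 2 (lower bound, any alphabet).} The length of $u$ is $\sum_j w_j$ plus $O(b)$, i.e.\ $|u|= a|w|_a+b|w|_b+O(1)\le b|w|+O(1)$. That is too crude, so the count must pair complements instead. Because $\mathcal{C}_f^\infty$ is closed under complement, the two antiderivatives starting with different letters produce words $u$ and $\overline{u}$-related lengths whose descendants average out. I will take a family of strong bispecials obtained by alternately applying the antiderivative with the starting letter chosen so that the next derivative balances letter counts. Then at least $2^i$ strong bispecials of height $i$ have length at most $C((a+b)/2)^i$. Combining this with the tree structure of Step 1, the strong bispecials dominate the weak ones in the partial sums. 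This gives $s(n)\ge c\,n^{\rho-1}$ and hence (i). The length control is the \emph{main obstacle}: it needs a lemma that for any smooth $w$, its antiderivatives $u$ and $u'$ (starting with $a$ and with $b$) satisfy $|u|+|u'|=(a+b)|w|+O(1)$. I expect this to follow from the fact that the two antiderivatives have complementary letter patterns.

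\textbf{Step 3 (upper bound, even alphabets).} When $a$ and $b$ are even, every run of $u$ has even length, apart from the borders. Consequently the letters of $\mathcal{D}_f(u)$ occupy positions whose parity is fixed, and I expect the lemma $\bigl||u|_a-|u|_b\bigr|=O(1)$ for every smooth $u$. I will prove this by induction on height using that every run has even length, so runs pair up letters evenly. This gives $|u|\ge\frac{a+b}{2}|\mathcal{D}_f(u)|-O(1)$ for all smooth $u$, so every bispecial of height $i$ has length $\ge c((a+b)/2)^i$. Together with the bound from Step 1 on the number of bispecials per height (at most $C(a+b)^i$, counting all types), this gives $\sum_{m<n}bs(m)\le Cn^{\rho-1}$ and $s(n)=O(n^{\rho-1})$. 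Summing yields $p(n)=O(n^\rho)$, which combined with (i) gives (ii).
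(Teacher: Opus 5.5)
\textbf{Gap 1: the key lemma of Step 2 is false.} The two bispecial antiderivatives of $w$ are $\mathcal{P}_{f,a}(w)$ and $\mathcal{P}_{f,b}(w)=\overline{\mathcal{P}_{f,a}(w)}$. They are complements, so both have length $a|w|_a+b|w|_b+2a$, and $|u|+|u'|=(a+b)|w|+(b-a)(|w|_b-|w|_a)+4a$. This is $(a+b)|w|+O(1)$ only when $|w|_b-|w|_a$ is bounded. That fails over odd alphabets (e.g.\ for prefixes of $\kappa_{3,1}$, whose letter frequencies are not $1/2$) and is not known over mixed ones. For the same reason, the choice of starting letter never changes the length of a child, so picking one child per step controls only boundedly many words per generation, not $2^i$.

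The missing idea is to pair across complements of the \emph{derivative}. Each generation $T_i$ is closed under complement, and $|\mathcal{P}_{f,a}(w)|+|\mathcal{P}_{f,b}(\overline{w})|=(a+b)|w|+4a$ exactly. This gives $\sum_{u\in T_i}|u|=c(a+b)^i-c2^i$ with $c=\frac{4a}{a+b-2}$. That is only an average, so you still have to turn it into a count. One way is Markov's inequality: at least $2^{i-1}$ words of $T_i$ have length at most $2c(\frac{a+b}{2})^i$. The paper instead uses convexity: $p_i(n)=\sum_{u\in T_i}\max(0,n-|u|-1)\ge\max\bigl(0,(n+c-1)2^i-c(a+b)^i\bigr)$ for all $n$.

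\textbf{Gap 2: the weak bispecials are not controlled.} When $a<b-1$, each weak root $a^{b-1}$, $b^{b-1}$ generates $2^i$ weak bispecials in generation $i$, exactly as many as each strong root. So ``finitely many weak roots'' gives no domination. Minimal lengths can also grow more slowly than $(\frac{a+b}{2})^i$ (over odd alphabets only like $\lambda^i$ with $\lambda<\frac{a+b}{2}$), so there is no reason for the weak ones of length $<n$ to be negligible. What you need is a length comparison. Since $a^a\sqsubset_p a^{b-1}$, $b^a\sqsubset_p b^{b-1}$, and $\mathcal{P}_{f,c}$ preserves prefixes, each weak bispecial of generation $i$ has a distinct strong bispecial of generation $i$ as a prefix, hence is at least as long. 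This gives $p^{(3)}\le p^{(1)}$ and $p^{(4)}\le p^{(2)}$, and therefore $p_{\mathcal{C}_f^\infty}(n)\ge 1+n+p(n)$.

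\textbf{Gap 3: your counts are inconsistent.} By your own Step 1, each bispecial has exactly two bispecial antiderivatives, so there are $O(2^i)$ bispecials per generation, not $(a+b)^i$. With $C(a+b)^i$ words of length at least $c(\frac{a+b}{2})^i$, Step 3 would only give $\sum_{m<n}bs(m)=O(n^\rho)$ and hence $p_{\mathcal{C}_f^\infty}(n)=O(n^{\rho+1})$. It is the count $2^i$ that produces $n^{\log 2/\log\frac{a+b}{2}}=n^{\rho-1}$.

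With that fixed, Step 3 is essentially the paper's argument: a lower bound on $\ell_i$, then $p_i(n)\le 2^i\max(0,n-\ell_i-1)$. One remaining point: iterating $|u|\ge\frac{a+b}{2}|\mathcal{D}_f(u)|-O(1)$ from the root $\varepsilon$ gives no positive bound by itself. Use instead the exact identity $|\mathcal{P}_{f,c}(w)|=\frac{a+b}{2}|w|+2a$, valid because $|w|_a=|w|_b$ for every $w\in T$ over an even alphabet. This yields $\ell_i=c(\frac{a+b}{2})^i-c$.
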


With the same techniques, we obtain a new upper bound over odd alphabets.

\begin{theorem}\label{Th3}
	Let $\mathpzc{A} = \{a,b\}$ be an odd alphabet and let $\zeta := \frac{\log(2\lambda)}{\log(\lambda)}$ where
	$$\lambda := \begin{cases} \frac{1+\sqrt{2b-1}}{2} \textrm{ if } a = 1, \\ \textrm{the dominant root of } X^3 - \frac{a+b}{2} X^2 + \frac{(b-a)^2}{4} \textrm{ otherwise.} \end{cases}$$
	
	Then we have $p_{\mathcal{C}_f^\infty}(n) = \mathcal{O}(n^\zeta)$.
\end{theorem}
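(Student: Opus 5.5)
The plan for \Cref{Th3} is to compute the complexity from the bispecial smooth words via \Cref{pfrombsw}, as for \Cref{Th2}. The bound will follow from two estimates: an upper bound on the number of strong bispecial words of each height, and a lower bound on their lengths.

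The first step is structural, relying on the description of \Cref{bispecials}. A bispecial smooth word $u$ of height $h+1$ is recovered from the bispecial word $\mathcal{D}_f(u)$ of height $h$ by an \emph{antiderivative} step, and this step preserves the type or only transforms it in a controlled way. Each bispecial word of height $h$ should have at most $2$ bispecial antiderivatives of height $h+1$, which are essentially the choices of the first letter. Hence the number of strong bispecial words of height $h$ is at most $C\,2^h$, and the number of weak ones is nonnegative. Then \Cref{pfrombsw} gives $b(n) \le bs(n)$. Summing twice,
\[
p_{\mathcal{C}_f^\infty}(n) \le C' + \sum_{m<n} (n-m)\, bs(m) \le C'' \, n \cdot \#\{\text{strong bispecials of length} < n\}.
\]
If strong bispecials of height $h$ have length at least $c\lambda^h$, then heights up to $\log n/\log\lambda$ contribute about $2^{\log n/\log\lambda} = n^{\log 2/\log\lambda}$. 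Multiplying by $n$ gives exactly $n^{\zeta}$ with $\zeta = \log(2\lambda)/\log\lambda$, which matches the formula in the statement.

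The second, main step is proving the length lower bound $\lvert u\rvert \ge c\lambda^h$ for bispecials of height $h$ over odd alphabets. Over even alphabets, every run of the antiderivative of a bispecial word can be controlled so that the length multiplies by exactly $(a+b)/2$ on average. Over odd alphabets, a single antiderivative step can produce a word consisting mostly of $a$-runs, which shrinks the growth. The plan is to track a vector of statistics along the antiderivative chain, for instance (length, number of $a$'s, number of $b$'s) or the lengths at heights $h$, $h-1$, $h-2$. Then I derive a linear recurrence inequality of the form
\[
\ell_{h+3} \ge \tfrac{a+b}{2}\,\ell_{h+2} - \tfrac{(b-a)^2}{4}\,\ell_h .
\]
Its characteristic polynomial is $X^3 - \frac{a+b}{2}X^2 + \frac{(b-a)^2}{4}$, whose dominant root is $\lambda$.

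The key local lemma is as follows. For odd letters, a run of $a$'s of length $a$ followed by a run of length $b$ in the derivative gives a block whose length deviates from $(a+b)/2$ times its number of runs by at most $(b-a)/2$. Moreover, an excess of $a$'s at one height forces an excess of $b$'s two heights lower, through the parity of run boundaries; this is where the squared term $(b-a)^2/4$ comes from. For $a=1$, runs of length $1$ degenerate: $\mathcal{D}_f$ removes them from the border and a letter $1$ is its own run. The recurrence then collapses to second order, $\ell_{h+2} \ge \ell_{h+1} + \frac{b-1}{2}\ell_h$, with root $\frac{1+\sqrt{2b-1}}{2}$. I would handle this case separately.

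I expect the main obstacle to be establishing this frequency-compensation inequality rigorously. It requires a careful case analysis of how runs at the two ends of bispecial words, where $\mathrm{cut}$ acts, interact across three consecutive heights, while ensuring the boundary errors stay bounded by a constant absorbed into $c$. Once it is established, the final step is to solve the linear recurrence by comparing with its dominant root, and then to combine the two estimates as in the summation above.
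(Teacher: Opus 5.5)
Your first step is correct and is essentially the paper's route through \Cref{p3p} and \Cref{upperBound}: drop the weak bispecials, use that each generation contains $2^i$ strong bispecials, and sum twice. Your length bound must then also cover the strong bispecials rooted at $a^a$ and $b^a$. These have odd length, so a parity computation designed for $T$ does not apply to them verbatim; the paper handles them by noting that each word of $T^{(1)}_i$ or $T^{(2)}_i$ has the corresponding word of $T_i$ as a prefix, by \Cref{prefixP}~(ii).

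The genuine gap is the second step. It is the whole content of the theorem, you leave it heuristic, and the statistics you propose cannot carry it. The triple (length, number of $a$'s, number of $b$'s) does not close under the primitives. For $v$ of even length, $\lvert \mathcal{P}_{f,a}(v)\rvert_b = \frac{a}{2}\lvert v\rvert + (b-a)\lvert v\rvert_{b,1} + a$ involves the $b$'s of $v$ at odd positions (at even positions for $\mathcal{P}_{f,b}$), not $\lvert v\rvert_b$. The missing idea is to count letters by the parity of their position. Since every run has odd length, the runs of $\mathcal{P}_{f,c}(v)$ start alternately at odd and even positions. This yields $\lvert \mathcal{P}_{f,a}(v)\rvert_{b,1} = \frac{a-1}{4}\lvert v\rvert + \frac{b-a}{2}\lvert v\rvert_{b,1} + \frac{a-1}{2}$ and $\lvert \mathcal{P}_{f,b}(v)\rvert_{b,1} = \frac{a+1}{4}\lvert v\rvert + \frac{b-a}{2}\lvert v\rvert_{b,0} + \frac{a+1}{2}$. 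This is an affine system with nonnegative coefficients (the matrix $M$ acting on $V(u)$). A monotone induction then shows that the minimum over $T_i$ is attained by $\mathcal{P}^i_{f,a}(\varepsilon)$. Perron--Frobenius for $M$ gives \Cref{oddli}; when $a=1$ one uses its block $R$ instead, since the eigenvalue $\frac{b-1}{2}$ of $M$ is never excited, even though it exceeds $\lambda$ once $b \geq 7$.

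Your final step also fails as described. Write $s = \frac{a+b}{2}$ and $t = \frac{(b-a)^2}{4}$. The inequality $\ell_{h+3} \geq s\ell_{h+2} - t\ell_h$ has a negative coefficient, so it cannot be solved by comparison:
\begin{itemize}
\item an induction hypothesis $\ell_h \geq c\lambda^h$ only lowers the right-hand side;
\item an inequality proved word by word does not even pass to the minima $\ell_h$, because $-t\lvert u_h\rvert \leq -t\ell_h$ goes the wrong way.
\end{itemize}
In fact the parity formulas give an exact identity along every chain $u_{h+1} = \mathcal{P}_{f,c_{h+1}}(u_h)$ with $u_0 = \varepsilon$: $\lvert u_{h+3}\rvert = s\lvert u_{h+2}\rvert \pm t(\lvert u_h\rvert + 2) + 2a$, with the minus sign exactly when $c_{h+1} = c_{h+2}$. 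So your guessed recurrence is right in form. Turning it into $\lvert u_h\rvert \geq C\lambda^h$ needs an extra device, for example the following. Set $y_h := \lvert u_{h+2}\rvert + (\lambda - s)\lvert u_{h+1}\rvert + \lambda(\lambda - s)\lvert u_h\rvert$. One checks $y_{h+1} \geq \lambda y_h + 2a - 2t$ with $y_0 = 2a(1+\lambda)$ large enough; this uses $\lambda > \frac{b-a}{2}$, valid for $a>1$. Then $\lvert u_{h+2}\rvert \geq y_h$ because $\lambda < s$. Your $a = 1$ recurrence has nonnegative coefficients, so comparison does work there, but it rests on the same unproved parity computation. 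Keeping the full nonnegative system, as the paper does, avoids all of this.
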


\begin{remark}
	We do not provide an explicit formula for $\zeta$ but we claim that our upper bound improves \Cref{firstPoly}. In the following table we compute and compare the exponents $\rho$ provided by \Cref{CfTheta}, $\zeta$ provided by \Cref{Th3} and $\beta$ provided by \Cref{firstPoly} for various odd alphabets.
	\begin{center}
		\begin{tabular}{|l|c|c|c|c|c|c|c|c|c|}
		\hline
  		$\mathpzc{A}$ & $\{1,3\}$ & $\{1,5\}$ & $\{3,5\}$ & $\{1,7\}$ & $\{3,7\}$ & $\{5,7\}$ & $\{1,9\}$ & $\{3,9\}$ & $\{5,9\}$ \\
		\hline
		\hline
		$\rho$ & 2 & 1.63 & 1.5 & 1.5 & 1.431 & 1.387 & 1.431 & 1.387 & 1.356 \\
		\hline
		$\zeta$ & 2.44 & 2 & 1.51 & 1.831 & 1.44 & 1.388 & 1.74 & 1.397 & 1.358 \\
		\hline
		$\beta$ & 7.129 & 7.658 & 2.96 & 8.193 & 3.195 & 2.6 & 8.565 & 3.383 & 2.734 \\
		\hline
		\end{tabular}
	\end{center}
\end{remark}

\section{Factors of smooth sequences}\label{factors}

In this section we prove \Cref{Th1}. To do so, we use a right version of smooth words called \textit{r-smooth words}, originally introduced in \cite{BMP}, which will play the role of prefixes in smooth sequences.

\subsection{r-smooth words}

\begin{definition}
	Over a binary alphabet $\mathpzc{A}$, we say that a word $u \in \mathpzc{A}^*$ is \textup{right-derivable} if it is factorized as $a_1^{p_ 1} ... a_n^{p_ n}$ where $p_i \in \mathpzc{A}$ for all $i \in \llbracket 1,n-1 \rrbracket$ and $p_n \in \llbracket 1,b \rrbracket$. We write $\mathcal{C}^1_r$ the set of right-derivable words and we define the \textup{right-derivative} as the map
	\begin{align*}
		\mathcal{D}_r~:~~~~~~~~~~ \mathcal{C}^1_r & \longrightarrow \mathpzc{A}^* \\
    	a_1^{p_1}...a_n^{p_n} & \longmapsto p_1~...~p_{n-1}~\textup{cut}(p_n)
	\end{align*}
	where the function $\textup{cut}$ is the same as in $\mathcal{D}_f$ and $\mathcal{D}_r(a_1^{p_1}) = \textup{cut}(p_1)$.
\end{definition}

\begin{example}
	Over $\{1,2\}$, we have $\mathcal{D}_r(\varepsilon) = \mathcal{D}_r(2) = \varepsilon$, $\mathcal{D}_r(21) = 1$ and $\mathcal{D}_r(211) = 12$.
\end{example}

\begin{definition}
	For $n \geq 2$, we define by induction the set of words that are right-derivable $n$ times
	\begin{equation*}
		\mathcal{C}_r^n := \left\{u \in \mathcal{C}^1_r ~\middle |~ \mathcal{D}_r(x) \in \mathcal{C}_r^{n-1} \right\}.
	\end{equation*}
	We then define the set of \textup{r-smooth words}, that is the set of words that are infinitely right-derivable
	\begin{equation*}
		\mathcal{C}_r^\infty := \displaystyle \bigcap_{n \geq 1} \mathcal{C}_r^n.
	\end{equation*}
\end{definition}

Similarly to $\mathcal{C}_f^\infty$, one can easily prove by induction that $u \sqsubset_p v \in \mathcal{C}_r^\infty$ implies $u \in \mathcal{C}_r^\infty$, and that the language $\mathcal{C}_r^\infty$ is right-extendable.

\subsection{Prefixes and factors of smooth sequences}

Let us prove \Cref{Th1} using the fact that $\mathcal{C}_r^\infty$ is exactly the set of prefixes of smooth sequences.

\begin{proof}[Proof of \Cref*{Th1}]
	Let $\mathpzc{A} = \{a,b\}$ be a binary alphabet. We already observed that $\mathscr L(\mathcal{C}^\infty) \subseteq \mathcal{C}_f^\infty$, so it remains to prove that $\mathcal{C}_f^\infty \subseteq \mathscr{L}(\mathcal{C}^\infty)$.
	
	The first step is to show that every smooth word is factor of an r-smooth word. For that we are going to show that, for all $u \in \mathcal{C}_f^\infty$, there exists $v \in \mathpzc{A}^*$ such that $\lvert v \rvert \geq a+b$ and $vu \in \mathcal{C}_r^\infty$. We proceed by induction on the height of $u$: first, if $u$ has height $0$, i.e., $u = \varepsilon$, then the word $v = a^b b^a$ suffices. Now let $h \geq 0$ be such that, for all $u \in \mathcal{C}_f^\infty$ of height $h$, there exists $v \in \mathpzc{A}^*$ such that $\lvert v \rvert \geq a+b$ and $vu \in \mathcal{C}_r^\infty$. Let $u \in \mathcal{C}_f^\infty$ be of height $h+1$ and consider its canonical factorization $u = a_1^{p_1} ... a_n^{p_n}$ with $n \geq 1$. The word $\mathcal{D}_f(u)$ is smooth and has height $h$ so, by hypothesis, there exists $v \in \mathpzc{A}^*$ such that $\lvert v \rvert \geq a+b$ and $v \mathcal{D}_f(u) \in \mathcal{C}_r^\infty$, and we can write $v = v_1 ... v_\ell$ with $\ell \geq a+b$. In particular, $v \in \mathcal{C}_r^\infty$ and it is long enough to contain at least $a$ ocurrences of the letter $b$. We deduce that
	$$\displaystyle\sum_{i=1}^\ell v_i = a\left\lvert v \right\rvert_a + b\lvert v \rvert_b = a\left(\ell - \lvert v \rvert_b\right) + b\lvert v \rvert_b = a\ell + (b-a)\lvert v \rvert_b \geq a(a+b) + (b-a)a \geq (a+b)+a.$$
	
	If $p_1 \leq a$, we have $\mathcal{D}_f(u) = p_2 ... p_{n-1} \textup{cut}(p_n)$ and we set $w := c_1^{v_1} ... c_\ell^{v_\ell} c_{\ell+1}^{p_2} ... c_{\ell+n-1}^{p_n}$ where $c_{i+1} = \overline{c_i}$ and $c_{\ell+1} = a_2$. We first observe that $w = v' u$ where ${v' := c_1^{v_1} ... c_{\ell-1}^{v_{\ell-1}} c_\ell^{v_\ell-p_1}}$ and that $\left\lvert v' \right\rvert = \left(\displaystyle\sum_{i=1}^\ell v_i\right) - p_1 \geq (a+b+a)-a = a+b$. We also observe that $w \in \mathcal{C}^1_r$ and that $\mathcal{D}_r(w) = v \mathcal{D}_f(u)$, which implies that $w = v'u \in \mathcal{C}_r^\infty$.
	
	If $p_1 > a$, we have $\mathcal{D}_f(u) = b p_2 ... p_{n-1} \textup{cut}(p_n)$ and we set $w := c_1^{v_1} ... c_\ell^{v_\ell} c_{\ell+1}^b c_{\ell+2}^{p_2} ... c_{\ell+n}^{p_n}$ where $c_{i+1} = \overline{c_i}$ and $c_{\ell+1} = a_1$. We first observe that $w = v' u$ where ${v' := c_1^{v_1} ... c_\ell^{v_\ell} c_{\ell+1}^{b-p_1}}$ and that $\left\lvert v' \right\rvert \geq \displaystyle\sum_{i=1}^\ell v_i \geq a+b+a$. We also observe that $w \in \mathcal{C}^1_r$ and that $\mathcal{D}_r(w) = v \mathcal{D}_f(u)$, which implies that $w = v'u \in \mathcal{C}_r^\infty$.
	
	The second step is to show that a sequence $x \in \mathpzc{A}^\mathbb{N}$ such that all its prefixes are r-smooth is smooth. We show by induction on $n$ that, for all $n \geq 1$, a sequence $x \in \mathpzc{A}^\mathbb{N}$ such that all its prefixes are r-smooth satisfies $x \in \mathcal{C}^n$. If $n = 1$, we easily notice that a sequence such that all its prefixes are r-smooth is derivable. Now let $n \geq 1$ be such that a sequence $x \in \mathpzc{A}^\mathbb{N}$ such that all its prefixes are r-smooth satisfies $x \in \mathcal{C}^n$, and let $x \in \mathpzc{A}^\mathbb{N}$ be such that all its prefixes are r-smooth. In particular, the words $\mathcal{D}_r(x_{[0,bn)})$ are r-smooth for all $n \geq 0$ and are prefixes of strictly increasing length of the sequence $\mathcal{D}(x)$. We deduce that all prefixes of $\mathcal{D}(x)$ are r-smooth, and by hypothesis we have $\mathcal{D}(x) \in \mathcal{C}^n$ so $x \in \mathcal{C}^{n+1}$.
	
	The third step is to show that every r-smooth word is prefix of a smooth sequence. If $v$ is r-smooth, the fact that the language $\mathcal{C}_r^\infty$ is right-extendable provides a sequence of r-smooth words of strictly increasing length, starting with $v$, which are all prefixes of a sequence $x \in \mathpzc{A}^\mathbb{N}$. By construction we have $v \sqsubset_p x$, and all prefixes of $x$ are r-smooth so $x$ is smooth.
	
	Finally, if $u \in \mathcal{C}_f^\infty$, we proved that there exists $v \in \mathcal{C}_r^\infty$ such that $u \sqsubset v$ and that there exists $x \in \mathcal{C}^\infty$ such that $v \sqsubset_p x$, so we have $u \sqsubset x$. Therefore, we have $\mathcal{C}_f^\infty \subseteq \mathscr{L}(\mathcal{C}^\infty)$.
\end{proof}

\section{Bispecial smooth words}\label{bispecials}

In this section we detail how to compute $p_{\mathcal{C}_f^\infty}(n)$ from the bispecial smooth words, extending the work of Weakley \cite{Weakley} to any binary alphabet. First we split the strong and weak bispecial smooth words into five families, each forming an infinite binary tree introduced in \Cref{families}. Then we show in \Cref{p3p} that it suffices to study a single family of strong bispecial smooth words to bound $p_{\mathcal{C}_f^\infty}(n)$.

\subsection{Finite primitives}

In order to determine the bispecial smooth words, we introduce the notion of \textit{finite primitives}. This starts with the observation that the two longest words with derivative $u \in \mathpzc{A}^n$ are canonically factorized as ${a_0^a a_1^{u_1} a_2^{u_2} ... a_n^{u_n} a_{n+1}^a}$.

\begin{definition}
	Over a binary alphabet $\mathpzc{A} = \{a,b\}$, the \textup{finite primitives} are the maps
	\begin{align*}
		\mathcal{P}_{f,a}~:~~~~~~~~\mathpzc{A}^* & \longrightarrow \mathpzc{A}^+ & \mathcal{P}_{f,b}~: \mathpzc{A}^* & \longrightarrow \mathpzc{A}^+ \\
    	u_1...u_n & \longmapsto \begin{cases} a^ab^{u_1}a^{u_2}...a^{u_n} b^a \textrm{ if } n \textrm{ is even,} \\
   		a^ab^{u_1}a^{u_2}...b^{u_n} a^a \textrm{ otherwise.} \end{cases} & u & \longmapsto \overline{\mathcal{P}_{f,a}(u)}
	\end{align*}
\end{definition}

\begin{example}
	Over $\{1,2\}$, we have $\mathcal{P}_{f,1}(2) = 1221$ and $\mathcal{P}_{f,2}(2) = 2112$.
	
	Over $\{2,4\}$, we have $\mathcal{P}_{f,2}(42) = 2244442244$.
\end{example}

We make two important remarks.

\begin{remark}\label{prefixP}
	(i) If $u \in \mathpzc{A}^*$ and $c \in \mathpzc{A}$, then $\mathcal{D}_f\left(\mathcal{P}_{f,c}(u)\right) = u$. We deduce that, if $u$ is smooth then $\mathcal{P}_{f,c}(u)$ is also smooth.
	
	(ii) If $u \sqsubset_p v \in \mathpzc{A}^*$ and $c \in \mathpzc{A}$, then $\mathcal{P}_{f,c}(u) \sqsubset_p \mathcal{P}_{f,c}(v)$.
\end{remark}

\subsection{Reduction of bispecial smooth words}

Now we show that bispecial smooth words can be expressed with finite primitives and reduced to short bispecial words with the same type.

\begin{definition}
	If $u \in \mathpzc{A}^*$ is canonically factorized as $a_1^{p_1} ... a_n^{p_n}$, we define its \textup{factorized length} as $\| u \| := n$. Then we say that $u$ is \textup{short} if $\| u \| \leq 1$, otherwise we say that $u$ is \textup{long}.
\end{definition}

\begin{lemma}\label{Dbispecial}
	Let $u$ be a bispecial smooth word over $\mathpzc{A} = \{a,b\}$.
	
	(i) If $u$ is long, then $u = \mathcal{P}_{f,u_1}\left(\mathcal{D}_f(u)\right)$ and $\mathcal{D}_f(u)$ is also bispecial, of the same type.
	
	(ii) The words $\mathcal{P}_{f,a}(u)$ and $\mathcal{P}_{f,b}(u)$ are also bispecial, of the same type.
\end{lemma}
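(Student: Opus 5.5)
The plan is to reduce both parts to one combinatorial fact about how an occurrence of $u$ inside a smooth word is forced by its derivative. The key tool is the following local statement, which I will prove first.

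Let $w$ be a smooth word, write it canonically as $a_1^{p_1}\cdots a_n^{p_n}$, and extend it by one letter $c\in\mathpzc{A}$ on either side. Whether $cw$ (resp.\ $wc$) is smooth depends only on the last run of $w$ and on $\mathcal{D}_f$ of the extended word:
\begin{itemize}
\item if $c$ starts a new run, the derivative is unchanged unless $p_1$ crosses a threshold;
\item if $c$ lengthens the run $p_1$ to $p_1+1$, then $\textup{cut}(p_1+1)$ either stays equal or becomes $b$ exactly when $p_1=a$, and becomes illegal when $p_1=b$.
\end{itemize}

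\textbf{Part (i).} Let $u$ be long, so $\|u\|\ge2$, with canonical factorization $a_1^{p_1}\cdots a_n^{p_n}$. First I show $p_1=p_n=a$.

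Suppose $p_1>a$. Both $a_1u$ and $\overline{a_1}u$ are smooth. For $\overline{a_1}u$, the run $a_1^{p_1}$ becomes an interior run, and its derivative starts with $p_1$, which must lie in $\mathpzc{A}$, so $p_1=b$. But then $a_1u$ contains $a_1^{b+1}$, which is not derivable. Now suppose $p_1<a$. Then $a_1u$ has first run $p_1+1\le a$, so $\mathcal{D}_f(a_1u)=\mathcal{D}_f(u)$. Meanwhile $\overline{a_1}u$ makes $p_1\notin\mathpzc{A}$ interior, which is not derivable. Hence $p_1=a$, and symmetrically $p_n=a$.

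It follows that $\mathcal{D}_f(u)=p_2\cdots p_{n-1}$ and $u=\mathcal{P}_{f,u_1}(\mathcal{D}_f(u))$, by checking the parity convention in the definition of $\mathcal{P}_{f,c}$.

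Next I compute the four extensions, since the left and right ends are independent once $\|u\|\ge2$:
\begin{itemize}
\item $\mathcal{D}_f(\overline{a_1}u)=a\,\mathcal{D}_f(u)$;
\item $\mathcal{D}_f(a_1u)$ has first run $a+1$, so $\textup{cut}(a+1)=b$ and $\mathcal{D}_f(a_1u)=b\,\mathcal{D}_f(u)$.
\end{itemize}
The right side works the same way. So for $x,y\in\mathpzc{A}$, each two-sided extension of $u$ corresponds to exactly one two-sided extension of $\mathcal{D}_f(u)$, namely $x\,\mathcal{D}_f(u)\,y$, via the map sending $x$ to $a$ or $b$ according to whether it lengthens the run.

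Smoothness of an extension is equivalent to smoothness of its derivative, because every such extension is derivable: its end runs are $a$ or $a+1\le b$, and $\mathcal{D}_f$ contracts. This bijection preserves the bispecial property and the number of two-sided extensions, hence the type.

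\textbf{Part (ii).} Let $v=\mathcal{P}_{f,c}(u)$. This is long, with end runs equal to $a$, and $\mathcal{D}_f(v)=u$ by \Cref{prefixP}. The same computation applies to $v$ directly: each extension $xvy$ is derivable with derivative $x'uy'$, where $x',y'$ range over all of $\mathpzc{A}$ bijectively. So the extension sets of $v$ and $u$ correspond, and $v$ is bispecial of the same type.

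\textbf{Main obstacle.} The delicate step is the case analysis forcing $p_1=p_n=a$, in particular the boundary case $b=a+1$, where $\textup{cut}$ thresholds are tight. I also need care with the degenerate $n=2$ case, where $\mathcal{D}_f(u)=\varepsilon$, and with the parity conventions of $\mathcal{P}_{f,c}$ when matching $u_1$.
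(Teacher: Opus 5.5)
Your proposal is correct and follows essentially the same route as the paper: you force $p_1=p_n=a$ from the two one-sided extensions at each end, then transfer smoothness through the four identities $\mathcal{D}_f(xuy)=x'\,\mathcal{D}_f(u)\,y'$, which are valid because every such extension is derivable. You apply these to $u$ in (i) and to $\mathcal{P}_{f,c}(u)$ in (ii), and factoriality and extendability of $\mathcal{C}_f^\infty$ implicitly carry you from two-sided to one-sided extensions.

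The only slips are cosmetic. Your preliminary bullet saying the derivative is ``unchanged'' when $c$ starts a new run is wrong as stated: if $p_1=a$ it gains a leading $a$. Also, the end runs of $xuy$ are $1$ or $a+1$, not $a$ or $a+1$. Neither slip is actually used, since your explicit computations $\mathcal{D}_f(\overline{a_1}u)=a\,\mathcal{D}_f(u)$ and $\mathcal{D}_f(a_1u)=b\,\mathcal{D}_f(u)$ are correct.
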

\begin{proof}
	\textit{(i)} Let $a_1^{p_1} ... a_n^{p_n}$ be the canonical factorization of $u$, where $n = \| u \| \geq 2$. We have $a_1^{p_1+1} ... a_n^{p_n} = a_1 u \in \mathcal{C}_f^\infty$ so $p_1 + 1 \leq b$, and $\overline{a_1} a_1^{p_1} ... a_n^{p_n} = \overline{a_1} u \in \mathcal{C}_f^\infty$ so $p_1 \in \{a,b\}$, therefore $p_1 = a$; and symmetrically we obtain $p_n = a$. We deduce that $u = a_1^a a_2^{p_2} ... a_{n-1}^{p_{n-1}} a_n^a$ and $\mathcal{D}_f(u) = p_2 ... p_{n-1}$, which yields $u = \mathcal{P}_{f,u_1}\left(\mathcal{D}_f(u)\right)$ with the fact that $a_1 = u_1$.
	
	Next, the fact that $\mathcal{D}_f(u)$ is also a bispecial smooth word with the same type as $u$ relies on the following equivalences:
	\begin{align}
		a_1 u a_n \in \mathcal{C}_f^\infty \iff b \mathcal{D}_f(u) b \in \mathcal{C}_f^\infty \label{eq1} \\
		a_1 u \overline{a_n} \in \mathcal{C}_f^\infty \iff b \mathcal{D}_f(u) a \in \mathcal{C}_f^\infty \label{eq2} \\
		\overline{a_1} u a_n \in \mathcal{C}_f^\infty \iff a \mathcal{D}_f(u) b \in \mathcal{C}_f^\infty \label{eq3} \\
		\overline{a_1} u \overline{a_n} \in \mathcal{C}_f^\infty \iff a \mathcal{D}_f(u) a \in \mathcal{C}_f^\infty \label{eq4}
	\end{align}
	These equivalences rely on the following trivial equivalence: if $v \in \mathpzc{A}^*$, then $v \in \mathcal{C}_f^\infty$ if and only if $v \in \mathcal{C}_f^1$ and $\mathcal{D}_f(v) \in \mathcal{C}_f^\infty$. Then ${\mathcal{D}_f(a_1 u a_n) = b \mathcal{D}_f(u) b}$ yields (\ref*{eq1}), ${\mathcal{D}_f(a_1 u \overline{a_n}) = b \mathcal{D}_f(u) a}$ yields (\ref*{eq2}), ${\mathcal{D}_f(\overline{a_1} u a_n) = a \mathcal{D}_f(u) b}$ yields (\ref*{eq3}) and $\mathcal{D}_f(\overline{a_1} u \overline{a_n}) = a \mathcal{D}_f(u) a$ yields (\ref*{eq4}).
	
	Now we have $\overline{a_1} u \in \mathcal{C}_f^\infty$ because $u$ is bispecial, so either $\overline{a_1} u a_n$ or $\overline{a_1} u \overline{a_n} \in \mathcal{C}_f^\infty$, and then (\ref*{eq1}) and (\ref*{eq2}) yield $b \mathcal{D}_f(u) \in \mathcal{C}_f^\infty$. Similarly we show that $a \mathcal{D}_f(u)$, $\mathcal{D}_f(u)a$ and $\mathcal{D}_f(u) b \in \mathcal{C}_f^\infty$, which means that $\mathcal{D}_f(u)$ is bispecial. Finally, (\ref*{eq1}), (\ref*{eq2}), (\ref*{eq3}) and (\ref*{eq4}) ensure that $\mathcal{D}_f(u)$ has the same type as $u$.
	
	\textit{(ii)} Let $v = \mathcal{P}_{f,c}(u)$ with $c \in \mathpzc{A}$. In particular, we have $\| v \| \geq 2$ so, if $v = a_1^{p_1} ... a_n^{p_n}$, then (\ref*{eq1}), (\ref*{eq2}), (\ref*{eq3}) and (\ref*{eq4}) hold for $v$ and $\mathcal{D}_f(v) = u$. Now we have $a u \in \mathcal{C}_f^\infty$ because $u$ is bispecial, so either $a u a$ or $a u b \in \mathcal{C}_f^\infty$, and then (\ref*{eq3}) and (\ref*{eq4}) yield $\overline{a_1} v \in \mathcal{C}_f^\infty$. Similarly we show that $a_1 v$, $v a_1$ and $v \overline{a_1} \in \mathcal{C}_f^\infty$, which means that $v$ is bispecial. Finally, (\ref*{eq1}), (\ref*{eq2}), (\ref*{eq3}) and (\ref*{eq4}) ensure that $v$ has the same type as $u$.
\end{proof}

This means that every bispecial smooth word can be derived down to a short bispecial smooth word that we call its \textit{root} and that has the same type.

\subsection{Families of bispecial smooth words}

Now let us determine the strong and weak roots of bispecial smooth words.

\begin{lemma}
	Let $\mathpzc{A} = \{a,b\}$ be a binary alphabet.
	
	(i) If $a = b-1$, then $\varepsilon$ is the unique short strong bispecial smooth word and there is no short weak bispecial smooth word.
	
	(ii) If $a < b-1$, then the short strong bispecial smooth words are $\varepsilon$, $a^a$ and $b^a$; and the short weak bispecial smooth words are $a^{b-1}$ and $b^{b-1}$.
\end{lemma}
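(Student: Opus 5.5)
Short words are $\varepsilon$ and the powers $c^k$ with $c \in \{a,b\}$, $k \geq 1$. The plan is to go through these candidates one at a time. For each, we determine which of $cu$, $uc$, $cuc'$ are smooth using two facts. First, a word is smooth if and only if it is derivable and its finite derivative is smooth. Second, since $\mathcal{D}_f$ is contracting, we can always finish by computing derivatives down to a short word of known status. Throughout we use that $\mathcal{C}_f^\infty$ is factorial and closed under complement, since complementing a word does not change its derivative. So it suffices to treat $c = a$ and transfer the conclusions to $b$.

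\textbf{The empty word.} The four words $aa, ab, ba, bb$ are all smooth. For each we check $\mathcal{D}_f$ directly: $ab$ and $ba$ derive to $\varepsilon$. The word $aa$ derives to $\textup{cut}(2)$, which is $\varepsilon$ or $b$, and both are smooth; the same holds for $bb$. Hence $\varepsilon$ is strong. Note that $aa$ is derivable because $2 \leq b$.

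\textbf{Powers $c^k$.} Bispeciality needs both $c^{k+1}$ and $\overline{c}c^k$ to be smooth.
\begin{itemize}
\item Derivability of $c^{k+1}$ forces $k+1 \leq b$.
\item Consider $\overline{c}c^k\overline{c}$. If it is smooth, it is derivable, so $k \in \{a,b\}$, because the run $c^k$ is interior.
\item Otherwise only $\overline{c}c^kc$ can extend $\overline{c}c^k$ on the right. Then, since $\overline{c}c^k$ must also be right special, $\overline{c}c^k\overline{c}$ would be needed after all.
\end{itemize}
More carefully: bispeciality means $\overline{c}c^k$ and $c^k\overline{c}$ are smooth. Then $\overline{c}c^k$ has some right extension, and $c^k\overline{c}$ has some left extension. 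The run-length constraints give that at least one of $k \in \mathpzc{A}$ or $k+1 \in \mathpzc{A}$ holds, with $k+1 \leq b$.

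So the candidates are $k = a$, $k = b-1$, and (when $a = b-1$) these coincide. We also consider $k = a-1$ for $a \geq 2$. For $k = a-1$, the word $\overline{c}c^{a-1}\overline{c}$ has an interior run of length $a-1 \notin \mathpzc{A}$. Its only other route is via $c^a$, which then forces both $cc^{a-1}c$ and $\overline{c}c^{a-1}c$ or $cc^{a-1}\overline{c}$. We check these against the derivative, which is what decides the outcome; I expect them to fail to give bispeciality or to reduce to the case $k = a$.

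\textbf{The case $k = a$.} The four extensions $\overline{c}c^a\overline{c}$, $\overline{c}c^{a}c$, $cc^a\overline{c}$ and $cc^ac$ have derivatives $a$, $\textup{cut}(a+1)$, $\textup{cut}(a+1)$ and $\textup{cut}(a+2)$ respectively, possibly with a letter prepended. The precise forms are:
\begin{itemize}
\item $\mathcal{D}_f(\overline{c}c^a\overline{c}) = a$;
\item $\mathcal{D}_f(\overline{c}c^{a+1}) = \textup{cut}(a+1) = b$;
\item $\mathcal{D}_f(c^{a+2})$, which requires $a+2 \leq b$.
\end{itemize}
So $c^a$ is strong exactly when $a+2 \leq b$, i.e.\ $a < b-1$. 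When $a = b-1$, the word $c^{a+2}$ is not derivable, and $c^a = c^{b-1}$ has three extensions, so it is neutral.

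\textbf{The case $k = b-1$ with $a < b-1$.} The word $c^{b+1}$ is not derivable. The word $\overline{c}c^{b-1}\overline{c}$ has interior run $b-1 \notin \mathpzc{A}$, so it is not derivable either. This leaves exactly $\overline{c}c^b$ and $c^b\overline{c}$. Their derivatives are $\textup{cut}(b) = b$, or a two-letter word of the form $x\,b$ with the first run $b$ interior, which is smooth. So $c^{b-1}$ is weak.

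\textbf{Remaining $k$.} All other $k$ fail bispeciality. Either $k+1 > b$, or neither $k$ nor $k+1$ lies in $\mathpzc{A}$. In the latter case $\overline{c}c^k$ has no right extension other than through interior runs of illegal length.

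\textbf{Expected difficulty.} The main care point is the exhaustive bookkeeping of which $k$ survive, especially $k = a-1$ when $a \geq 2$ and $k = a$ coinciding with $b-1$. Each case is a one- or two-step derivative computation.
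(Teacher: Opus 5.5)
\textbf{There is a genuine gap.} Your strategy matches the paper's table: check the four bi-extensions of each $c^k$ with $0\le k\le b$. Your treatment of $\varepsilon$, of $k=a$, of $k=b-1$ with $a<b-1$, and of $k\ge b$ is correct. The gap is in the step that discards all other exponents, which rests on a false claim.

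You assert that bispeciality of $c^k$ forces $k\in\mathpzc{A}$ or $k+1\in\mathpzc{A}$. In ``Remaining $k$'' you add that otherwise $\overline{c}c^k$ has no admissible right extension. This confuses interior and boundary runs.
\begin{itemize}
\item The extension $\overline{c}c^kc=\overline{c}c^{k+1}$ ends with the \emph{boundary} run $c^{k+1}$. It is therefore derivable as soon as $k+1\le b$, and $\mathcal{D}_f(\overline{c}c^{k+1})=\textup{cut}(k+1)$ is smooth.
\item Only interior runs must have length in $\mathpzc{A}$.
\item Your third bullet also misuses the definition: bispeciality of $c^k$ does not require $\overline{c}c^k$ to be right special.
\end{itemize}
In fact, for every $k\in\llbracket 1,a-1\rrbracket\cup\llbracket a+1,b-2\rrbracket$ the word $c^k$ \emph{is} bispecial. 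For example, over $\{1,4\}$ the word $c^2$ has the smooth extensions $c^4$, $\overline{c}c^3$ and $c^3\overline{c}$. Likewise, your expectation that $k=a-1$ (for $a\ge 2$) ``fails to give bispeciality'' is false, and you never actually carry out that case.

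Your final list of strong and weak short words is still correct, because these $c^k$ are neutral. However, your argument does not establish it. Having wrongly concluded that these words are not bispecial, you never check that they are not \emph{weak}, which would require exhibiting at least three bi-extensions.

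The missing step is the paper's ``neutral'' row. For such $k$ one has $k+2\le b$, so:
\begin{itemize}
\item $c^{k+2}$, $\overline{c}c^{k+1}$ and $c^{k+1}\overline{c}$ are derivable, with derivatives $\textup{cut}(k+2)$ or $\textup{cut}(k+1)$, which lie in $\{\varepsilon,b\}$ and are smooth;
\item $\overline{c}c^k\overline{c}$ has an interior run of length $k\notin\mathpzc{A}$, so it is not derivable.
\end{itemize}
Hence $c^k$ has exactly three bi-extensions and is neither strong nor weak. This fix is also needed in part (i) whenever $a\ge 2$, for instance $c^1$ over $\{2,3\}$. With it, your case analysis is complete and coincides with the paper's proof.
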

\begin{proof}
	The short smooth words are all the words of the form $c^n$ with $c \in \mathpzc{A}$ and $n \in \llbracket 0,b \rrbracket$, so we directly check their bi-extensions in $\mathcal{C}_f^\infty$.
	\begin{center}
		\begin{tabular}{|l|c|c|c|c|c|}
  		\hline
  		& $c c^n c$ & $c c^n \overline{c}$ & $\overline{c} c^n c$ & $\overline{c} c^n \overline{c}$ & type of $c^n$ \\
  		\hline
  		\hline
  		$n = 0$ & $\in C_f^\infty$ & $\in C_f^\infty$ & $\in C_f^\infty$ & $\in C_f^\infty$ & strong \\
  		\hline
  		$n \in \llbracket 1,a-1 \rrbracket \cup \llbracket a+1,b-2 \rrbracket$ & $\in C_f^\infty$ & $\in C_f^\infty$ & $\in C_f^\infty$ & $\notin C_f^\infty$ & neutral \\
  		\hline
  		$n = a = b-1$ & $\notin C_f^\infty$ & $\in C_f^\infty$ & $\in C_f^\infty$ & $\in C_f^\infty$ & neutral \\
  		\hline
  		$n = a < b-1$ & $\in C_f^\infty$ & $\in C_f^\infty$ & $\in C_f^\infty$ & $\in C_f^\infty$ & strong \\
  		\hline
  		$n = b-1 > a$ & $\notin C_f^\infty$ & $\in C_f^\infty$ & $\in C_f^\infty$ & $\notin C_f^\infty$ & weak \\
  		\hline
  		$n = b$ & $\notin C_f^\infty$ & $\notin C_f^\infty$ & $\notin C_f^\infty$ & $\in C_f^\infty$ & not bispecial \\
  		\hline
		\end{tabular}
	\end{center}
\end{proof}

\begin{definition}\label{families}
	Over a binary alphabet $\mathpzc{A} = \{a,b\}$, we define the infinite binary tree $T$ as follows: its root is $\varepsilon$ and the two children of a vertex $u \in \mathpzc{A}^*$ are $\mathcal{P}_{f,a}(u)$ and $\mathcal{P}_{f,b}(u)$. By \Cref{Dbispecial}, the vertices of $T$ are all the strong bispecial smooth words of root $\varepsilon$.
	
	If $a < b-1$, we define in the same way the following infinite binary trees:
	\begin{itemize}
		\item $T^{(1)}$ is the tree of strong bispecial smooth words of root $a^a$,
		\item $T^{(2)}$ is the tree of strong bispecial smooth words of root $b^a$,
		\item $T^{(3)}$ is the tree of weak bispecial smooth words of root $a^{b-1}$,
		\item $T^{(4)}$ is the tree of weak bispecial smooth words of root $b^{b-1}$.
	\end{itemize}
\end{definition}

\begin{example}
	Over alphabets $\mathpzc{A} = \{a,a+1\}$, the strong bispecial smooth words form the tree $T$. We display it over $\{1,2\}$ in \Cref{G012}.
	\begin{figure}[H]
		\begin{center}
		\begin{tikzpicture}
 		\node at (4,0) {$\varepsilon$} [grow = 0, sibling distance = 2.5cm]
 			child {node {21} [sibling distance = 1.25cm, level distance = 2cm]
 				child {node{21121} [sibling distance = 0.625cm, level distance = 2cm]
 					child{node{211212212} [sibling distance = 0.25cm, level distance = 1.5cm]
 						child{node{...}}
 						child{node{...}}}
 					child{node{122121121} [sibling distance = 0.25cm, level distance = 1.5cm]
 						child{node{...}}
 						child{node{...}}}}
    			child {node{12212} [sibling distance = 0.625cm, level distance = 2cm]
    				child {node{2122112112} [sibling distance = 0.25cm, level distance = 1.5cm]
    					child{node{...}}
 						child{node{...}}}
    				child {node{1211221221} [sibling distance = 0.25cm, level distance = 1.5cm]
    					child{node{...}}
 						child{node{...}}}}}
    		child {node {12} [sibling distance = 1.25cm, level distance = 2cm]
    			child {node{21221} [sibling distance = 0.625cm, level distance = 2cm]
    				child {node{2112112212} [sibling distance = 0.25cm, level distance = 1.5cm]
    					child{node{...}}
 						child{node{...}}}
    				child {node{1221221121} [sibling distance = 0.25cm, level distance = 1.5cm]
    					child{node{...}}
 						child{node{...}}}}
    			child {node{12112} [sibling distance = 0.625cm, level distance = 2cm]
    				child{node{212212112} [sibling distance = 0.25cm, level distance = 1.5cm]
    					child{node{...}}
 						child{node{...}}}
    				child{node{121121221} [sibling distance = 0.25cm, level distance = 1.5cm]
    					child{node{...}}
 						child{node{...}}}}};
		\end{tikzpicture}
		\end{center}
		\caption{The tree $T$ over $\{1,2\}$.}
		\label{G012}
	\end{figure}
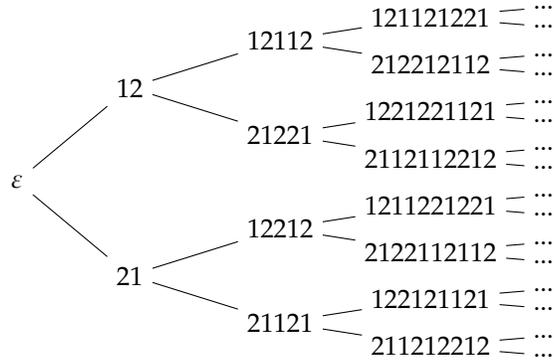
\end{example}

\subsection{Computation of the complexity}

In this subsection we apply the method explained in \Cref{complexity} to compute $p_{\mathcal{C}_f^\infty}(n)$. We show that it suffices to study $T$ in order to bound $p_{\mathcal{C}_f^\infty}(n)$, even when $a < b-1$ and the strong and weak bispecial smooth words are spread among $T$, $T^{(1)}$, $T^{(2)}$, $T^{(3)}$ and $T^{(4)}$.

\begin{definition}\label{Tcontribution}
	For all $i \geq 0$, we define $T_i$ as the $i$-th generation of $T$, i.e., the set of strong bispecial smooth words whose $i$-th ancestor in $T$ is $\varepsilon$. In particular, $\left\lvert T_i \right\rvert = 2^i$.
	
	In order to count the contribution of the bispecial smooth words of $T$ to the complexity of $\mathcal{C}_f^\infty$, we define the following numbers for all $n,i \geq 0$:
	\begin{align*}
		& b_i(n) := \left\lvert T_i \cap \mathpzc{A}^n \right\rvert, & & s_i(n) := \displaystyle \sum_{m=0}^{n-1} b_i(m), \\
		& p_i(n) := \displaystyle \sum_{m=0}^{n-1} s_i(m), & & p(n) := \displaystyle \sum_{i=0}^\infty p_i(n).
	\end{align*}
	Note that, for a fixed $n$, the definition of $p(n)$ is a finite sum since $b_i(n)$ is eventually zero when $i$ grows.
\end{definition}

\begin{proposition}\label{p3p}
	Over a binary alphabet $\mathpzc{A} = \{a,b\}$, for all $n \geq 0$ we have
	\begin{equation*}
		1+n+p(n) \leq p_{\mathcal{C}_f^\infty}(n) \leq 1+n+3p(n).
	\end{equation*}
\end{proposition}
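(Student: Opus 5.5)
We start from \Cref{pfrombsw} applied to the factorial, left- and right-extendable language $\mathcal{C}_f^\infty$. We have $p_{\mathcal{C}_f^\infty}(0)=1$ and $s(0)=p_{\mathcal{C}_f^\infty}(1)-1=1$. Summing the identity $b(m)=bs(m)-bw(m)$ twice then gives
\begin{equation*}
p_{\mathcal{C}_f^\infty}(n)=1+n+\sum_{k=0}^{n-1}\sum_{m=0}^{k-1}\bigl(bs(m)-bw(m)\bigr).
\end{equation*}
By \Cref{Dbispecial} and the lemma on short bispecial words, the strong bispecial smooth words are exactly the vertices of $T$, together with those of $T^{(1)}$ and $T^{(2)}$ when $a<b-1$. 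The weak ones are exactly the vertices of $T^{(3)}$ and $T^{(4)}$. Writing $p^{(j)}(n)$ for the quantity defined as in \Cref{Tcontribution} with $T^{(j)}$ in place of $T$, the double sum becomes
\begin{equation*}
p(n)+p^{(1)}(n)+p^{(2)}(n)-p^{(3)}(n)-p^{(4)}(n).
\end{equation*}
When $a=b-1$ only $T$ occurs, so both inequalities hold with equality $p_{\mathcal{C}_f^\infty}(n)=1+n+p(n)$.

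For $a<b-1$ the plan is to compare the trees vertex by vertex through the natural bijection. Every tree is generated from its root by the same maps $\mathcal{P}_{f,a},\mathcal{P}_{f,b}$. So for a word $w=c_1\dots c_i$ of choices, the vertex of $T^{(j)}$ is $\mathcal{P}_{f,c_1}\circ\dots\circ\mathcal{P}_{f,c_i}(r_j)$, where $r_j$ is the root. Since the $p$-functions are monotone in the lengths (a vertex of length $\ell$ contributes $\max(0,n-\ell)(n-\ell-1)/2$-type terms, which are nonincreasing in $\ell$), it suffices to compare lengths of corresponding vertices.

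The key length facts come from \Cref{prefixP}(ii). Since $\varepsilon\sqsubset_p a^a$ and $\varepsilon \sqsubset_p b^a$, each vertex of $T^{(1)},T^{(2)}$ is at least as long as the corresponding vertex of $T$, so $p^{(1)}(n),p^{(2)}(n)\le p(n)$. This gives the upper bound $1+n+3p(n)$ once we drop the negative weak terms. For the lower bound we need $p^{(3)}+p^{(4)}\le p^{(1)}+p^{(2)}$. For this I would use $a^a\sqsubset_p a^{b-1}$ and $b^a\sqsubset_p b^{b-1}$, together with \Cref{prefixP}(ii) iterated, to get that each vertex of $T^{(3)}$ (resp. $T^{(4)}$) is at least as long as the corresponding vertex of $T^{(1)}$ (resp. $T^{(2)}$). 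Then $p^{(3)}\le p^{(1)}$ and $p^{(4)}\le p^{(2)}$, which yields $p_{\mathcal{C}_f^\infty}(n)\ge 1+n+p(n)$.

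The main point to check carefully is the monotonicity step: for a fixed generation, an injective length-nondecreasing correspondence between vertex sets gives the inequality for $p_i$. This follows since $p_i(n)=\sum_{u\in T_i}\binom{n-|u|}{2}^+$ up to the exact indexing convention, a nonincreasing function of each $|u|$. I also need to confirm the base value $s(0)=1$: the alphabet has two letters and $\varepsilon$ is the unique word of length 0. Beyond that, the argument is bookkeeping.
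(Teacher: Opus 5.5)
Your proposal is correct and follows essentially the same route as the paper. It uses the same five-tree decomposition of $p_{\mathcal{C}_f^\infty}(n)$ obtained by summing \Cref{pfrombsw} twice, and the same generation-wise bijections between trees justified by \Cref{prefixP}~(ii) applied to $\varepsilon \sqsubset_p a^a$, $\varepsilon \sqsubset_p b^a$, $a^a \sqsubset_p a^{b-1}$ and $b^a \sqsubset_p b^{b-1}$.

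One small correction: a vertex $u$ contributes exactly $\max(0,n-\lvert u\rvert-1)$ to $p_i(n)$, since two summations of the second difference give a linear term, not a binomial coefficient. This is still nonincreasing in $\lvert u\rvert$, so your monotonicity step is unaffected.
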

\begin{proof}
	Let $s(n)$ be the first finite difference and let $b(n)$ be the second finite difference of $p_{\mathcal{C}_f^\infty}(n)$, as defined in \Cref{complexity}.
	
	If $a = b-1$, there is no weak bispecial smooth word and every strong bispecial smooth word belongs to $T$, so, with \Cref{pfrombsw} and by summation, for all $n \geq 0$ we have
	\begin{align*}
		& b(n) = \left\lvert T \cap \mathpzc{A}^n \right\rvert = \sum_{i = 0}^\infty b_i(n), \\
		& s(n) = s(0) + \displaystyle \sum_{m=0}^{n-1} b(m) = 1+\sum_{i = 0}^\infty \sum_{m=0}^{n-1} b_i(m) = 1+\sum_{i = 0}^\infty s_i(m), \\
		& p_{\mathpzc{C}_f^\infty}(n) = 1 + \displaystyle \sum_{m=0}^{n-1} s(m) = 1+n+\sum_{i = 0}^\infty \sum_{m=0}^{n-1} s_i(m) = 1+n+\sum_{i = 0}^\infty p_i(n) = 1+n+p(n).
	\end{align*}
	
	If $a < b-1$, for $t \in \llbracket 1,4 \rrbracket$ we define the $i$-th generation of $T^{(t)}$ as $T^{(t)}_i$, and the numbers $b^{(t)}_i(n)$, $s^{(t)}_i(n)$, $p^{(t)}_i(n)$ and $p^{(t)}(n)$ exactly as for $T$. Now the strong bispecial smooth words belong to $T$, $T^{(1)}$ or $T^{(2)}$ and the weak ones belong to $T^{(3)}$ or $T^{(4)}$, so with \Cref{pfrombsw} and by summation, we obtain
	\begin{equation}\label{split5}
		p_{\mathcal{C}_f^\infty}(n) = 1 + n + p(n) + p^{(1)}(n) + p^{(2)}(n) - p^{(3)}(n) - p^{(4)}(n).
	\end{equation}
	Now, we observe that $\varepsilon \sqsubset_p a^a$ so a quick induction on $i$ with \Cref{prefixP} \textit{(ii)} shows that every word $u \in T_i$ is the prefix of a word $f(u) \in T^{(1)}_i$ for all $i \geq 0$. By construction, the map $f : T \rightarrow T^{(1)}$ is an injection, and we have $\left\lvert T_i \right\rvert = \left\lvert T^{(1)}_i \right\rvert = 2^i$ for all $i \geq 0$ so $f$ is a bijection and we have $T^{(1)}_i = f(T_i)$ for all $i \geq 0$. Then, by summation, for all $i \geq 0$ and all $n \geq 0$ we have
	\begin{align*}
		p^{(1)}_i(n) = & \displaystyle\sum_{u \in T^{(1)}_i} \max\left(0,n-\lvert u \rvert - 1\right) = \sum_{u \in T_i} \max\left(0,n - \left\lvert f(u) \right\rvert - 1\right) \\
		\leq & \sum_{u \in T_i} \max(0,n-\lvert u \rvert - 1) = p_i(n),
	\end{align*}
	which yields $p^{(1)}(n) \leq p(n)$ for all $n \geq 0$. We also have $\varepsilon \sqsubset_p b^a$, ${a^a \sqsubset_p a^{b-1}}$ and ${b^a \sqsubset_p b^{b-1}}$ so the same argument yields $p^{(2)}(n) \leq p(n)$, ${p^{(3)}(n) \leq p^{(1)}(n)}$ and ${p^{(4)}(n) \leq p^{(2)}(n)}$ for all $n \geq 0$. Finally, combining these inequalities with \Cref{split5} yields the result.
\end{proof}

\section{Bounds of the complexity}\label{bounds}

In this section we prove \Cref{Th2,Th3}. Thanks to \Cref{p3p}, it remains to study the asymptotics of $p(n)$ from the tree $T$. To do so, we define the minimal and maximal length of the words in each generation of $T$.

\begin{definition}\label{liLi}
	If $\mathpzc{A} = \{a,b\}$ is a binary alphabet, define $\ell_i := \displaystyle\min_{u \in T_i} \lvert u \rvert$ and $L_i := \displaystyle\max_{u \in T_i} \lvert u \rvert$.
\end{definition}

Recall that $\ell_i$ and $L_i$ already appeared in \Cref{almost12}, and that Weakley aimed to show that $L_{i-1} \leq \ell_i$ for all ${i \geq 1}$, which would help proving \Cref{CfTheta}. We get around this difficult task by considering the average length in each $T_i$ and we obtain the conjectured lower bound. And for the upper bounds we look for a lower bound of $\ell_i$.

\subsection{The lower bound over any alphabet}

Let us prove the conjectured lower bound over any binary alphabet, this is \Cref{Th2} \textit{(i)}.

\begin{proof}[Proof of \Cref*{Th2} \textit{(i)}]
	The first step is to compute the average length of each generation of $T$. For $i \geq 0$, set ${f(i) := \displaystyle \sum_{m = \ell_i}^{L_i} m b_i(m) = \displaystyle\sum_{u \in T_i} \lvert u \rvert}$ where $\ell_i$ and $L_i$ are defined in \Cref{liLi}. We define the map
	\begin{align*}
		g : T \backslash \{\varepsilon\} & \longrightarrow T \backslash \{\varepsilon\} \\
		u & \longmapsto \mathcal{P}_{f,\overline{u_1}}\left(\overline{\mathcal{D}_f(u)}\right)
	\end{align*}
	Also, for $i \geq 1$ we define the sets ${T_{i,a} := \{u \in T_i \mid u_1 = a\}}$ and ${T_{i,b} := \{u \in T_i \mid u_1 = b\}}$, and we are going to show that $T_{i,b} = g(T_{i,a})$. First, if $u \in T_{i,a}$, then $\mathcal{D}_f(u) \in T_{i-1}$ and $\overline{\mathcal{D}_f(u)} \in T_{i-1}$ so $g(u) = \mathcal{P}_{f,b}\left(\overline{\mathcal{D}_f(u)}\right) \in T_{i,b}$, therefore $g(T_{i,a}) \subseteq T_{i,b}$. Symmetrically, if $u \in T_{i,b}$ we have $g(u) \in T_{i,a}$, and
	$${g(u)) = \mathcal{P}_{f,b}\left(\overline{\mathcal{D}_f\left(\mathcal{P}_{f,a}\left(\overline{\mathcal{D}_f(u)}\right)\right)}\right) = \mathcal{P}_{f,b}\left(\overline{\overline{\mathcal{D}_f(u)}}\right) = u},$$
	therefore $T_{i,b} \subseteq g(T_{i,a})$. We deduce that ${f(i) = \displaystyle \sum_{u \in T_{i,a}} \left[\lvert u \rvert + \left\lvert g(u) \right\rvert\right]}$ for all $i \geq 1$. Now, for all $u \in T_{i,a}$, we have
	\begin{align*}
		& \lvert u \rvert = a \left\lvert\mathcal{D}_f(u)\right\rvert_a + b \left\lvert\mathcal{D}_f(u)\right\rvert_b + 2a, \\
		& \left\lvert g(u) \right\rvert = a \left\lvert\mathcal{D}_f\left(g(u)\right)\right\rvert_a + b \left\lvert\mathcal{D}_f\left(g(u)\right)\right\rvert_b + 2a = a \left\lvert\mathcal{D}_f(u) \right\rvert_b + b \left\lvert\mathcal{D}_f(u)\right\rvert_a + 2a.
	\end{align*}
	Moreover, we have $\mathcal{D}_f(T_{i,a}) = T_{i-1}$ and $\lvert T_{i,a} \rvert = 2^{i-1}$ for all $i \geq 1$. Then, for all $i \geq 0$, we have
	\begin{align*}
		f(i+1) = & \displaystyle \sum_{u \in T_{i+1,a}} \left[\lvert u \rvert + \left\lvert g(u)\right \rvert\right] = \sum_{u \in T_{i+1,a}} \left[(a+b)\left\lvert\mathcal{D}_f(u)\right\rvert + 4a\right] \\
		= & ~(a+b) \sum_{u \in T_{i+1,a}} \left\lvert\mathcal{D}_f(u)\right\rvert + \sum_{u \in T_{i+1,a}} 4a = (a+b) \sum_{u \in T_i} \lvert u\rvert + 4a 2^i = (a+b) f(i) + 4a 2^i.
	\end{align*}
	Then, with $f(0) = 0$ and by setting $c := \frac{4a}{a+b-2}$, a quick induction yields
	\begin{equation}\label{avTi}
		f(i) = c (a+b)^i - c2^i
	\end{equation}
	for all $i \geq 0$.
	
	The second step is to compute $p_i(n)$ for large enough $n$. For $i \geq 0$, if $n < \ell_i$ then $b_i(n) = s_i(n) = p_i(n) = 0$, and $s_i(\ell_i) = 0$. Also, if $n > L_i$ then $b_i(n) = 0$ and $s_i(n) = \lvert T_i \rvert = 2^i$. Then, for all $n > L_i$, we get
	\begin{align*}
		p_i(n) = &~ \displaystyle \sum_{m = 0}^{\ell_i} s_i(m) + \sum_{m = \ell_i+1}^{L_i} s_i(m) + \sum_{m = L_i+1}^{n-1} s_i(m) = 0 + \sum_{m = \ell_i+1}^{L_i} \sum_{m' = \ell_i}^{m-1} b_i(m') + \sum_{m = L_i+1}^{n-1} 2^i \\
		= &~ \sum_{m = \ell_i}^{L_i} (L_i-m)b_i(m) + \left(n-L_i-1\right)2^i = L_i \sum_{m = \ell_i}^{L_i} b_i(m) - f(i) + \left(n-L_i-1\right) 2^i \\
		= &~ L_i 2^i - c(a+b)^i + c2^i + \left(n-L_i-1\right)2^i \textrm{ thanks to \Cref{avTi}},
	\end{align*}
	which yields
	\begin{equation}\label{pi}
		p_i(n) = (n+c-1)2^i - c(a+b)^i
	\end{equation}
	for all $n > L_i$.
	
	The last step is to deduce the lower bound of $p_{\mathcal{C}_f^\infty}(n)$. If we fix $i \geq 0$, we notice that the function $p_i : \mathbb{N} \rightarrow \mathbb{N}$ is convex since its second finite difference is $b_i : \mathbb{N} \rightarrow \mathbb{N}$, and \Cref{pi} states that it is eventually affine, so we deduce that $p_i(n) \geq \max \left(0, \left(n+c-1\right)2^i - c(a+b)^i \right)$ for all $i \geq 0$ and all $n \geq 0$. Then, for $n \geq 1$, the set $I_n := \left\{i \geq 0 ~\middle|~ \left(n+c-1\right)2^i \geq c(a+b)^i \right\}$ satisfies $I_n = \llbracket 0,j \rrbracket$ where $j := \left\lfloor \frac{\log(m)}{\log\left(\frac{a+b}{2}\right)} \right\rfloor$ and $m := \frac{n+c-1}{c} \geq 1$, and we get
	
	\begin{align*}
		p(n) = &~ \displaystyle \sum_{i=0}^\infty p_i(n) \geq \displaystyle \sum_{i \in I_n} \left[\left(n+c-1\right)2^i - c(a+b)^i\right] = \displaystyle \sum_{i=0}^j \left[cm2^i - c(a+b)^i\right] \\
		\geq &~ c \left[m\left(2^{j+1}-1\right) - \frac{(a+b)^{j+1}-1}{a+b-1} \right] = c \left[m2^{j+1} - \frac{(a+b)^{j+1}}{a+b-1}\right] - cm + \frac{c}{a+b-1} \\
		\geq &~ ch(j) - n + 1 - \frac{4a}{a+b-1} \geq ch(j) - n - 1 \textrm{ because } a+b-1 \geq 2a
	\end{align*}
	where $h(x) = m2^{x+1} - \frac{(a+b)^{x+1}}{a+b-1}$. The function $h : \mathbb{R}_{\geq 0} \rightarrow \mathbb{R}$ is derivable and $h'(x) = \frac{m2^{x+1}}{\log(2)} - \frac{(a+b)^{x+1}}{(a+b-1) \log(a+b)}$. We deduce that $h$ is increasing then decreasing, which implies that $h(j) \geq \min\left(h\left(\frac{\log(m)}{\log\left(\frac{a+b}{2}\right)}-1\right), h\left(\frac{\log(m)}{\log\left(\frac{a+b}{2}\right)}\right)\right)$. Now we compute
	\begin{equation*}
		h\left(\frac{\log(m)}{\log\left(\frac{a+b}{2}\right)}-1\right) = m2^{\frac{\log(m)}{\log\left(\frac{a+b}{2}\right)}} - \frac{(a+b)^{\frac{\log(m)}{\log\left(\frac{a+b}{2}\right)}}}{a+b-1} = \left[1-\frac{1}{a+b-1}\right] m^\rho
	\end{equation*}
	and
	\begin{align*}
		h\left(\frac{\log(m)}{\log\left(\frac{a+b}{2}\right)}\right) = &~ m2^{\frac{\log(m)}{\log\left(\frac{a+b}{2}\right)}+1} - \frac{(a+b)^{\frac{\log(m)}{\log\left(\frac{a+b}{2}\right)}+1}}{a+b-1} = 2 m^{\frac{\log(2)}{\log\left(\frac{a+b}{2}\right)}+1} - \frac{(a+b)}{a+b-1} m^\rho \\
		= &~ \left[2-\frac{a+b}{a+b-1}\right] m^\rho = \left[1-\frac{1}{a+b-1}\right] m^\rho.
	\end{align*}
	We deduce that $p(n) \geq c\left[1-\frac{1}{a+b-1}\right] m^\rho - n - 1$ for all $n \geq 1$. If ${c \geq 1}$, then $m \geq \frac{n}{c}$ so \Cref{p3p} yields $p_{\mathcal{C}_f^\infty}(n) \geq c^{1-\rho}\left[1-\frac{1}{a+b-1}\right] n^\rho$ for all ${n \geq 1}$. If $c < 1$, then $m \geq n$ so \Cref{p3p} yields ${p_{\mathcal{C}_f^\infty}(n) \geq c\left[1-\frac{1}{a+b-1}\right] n^\rho}$ for all $n \geq 1$.
\end{proof}

\subsection{An upper bound from $\ell_i$}

We show here that a lower bound of $\ell_i$ provides an upper bound of $p_{\mathcal{C}_f^\infty}(n)$.

\begin{proposition}\label{upperBound}
	Let $\mathpzc{A} = \{a,b\}$ be a binary alphabet. If there exists $\lambda > 1$, $C > 0$ and $D \geq 0$ such that $\ell_i \geq C \lambda^i - D - 1$ for all $i \geq 0$, then we have
	$$p_{\mathcal{C}_f^\infty}(n) = \mathcal{O}(n^\zeta) ~~~~~~~~~~\textrm{ where }~~ \zeta = \frac{\log(2\lambda)}{\log(\lambda)}.$$
\end{proposition}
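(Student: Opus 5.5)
By \Cref{p3p} it suffices to show $p(n) = \mathcal{O}(n^\zeta)$, since the extra term $1+n$ is negligible ($\zeta > 1$). The plan is to bound each generation's contribution $p_i(n)$ crudely and then sum over $i$.

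\textbf{Bound on each $p_i(n)$.} Fix $i \geq 0$. By definition $p_i(n) = \sum_{u \in T_i} \max(0, n - \lvert u \rvert - 1)$, and each $u \in T_i$ satisfies $\lvert u \rvert \geq \ell_i$. Hence:
\begin{itemize}
\item if $n \leq \ell_i + 1$ then $p_i(n) = 0$;
\item otherwise $p_i(n) \leq 2^i (n - \ell_i - 1) \leq 2^i n$, using $\lvert T_i \rvert = 2^i$.
\end{itemize}

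\textbf{Which generations contribute.} By hypothesis, $p_i(n) \neq 0$ forces $n > \ell_i + 1 \geq C\lambda^i - D$, that is, $\lambda^i < (n+D)/C$. The contributing indices therefore form an initial segment $\llbracket 0, j \rrbracket$ with
$$j \leq \frac{\log\left((n+D)/C\right)}{\log(\lambda)}.$$

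\textbf{Summing.} Using $\sum_{i=0}^j 2^i \leq 2^{j+1}$, we get
$$p(n) \leq n \sum_{i=0}^{j} 2^i \leq 2n \cdot 2^{j} \leq 2n \left(\frac{n+D}{C}\right)^{\log 2/\log \lambda}.$$
For $n \geq \max(D,1)$ we have $n + D \leq 2n$, so this is at most a constant times $n^{1 + \log 2/\log\lambda}$. Finally $1 + \log 2/\log\lambda = \log(2\lambda)/\log(\lambda) = \zeta$, which gives the claim.

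\textbf{Main obstacle.} There is no real obstacle here. The only points needing care are that the sum over $i$ is finite, which follows from $\lambda > 1$, and checking the edge cases $n$ small or $C\lambda^i - D \leq 0$. Those cases only affect finitely many $i$ or small $n$, so they are absorbed into the constant.
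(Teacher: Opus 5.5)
Your proposal is correct and follows the paper's proof essentially step by step. Both arguments bound each generation by $p_i(n)\le \max\left(0,2^i(n-\ell_i-1)\right)$, use the hypothesis to restrict to generations with $\lambda^i\le (n+D)/C$, sum the geometric series, and use $n+D\le 2n$ to get $p(n)\le 2(2/C)^{\log 2/\log\lambda}\,n^{\zeta}$, which \Cref{p3p} then transfers to $p_{\mathcal{C}_f^\infty}(n)$.
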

\begin{proof}
	With the notation introduced in \Cref{Tcontribution}, we have $s_i(n) = 0$ if $n \leq \ell_i$ and $s_i(n) \leq \lvert T_i \rvert = 2^i$ otherwise, therefore $p_i(n) \leq \max\left(0,2^i(n-\ell_i-1)\right)$ for all $i \geq 0$. Now, for $n \geq \max(0,C-D)$, the set $I_n := \{i \geq 0 \mid n-\ell_i-1 \geq 0\}$ satisfies $I_n \subseteq \llbracket 0,j \rrbracket$ where $j := \left\lfloor \frac{\log(m)}{\log(\lambda)} \right\rfloor$ and $m := \frac{n+D}{C} \geq 1$, and we get
	
	$$p(n) = \displaystyle\sum_{i=0}^\infty p_i(n) \leq \sum_{i \in I_n} \left(n-\ell_i-1\right) 2^i \leq \sum_{i =0}^j n 2^i = n2^{j+1}-n.$$
	Then, for all $n \geq D$, we have $j \leq \frac{\log(m)}{\log(\lambda)}$ and $m \leq \frac{2n}{C}$ so
	$$p(n) \leq 2n2^\frac{\log(m)}{\log(\lambda)} - n = 2nm^\frac{\log(2)}{\log(\lambda)} - n \leq 2\left(\frac{2}{C}\right)^\frac{\log(2)}{\log(\lambda)} n^\zeta - n.$$
	Finally, \Cref{p3p} yields ${p_{\mathcal{C}_f^\infty}(n) \leq 6\left(\frac{2}{C}\right)^\frac{\log(2)}{\log(\lambda)} n^\zeta - 2n + 1 \leq 6\left(\frac{2}{C}\right)^\frac{\log(2)}{\log(\lambda)} n^\zeta}$ for all ${n \geq \max(1,C-D,D)}$.
\end{proof}

Now it remains to find the best lower bound of $\ell_i$.

\subsection{The upper bound over even alphabets}

Over even alphabets, $\ell_i$ is easy to compute and we deduce the conjectured upper bound of $p_{\mathcal{C}_f^\infty}(n)$.

\begin{lemma}\label{evenli}
	Let $\mathpzc{A} = \{a,b\}$ be an even alphabet and let $c := \frac{4a}{a+b-2}$. Then we have $\ell_i = c \left(\frac{a+b}{2}\right)^i - c$ for all $i \geq 0$.
\end{lemma}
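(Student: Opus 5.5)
The plan is to show that over an even alphabet all words of a given generation $T_i$ have the \emph{same} length. Then $\ell_i = L_i = f(i)/2^i$, and the formula follows either from \Cref{avTi} or from a direct recursion. Indeed, \Cref{avTi} gives $f(i)/2^i = c\left(\frac{a+b}{2}\right)^i - c$, which is exactly the claimed value. So everything reduces to the claim that $|u|$ is constant on $T_i$.

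The key structural observation is that for $i \geq 1$ every $u \in T_i$ has a canonical factorization all of whose exponents are even. This follows by induction from the definition of the finite primitives. We have $u = \mathcal{P}_{f,c}(v)$ with $v \in T_{i-1}$, and the exponents in $\mathcal{P}_{f,c}(v)$ are the outer exponents $a$ together with the letters $v_1,\dots,v_n$ of $v$. Adjacent blocks have distinct letters, so this factorization is canonical. All these exponents lie in $\{a,b\}$ and are therefore even. This holds for any $v$, so the claim does not need the inductive hypothesis itself; it just uses that every element of $T_i$ with $i\ge1$ is a primitive.

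Next I will compute $|\mathcal{P}_{f,c}(v)|$ for a word $v$ that is either empty or has all runs of even length. By definition,
\[ |\mathcal{P}_{f,c}(v)| = 2a + \sum_{j=1}^{|v|} v_j = 2a + a|v|_a + b|v|_b . \]
A run $d^{2k}$ occupies $k$ odd and $k$ even positions of $v$, whatever position it starts at. Hence $v$ contributes equally to odd and even positions, but this parity balance is not needed here. What matters is a different consequence of even runs: the letters of $v$ are counted consistently under complement, which gives $|v|_a = |v|_b = |v|/2$.

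That last equality is the one step I need to check carefully, and I expect it to be the main obstacle. Applying the same run-length computation to $u=\mathcal{P}_{f,c}(v)$, its $a$-count is the sum of the exponents of its $a$-blocks and its $b$-count is the sum of the exponents of its $b$-blocks. Its blocks alternate in letter, and their exponents are $a, v_1, v_2, \dots, v_n, a$. So $|u|_a - |u|_b$ is, up to sign, the alternating sum $a - v_1 + v_2 - \cdots \pm a$. Because the runs of $v$ have even length, consecutive equal letters of $v$ cancel in pairs in this alternating sum. The two outer $a$'s then cancel as well, because $n$ is even: $|v|$ is a sum of even run lengths. Hence $|u|_a = |u|_b$, and by induction the balance $|u|_a = |u|_b$ holds for every $u \in T_i$ with $i \geq 1$. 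It holds trivially for $\varepsilon$.

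Combining these steps, for every $v \in T_i$ and $c \in \mathpzc{A}$ we get
\[ |\mathcal{P}_{f,c}(v)| = \frac{a+b}{2}|v| + 2a . \]
By induction every word of $T_{i+1}$ has length $\ell_{i+1} = \frac{a+b}{2}\ell_i + 2a$, with $\ell_0 = 0$. Solving this affine recursion gives $\ell_i = c\left(\frac{a+b}{2}\right)^i - c$ with $c = \frac{4a}{a+b-2}$. As a cross-check, since $L_i = \ell_i$ this value must also equal $f(i)/2^i$ from \Cref{avTi}, and it does.
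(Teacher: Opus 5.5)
Your proof is correct and follows the paper's argument. Both show that $|u|_a=|u|_b$ for every $u\in T$, so that $|\mathcal{P}_{f,c}(v)|=\frac{a+b}{2}|v|+2a$ makes the length constant on each $T_i$, and then obtain the value from \Cref{avTi} or from the affine recursion. Your even-runs/alternating-sum argument usefully fills in the paper's ``quick induction''. Note, though, that the odd/even position balance of each even run, which you call unnecessary, is exactly what makes that alternating sum vanish.
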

\begin{proof}
	Because $a$ and $b$ are even, a quick induction shows that $\lvert u \rvert_a = \lvert u \rvert_b = \frac{\lvert u \rvert}{2}$ for all $u \in T$. We deduce that every word in $T$ has even length, and we showed with \Cref{avTi} that the average length of words in $T_i$ is $c (a+b)^i - c2^i$ so we deduce that $\ell_i = L_i = c\left(\frac{a+b}{2}\right)^i -c$ for all $i \geq 0$.
\end{proof}

Finally, combining \Cref{Th2} \textit{(i)}, \Cref{upperBound} and \Cref{evenli} yields \Cref{Th2} \textit{(ii)}.

\subsection{An upper bound over odd alphabets}

Over odd alphabets, we compute the best lower bound of $\ell_i$ and deduce a new upper bound of $p_{\mathcal{C}_f^\infty}(n)$.

\newcommand{\bigzero}{\mbox{\normalfont\Large 0}}
\newcommand{\prim}{
    \left(\begin{gathered}
    \tikzpicture[every node/.style={anchor=south west}]
        \node[minimum width=1cm,minimum height=0.4cm] at (0,0.9) {$R$};
        \node[minimum width=1cm,minimum height=0.4cm] at (0.8,0.9) {$S$};
        \node[minimum width=1cm,minimum height=0.4cm] at (0,0.1) {$\bigzero$};
        \node[minimum width=1cm,minimum height=0.4cm] at (0.8,0.05) {$\frac{b-1}{2}$};
        \draw (0.9,0) -- (0.9,1.6);
        \draw (0.1,0.8) -- (1.6,0.8);
    \endtikzpicture
    \end{gathered}\right)
}
\newcommand{\primk}{
    \left(\begin{gathered}
    \tikzpicture[every node/.style={anchor=south west}]
        \node[minimum width=1cm,minimum height=0.4cm] at (0,0.9) {$R^i$};
        \node[minimum width=1cm,minimum height=0.4cm] at (0.8,0.85) {$S_i$};
        \node[minimum width=1cm,minimum height=0.4cm] at (0,0.1) {$\bigzero$};
        \node[minimum width=1cm,minimum height=0.4cm] at (0.8,0) {$\left(\frac{b-1}{2}\right)^i$};
        \draw (0.9,0) -- (0.9,1.6);
        \draw (0.1,0.8) -- (1.6,0.8);
    \endtikzpicture
    \end{gathered}\right)
}

\begin{lemma}\label{oddli}
	Let $\mathpzc{A} = \{a,b\}$ be an odd alphabet and let
	$$\lambda := \begin{cases} \frac{1+\sqrt{2b-1}}{2} \textrm{ if } a = 1, \\ \textrm{the dominant root of } X^3 - \frac{a+b}{2} X^2 + \frac{(b-a)^2}{4} \textrm{ otherwise.} \end{cases}$$
	
	Then there exists $C > 0$ and $D \geq 0$ such that $\ell_i \geq C \lambda^i - D - 1$ for all $i \geq 0$.
\end{lemma}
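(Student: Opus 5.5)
The plan is to set up a linear recursion for three integer statistics of the words of $T$. These are the length, the letter imbalance, and an alternating letter sum. From the recursion I would extract a three-term linear inequality for $\ell_i$ whose characteristic polynomial is $X^3-\frac{a+b}{2}X^2+\frac{(b-a)^2}{4}$. Write $m:=\frac{a+b}{2}$ and $k:=\frac{b-a}{2}$; both are integers because $\mathpzc{A}$ is odd. For a word $v$ set
$$\delta(v):=\lvert v\rvert_b-\lvert v\rvert_a,\qquad \sigma(v):=\sum_j (-1)^j v_j .$$

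The first step is to express these statistics of $u=\mathcal{P}_{f,c}(v)$ through those of $v$. As in the proof of \Cref{Th2} (i), $\lvert u\rvert = a\lvert v\rvert_a+b\lvert v\rvert_b+2a = m\lvert v\rvert + k\,\delta(v)+2a$.

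Next consider $\delta(u)$. The runs of $u$ have lengths $a,v_1,\dots,v_n,a$ and alternate letters. So $\delta(u)$ is, up to an $O(1)$ boundary term and a global sign fixed by $c$, the alternating sum $\pm\sigma(v)$.

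Finally consider $\sigma(u)$. Every run of $u$ has odd length, so the parity of the starting position flips at each run while the letter also flips. Hence the runs of $u$ contribute $+a,-b,+a,-b,\dots$ (or the opposite), independently of the values $v_j$. This gives $\sigma(u)=\pm\bigl(a\lceil \tfrac{n+2}{2}\rceil-b\lfloor\tfrac{n+2}{2}\rfloor\bigr)$, which is $\pm k\,\lvert v\rvert+O(1)$. This is exactly where oddness is used, and I expect the matrix $\prim$ to encode this. The block $R$ acts on $(\lvert u\rvert,\delta,\sigma)$, the column $S$ holds the constants, and the entry $\frac{b-1}{2}$ accounts for the boundary/constant coordinate.

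Composing the three relations along ancestors $u\in T_{i+1}$, $v=\mathcal{D}_f(u)\in T_i$, $w=\mathcal{D}_f(v)$, $x=\mathcal{D}_f(w)$ gives
$$\lvert u\rvert = m\lvert v\rvert \pm k\,\sigma(w)+O(1)=m\lvert v\rvert \pm k^2\lvert x\rvert + O(1).$$
The sign depends on the choices of first letters along the branch. To lower bound $\ell_{i+1}$ I would take the worst sign and use $\lvert v\rvert\ge \ell_i$ and $\lvert x\rvert\le L_{i-2}$.

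This naive version is wasteful, because it mixes $\ell$ with $L$. The better route is to follow one branch and write the exact linear system $X_{i+1}=M_{\epsilon_i}X_i+S$ for $X_i=(\lvert u_i\rvert,\delta(u_i),\sigma(u_i),1)$, where $\epsilon_i$ is the sign choice at step $i$. The sign choices enter only through $\pm$ on the $\delta$ and $\sigma$ coordinates. Conjugating by $\mathrm{diag}(1,\pm1,\pm1,1)$ should reduce the worst case to the single matrix whose characteristic polynomial is $(X-\text{const})\bigl(X^3-mX^2+k^2\bigr)$.

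Then $\ell_i\ge C\lambda^i-D-1$ follows by diagonalising this matrix, that is, by solving the recurrence $y_{i+1}=my_i-k^2y_{i-2}+O(1)$. I would check that the solution, with the initial values coming from $\varepsilon$, has a positive coefficient on the dominant root $\lambda$. I would also check that the polynomial has a real dominant root larger than $1$: for $b>a$ we have $k<m$, and since the polynomial at $X=m$ equals $k^2>0$ there is a root near $m$ that dominates.

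For $a=1$ the cubic factors as $(X-k)(X^2-X-k)$ with $k=\frac{b-1}{2}$. The root of $X^2-X-k$ is $\frac{1+\sqrt{2b-1}}{2}$, and it can be smaller than $k$. So the statement implicitly claims that the $k$-eigencomponent does not appear, or does not help, in the minimal branch. This matches the separate block $\frac{b-1}{2}$ in $\prim$: I would show that the initial vector lies in a $\primk$-invariant complement, so that only the quadratic factor's root survives. At minimum I would verify that the lower bound with $\lambda=\frac{1+\sqrt{2b-1}}{2}$ holds regardless.

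The main obstacle is the sign bookkeeping. I need to show that the minimising branch really corresponds to a fixed matrix, rather than to adversarially alternating signs that could beat $\lambda$. I also need the $a=1$ degeneracy, where I have to identify which eigencomponents are actually excited. The $O(1)$ boundary terms only affect the constants $C$ and $D$.
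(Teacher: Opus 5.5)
Your three update rules are correct. Every word of $T$ has even length; number positions from $1$ and put $s_a=1$, $s_b=-1$. Then $\lvert\mathcal{P}_{f,c}(v)\rvert=m\lvert v\rvert+k\,\delta(v)+2a$, $\delta(\mathcal{P}_{f,c}(v))=-s_c\,\sigma(v)$ and $\sigma(\mathcal{P}_{f,c}(v))=s_c\,k(\lvert v\rvert+2)$. You also correctly identify the cubic and the $a=1$ degeneracy.

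\textbf{The gap.} It is the step you call the main obstacle, and it is the real content of the lemma: bounding the minimum over all $2^i$ words of $T_i$, not just along one branch. Composing your rules gives $\lvert u\rvert=m\lvert v\rvert-s_{c_2}s_{c_3}k^2(\lvert x\rvert+2)+2a$, where $c_2,c_3$ are the first letters of $v$ and $w$. So the coupling is $-k^2$ when two consecutive first letters agree and $+k^2$ when they differ.

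Diagonal conjugation cannot remove this sign. Let $M_c$ be the linear part of your update on $(\lvert u\rvert,\delta,\sigma)$. Requiring $D_{i+1}M_{c_i}D_i^{-1}=M_a$ with $D_i=\mathrm{diag}(1,\pm1,\pm1)$ forces $s_{c_i}=s_{c_{i+1}}$. Hence only constant branches reduce to your single matrix. The other branches genuinely differ: the alternating branch satisfies $y_{i+1}=my_i+k^2(y_{i-2}+2)+2a\geq my_i$ and grows strictly faster than $\lambda^i$, which is why $L_i$ outgrows $\ell_i$. What you need is a comparison \emph{across} branches showing the constant branch is minimal.

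Your fallback cannot supply that comparison. A negative coefficient on $\lvert x\rvert$ needs an upper bound on $\lvert x\rvert$, i.e.\ $L_{i-2}$. Since $L_{i-2}$ grows faster than $\lambda^i$, the resulting bound eventually becomes negative. An inequality $\ell_{i+1}\geq m\ell_i-k^2\ell_{i-2}-O(1)$ cannot be derived without already knowing the minimiser. Even if you had it, you could not iterate it, because a larger $\ell_{i-2}$ weakens it.

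\textbf{The missing idea.} Work in coordinates where the update has nonnegative coefficients, so that minima propagate. Replace $\delta,\sigma$ by $\lvert u\rvert_b=(\lvert u\rvert+\delta(u))/2$ and by $\lvert u\rvert_{b,1}$, the number of $b$'s at odd positions. For even-length $u$, $\sigma(u)=(b-a)(\lvert u\rvert_{b,0}-\lvert u\rvert_{b,1})$ up to sign. The sign $s_c$ then cancels inside $\lvert v\rvert+\delta(v)$, and for $u=\mathcal{P}_{f,c}(v)$ you get
\begin{gather*}
\lvert u\rvert=a\lvert v\rvert+(b-a)\lvert v\rvert_b+2a,\qquad
\lvert u\rvert_b=\tfrac{a}{2}\lvert v\rvert+(b-a)\lvert v\rvert_{b,e}+a,\\
\lvert u\rvert_{b,1}=\tfrac{a\mp1}{4}\lvert v\rvert+\tfrac{b-a}{2}\lvert v\rvert_{b,e}+\tfrac{a\mp1}{2},
\end{gather*}
with $e=1$ and sign $-$ if $c=a$, and $e=0$ and sign $+$ if $c=b$.

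$T_i$ is closed under reversal and consists of even-length words, so $\min_{T_i}\lvert v\rvert_{b,0}=\min_{T_i}\lvert v\rvert_{b,1}$. All coefficients are nonnegative and the $c=b$ constants dominate the $c=a$ ones. Hence the three minima over $T_{i+1}$ are bounded below by the $c=a$ rule applied to the three minima over $T_i$. That is exactly the rule followed by $\mathcal{P}_{f,a}^i(\varepsilon)$, so by induction $\ell_i=\lvert\mathcal{P}_{f,a}^i(\varepsilon)\rvert$.

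Eliminating the two auxiliary minima recovers your cubic $X^3-mX^2+k^2$. When $a>1$, positivity of the $\lambda$-component follows from Perron--Frobenius for the nonnegative primitive matrix. When $a=1$, $\lvert u\rvert_{b,1}$ is identically $0$ along that branch. This is the precise form of your claim that the $k$-eigencomponent is never excited. Note that the isolated entry $\frac{b-1}{2}$ in the paper's block matrix is that eigenvalue $k$, not a constant coordinate.
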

\begin{proof}
	The first step is to show that every word in $T$ has even length. We proceed by induction on the generation of $T$: for $T_0 = \{\varepsilon\}$ this is trivial, now let $i \geq 0$ be such that every word of $T_i$ has even length. If $u \in T_{i+1}$, by construction of $T$ there exists $v \in T_i$ such that $u = \mathcal{P}_{u_1}(v)$, which yields $\lvert u \rvert = a \lvert v \rvert_a + b \lvert v \rvert_b + 2a$. By hypothesis, $v$ has even length, so $\lvert v \rvert_a$ and $\lvert v \rvert_b$ are either both even or both odd, and in the two cases $\lvert u \rvert$ is even.
	
	The second step is to show that $\ell_i = \left\lvert \mathcal{P}^i_{f,a}(\varepsilon) \right\rvert$ for all $i \geq 0$. To do so, we introduce the following notation: if $c \in \mathpzc{A}$ and $u = u_1...u_n \in \mathpzc{A}^n$, then $\lvert u \rvert_{c,1}$ (resp. $\lvert u \rvert_{c,0}$) denotes the number of ocurrences of the letter $c$ at odd (resp. even) indexes in $u$. For $u \in \mathpzc{A}^*$, we define
	
	$V(u) := \left(\begin{matrix}
					\lvert u \rvert_{a,0} \\
					\lvert u \rvert_{a,1} \\
					\lvert u \rvert_{b,0} \\
					\lvert u \rvert_{b,1}
					\end{matrix}\right)$,
	$M := \left(\begin{matrix}
					\frac{a-1}{2} & 0 & \frac{b-1}{2} & 0 \\
					\frac{a+1}{2} & 0 & \frac{b+1}{2} & 0 \\
					0 & \frac{a+1}{2} & 0 & \frac{b+1}{2} \\
					0 & \frac{a-1}{2} & 0 & \frac{b-1}{2}
			\end{matrix}\right)$,
			$N := \left(\begin{matrix}
					\frac{a-1}{2} \\
					\frac{a+1}{2} \\
					\frac{a+1}{2} \\
					\frac{a-1}{2}
			\end{matrix}\right)$ and
			$P := \left(\begin{matrix}
					0 & 0 & 1 & 0 \\
					0 & 0 & 0 & 1 \\
					1 & 0 & 0 & 0 \\
					0 & 1 & 0 & 0
			\end{matrix}\right)$,
	and we observe that, for all $u \in \mathpzc{A}^*$ of even length,
	\begin{align}
		V\left(\mathcal{P}_{f,a}(u)\right) = M~V(u) + N, \label{matrixPa} \\
		V\left(\mathcal{P}_{f,b}(u)\right) = P~V\left(\mathcal{P}_{f,a}(u)\right). \label{matrixPb}
	\end{align}
	We also define $\ell_i' := \displaystyle\min_{u \in T_i} \lvert u \rvert_b$ and $\ell''_i := \displaystyle\min_{u \in T_i} \lvert u \rvert_{b,1}$ for all $i \geq 0$. Note that, since every word of $T$ has even length and each $T_i$ is mirror-invariant, we also have $\ell''_i = \displaystyle\min_{u \in T_i} \lvert u \rvert_{b,0}$. Now we show by induction on $i$ that, for all $i \geq 0$, $\ell_i = \left\lvert \mathcal{P}^i_{f,a}(\varepsilon) \right\rvert$, $\ell'_i = \left\lvert \mathcal{P}^i_{f,a}(\varepsilon) \right\rvert_b$ and $\ell''_i = \left\lvert \mathcal{P}^i_{f,a}(\varepsilon) \right\rvert_{b,1}$. For $i = 0$ this is trivial, now let $i \geq 0$ be such that the three equalities hold. Let $u \in T_{i+1}$, and let $c \in \mathpzc{A}$ and $v \in T_i$ be such that $u = \mathcal{P}_{f,c}(v)$. If $c = a$, by noticing that $\left\lvert v \right\rvert_{a,1} + \left\lvert v \right\rvert_{b,1} = \frac{\lvert v \rvert}{2}$, \Cref{matrixPa} yields
	$$\lvert u \rvert_{b,1} = \frac{a-1}{2}\frac{\lvert v \rvert}{2} + \frac{b-a}{2} \lvert v \rvert_{b,1} + \frac{a-1}{2} \geq \frac{a-1}{2}\frac{\ell_i}{2} + \frac{b-a}{2} \ell''_i + \frac{a-1}{2} = \left\lvert \mathcal{P}^{i+1}_{f,a}(\varepsilon) \right\rvert_{b,1},$$
	$$\lvert u \rvert_b = a\frac{\lvert v \rvert}{2} + (b-a) \lvert v \rvert_{b,1} + a \geq a\frac{\ell_i}{2} + (b-a) \ell''_i + a = \left\lvert \mathcal{P}^{i+1}_{f,a}(\varepsilon) \right\rvert_b,$$
	$$\lvert u \rvert = a \lvert v \rvert + (b-a) \lvert v \rvert_b + 2a \geq a \ell_i + (b-a) \ell'_i + 2a = \left\lvert \mathcal{P}^{i+1}_{f,a}(\varepsilon) \right\rvert.$$
	If $c = b$, then \Cref{matrixPb} yields
	$$\lvert u \rvert_{b,1} = \frac{a+1}{2}\frac{\lvert v \rvert}{2} + \frac{b-a}{2} \lvert v \rvert_{b,0} + \frac{a+1}{2} \geq \frac{a-1}{2}\frac{\ell_i}{2} + \frac{b-a}{2} \ell''_i + \frac{a-1}{2} = \left\lvert \mathcal{P}^{i+1}_{f,a}(\varepsilon) \right\rvert_{b,1},$$
	$$\lvert u \rvert_b = a\frac{\lvert v \rvert}{2} + (b-a) \lvert v \rvert_{b,0} + a \geq a \frac{\ell_i}{2} + (b-a) \ell''_i + a = \left\lvert \mathcal{P}^{i+1}_{f,a}(\varepsilon) \right\rvert_b,$$
	$$\lvert u \rvert = a\lvert v \rvert + (b-a) \lvert v \rvert_b + 2a \geq a \ell_i + (b-a) \ell'_i + 2a = \left\lvert \mathcal{P}^{i+1}_{f,a}(\varepsilon) \right\rvert.$$
	
	The last step is to deduce a lower bound of $\ell_i$. First, \Cref{matrixPa} yields
	$${V\left(\mathcal{P}^{i+1}_{f,a}(u)\right) = M~V\left(\mathcal{P}^i_{f,a}(u)\right) + N}$$ for all $i \geq 0$, and a quick induction provides $V\left(\mathcal{P}^i_{f,a}(u)\right) = \displaystyle\sum_{j=0}^{i-1} M^j N$ for all $i \geq 0$. For all $i \geq 1$, we deduce that
	\begin{equation}\label{n1}
		\ell_i = \left\lvert \mathcal{P}_{f,a}^i(\varepsilon) \right\rvert = \left\lVert V\left(\mathcal{P}^i_{f,a}(u)\right) \right\rVert_1 \geq \left\lVert M^{i-1}N \right\rVert_1
	\end{equation}
	
	If $a = 1$, we have $M = \prim$ where $R := \left(\begin{matrix}
					0 & 0 & \frac{b-1}{2} \\
					1 & 0 & \frac{b+1}{2} \\
					0 & 1 & 0
			\end{matrix}\right)$. Then, for all $i \geq 0$ there exists $S_i \in \mathbb{R}^{1 \times 3}$ such that $M^i = \primk$ and $M^i N = \left\lVert R^i \left(\begin{matrix} 0 \\ 1 \\ 1 \end{matrix}\right) \right\rVert_1$. One can check that $R^5$ has positive entries so that $R$ is primitive and Perron-Frobenius theorem states that every entry of $R^i$ grows like $\Theta(\lambda^i)$ where $\lambda \in \mathbb{R}_+$ is the spectral radius of $R$. Moreover, the characteristic polynomial of $R$ is $(X+1)\left(X^2-X-\frac{b-1}{2}\right)$ so we deduce that $\lambda = \frac{1+\sqrt{2b-1}}{2}$. With \Cref{n1}, this directly provides $C > 0$ and $D \geq 0$ such that $\ell_i \geq C\lambda^i - D - 1$ for all $i \geq 0$.
	
	If $a > 1$, one can check that $M^2$ has positive entries so that $M$ is primitive and Perron-Frobenius theorem states that every entry of $M^i$ grows like $\Theta(\lambda^i)$ where $\lambda \in \mathbb{R}_+$ is the spectral radius of $M$. Moreover, the characteristic polynomial of $M$ is $(X+1)\left(X^3 - \frac{a+b}{2} X^2 + \frac{(b-a)^2}{4}\right)$ so $\lambda$ is the dominant root of $X^3 - \frac{a+b}{2} X^2 + \frac{(b-a)^2}{4}$. With \Cref{n1}, this directly provides $C > 0$ and $D \geq 0$ such that $\ell_i \geq C\lambda^i - D - 1$ for all $i \geq 0$.
\end{proof}

Finally, combining \Cref{upperBound} and \Cref{oddli} yields \Cref{Th3}.

\begin{remark}
	Studying $\ell_i$ over even alphabets led us to prove the conjectured upper bound of $p_{\mathcal{C}_f^\infty}(n)$ thanks to \Cref{upperBound}, but this method appears to be unsufficient over odd alphabets. Let us explain.

	Firstly, over even alphabets $\ell_i$ grows like $\left(\frac{a+b}{2}\right)^i$, but over odd alphabets it grows only like $\lambda^i$ so \Cref{upperBound} cannot provide a better upper bound than $n^\zeta$.
	
	Secondly, over even alphabets we have $\ell_i = L_i$ so we immediatly get $L_i \leq \ell_{i+1}$ for all $i \geq 0$ as suggested by Weakley in \cite{Weakley}. However, this inequality fails over odd alphabets since $L_i$ has a greater growth rate than $\ell_i$: in the same way we proved that $\ell_i$ grows like $\lambda^i$, one can prove that $L_i$ grows like $r^i$ where $r$ is the spectral radius of $MPM$ where $M$ and $P$ are defined in the last proof. For example, over $\{1,3\}$ we have $L_i > \ell_{i+1}$ for all $i \geq 5$, starting with $L_5 = 86 > \ell_6 = 64$.
\end{remark}

\section*{Acknowledgements}

This work was supported by ANR-22-CE40-0011 project Inside Zero Entropy Systems.

\newpage

\bibliographystyle{plainurl}
\bibliography{refs}

\begin{thebibliography}{10}

\bibitem{BS}
M.~Baake and B.~Sing.
\newblock Kolakoski-(3,1) is a (deformed) model set.
\newblock {\em Canadian Mathematical Bulletin}, 47(2):168–190, 2004.
\newblock \href {https://doi.org/10.4153/CMB-2004-018-6}
  {\path{doi:10.4153/CMB-2004-018-6}}.

\bibitem{BJP}
S.~Brlek, D.~Jamet, and G.~Paquin.
\newblock Smooth words on 2-letter alphabets having same parity.
\newblock {\em Theoretical Computer Science}, 393(1):166--181, 2008.
\newblock \href {https://doi.org/10.1016/j.tcs.2007.11.019}
  {\path{doi:10.1016/j.tcs.2007.11.019}}.

\bibitem{BMP}
S.~Brlek, G.~Melançon, and G.~Paquin.
\newblock Properties of extremal infinite smooth words.
\newblock {\em Discrete Mathematics and Theoretical Computer Science}, 9, 11
  2007.
\newblock \href {https://doi.org/10.46298/dmtcs.412}
  {\path{doi:10.46298/dmtcs.412}}.

\bibitem{Carpi}
A.~Carpi.
\newblock On repeated factors in $\uppercase{C}^\infty$-words.
\newblock {\em Information Processing Letters}, 52(6):289--294, 1994.
\newblock \href {https://doi.org/10.1016/0020-0190(94)00162-6}
  {\path{doi:10.1016/0020-0190(94)00162-6}}.

\bibitem{Cassaigne}
J.~Cassaigne.
\newblock Complexité et facteurs spéciaux.
\newblock {\em Bulletin of the Belgian Mathematical Society - Simon Stevin}, 4,
  01 1997.
\newblock \href {https://doi.org/10.36045/bbms/1105730624}
  {\path{doi:10.36045/bbms/1105730624}}.

\bibitem{Dekking79}
F.~M. Dekking.
\newblock Regularity and irregularity of sequences generated by automata.
\newblock {\em Seminaire de Théorie des Nombres de Bordeaux}, 9:1--10, 1979.
\newblock URL: \url{http://eudml.org/doc/182065}.

\bibitem{Dekking81}
F.~M. Dekking.
\newblock On the structure of self-generating sequences.
\newblock {\em Séminaire de théorie des nombres de Bordeaux}, pages 1--6,
  1981.
\newblock URL: \url{http://www.jstor.org/stable/44166389}.

\bibitem{Dekking01}
F.~M. Dekking.
\newblock What is the long range order in the \uppercase{K}olakoski sequence ?
\newblock 08 2001.
\newblock \href {https://doi.org/10.1007/978-94-015-8784-6_5}
  {\path{doi:10.1007/978-94-015-8784-6_5}}.

\bibitem{Devyatov}
R.~Devyatov.
\newblock On factor complexity of morphic sequences.
\newblock {\em Moscow Mathematical Journal}, 18:211--303, 2018.
\newblock \href {https://doi.org/10.17323/1609-4514-2018-18-2-211-303}
  {\path{doi:10.17323/1609-4514-2018-18-2-211-303}}.

\bibitem{Huang}
Y.~B. Huang.
\newblock The complexity of smooth words on 2-letter alphabets.
\newblock {\em Theoretical Computer Science}, 412(45):6327--6339, 2011.
\newblock \href {https://doi.org/10.1016/j.tcs.2011.07.002}
  {\path{doi:10.1016/j.tcs.2011.07.002}}.

\bibitem{powerFree}
Y.~B. Huang.
\newblock The powers of smooth words over arbitrary 2-letter alphabets.
\newblock 2011.
\newblock \href {https://arxiv.org/abs/0904.0562} {\path{arXiv:0904.0562}}.

\bibitem{HW}
Y.~B. Huang and W.~D. Weakley.
\newblock A note on the complexity of $\uppercase{C}^\infty$-words.
\newblock {\em Theor. Comput. Sci.}, 411(40-42):3731--3735, 2010.
\newblock \href {https://doi.org/10.1016/J.TCS.2010.06.024}
  {\path{doi:10.1016/J.TCS.2010.06.024}}.

\bibitem{OEIS}
OEIS~Foundation Inc.
\newblock The on-line encyclopedia of integer sequences.
\newblock 2026.
\newblock URL: \url{https://oeis.org}.

\bibitem{JMPR}
D.~Jamet, I.~Marcovici, T.~de~la Rue, and L.~Poirier.
\newblock Frequency of patterns in smooth sequences over the alphabet
  $\{1,3\}$.
\newblock 2026.
\newblock \href {https://arxiv.org/abs/2604.11387} {\path{arXiv:2604.11387}}.

\bibitem{Keane}
M.~S. Keane.
\newblock Ergodic theory and subshifts of finite type.
\newblock pages 35--70, 1991.

\bibitem{Kolakoski}
W.~Kolakoski.
\newblock Self generating runs, problem 5304.
\newblock {\em The American Mathematical Monthly}, 72:674, 1965.
\newblock \href {https://doi.org/10.2307/2313883} {\path{doi:10.2307/2313883}}.

\bibitem{Ucoluk}
W.~Kolakoski and N.~Üçoluk.
\newblock Solution of advanced problem 5304.
\newblock {\em The American Mathematical Monthly}, 73:681--682, 1966.

\bibitem{Nilsson}
J.~Nilsson.
\newblock Letter frequencies in the \uppercase{K}olakoski sequence.
\newblock {\em Acta \uppercase{P}hysica \uppercase{P}olonica \uppercase{A}},
  126:549--552, 2014.
\newblock \href {https://doi.org/10.12693/APhysPolA.126.549}
  {\path{doi:10.12693/APhysPolA.126.549}}.

\bibitem{Oldenburger}
R.~Oldenburger.
\newblock Exponent trajectories in symbolic dynamics.
\newblock {\em Transactions of the American Mathematical Society},
  46(3):453--466, 1939.
\newblock \href {https://doi.org/10/2307/1989933} {\path{doi:10/2307/1989933}}.

\bibitem{Rao}
M.~Rao.
\newblock Trucs et bidules sur la séquence de \uppercase{K}olakoski.
\newblock 2012.
\newblock URL: \url{https://www.arthy.org/kola/kola.php}.

\bibitem{Sing02}
B.~Sing.
\newblock Spektrale eigenschaften der \uppercase{K}olakoski-sequenzen.
\newblock {\em Diploma thesis, Universität Tübingen}, 2002.

\bibitem{Sing10}
B.~Sing.
\newblock More {K}olakoski sequences.
\newblock {\em Integers}, 11B:Paper No. A14, 17, 2011.
\newblock \href {https://arxiv.org/abs/1009.4061} {\path{arXiv:1009.4061}}.

\bibitem{Weakley}
W.~D. Weakley.
\newblock On the number of $\uppercase{C}^\infty$-words of each length.
\newblock {\em J. Comb. Theory {A}}, 51(1):55--62, 1989.
\newblock \href {https://doi.org/10.1016/0097-3165(89)90076-9}
  {\path{doi:10.1016/0097-3165(89)90076-9}}.

\end{thebibliography}

\end{document}